\pdfoutput=1

\pdfoutput=1

\documentclass[11pt]{scrartcl}
\usepackage[a4paper, total={16cm, 24cm}]{geometry}
\usepackage{microtype} %
\usepackage{authblk}
\title{\vspace{-1cm} Approval-Based Multiwinner Voting with Candidate Qualities}
\date{\vspace{-1.5cm}}

\author[1]{Niclas Boehmer}
\author[1]{Chris Dong}
\author[1]{Luca Kreisel}
\author[2]{Markus Utke}
\affil[1]{Hasso Plattner Institute, University of Potsdam, Germany}
\affil[2]{Eindhoven University of Technology, Netherlands}
\affil[ ]{\texttt{niclas.boehmer@hpi.de,chrisshuyu.dong@hpi.de\\luca.kreisel@hpi.de,m.utke@tue.nl}\vspace{-0.83cm}}

\usepackage[table]{xcolor}
\usepackage{float}
\usepackage{graphicx}
\usepackage{pgfplots}
\pgfplotsset{compat=1.17}

\usepackage{amsmath, amssymb, amsthm, mathtools, nicefrac}

\usepackage[pagebackref]{hyperref}
\hypersetup{
	pdfencoding=auto,
	psdextra,
	colorlinks=true,
	citecolor=green!50!black,
	linkcolor=red!60!black,
}

\usepackage{thmtools}
\usepackage{thm-restate}
\newtheorem{theorem}{Theorem}[section]

\newtheorem{example}[theorem]{Example}
\newtheorem{observation}[theorem]{Observation}
\newtheorem{proposition}[theorem]{Proposition}
\newtheorem{lemma}[theorem]{Lemma}
\newtheorem{corollary}[theorem]{Corollary}
\theoremstyle{definition}
\newtheorem{definition}[theorem]{Definition}

\usepackage[nameinlink]{cleveref}
\crefname{observation}{Observation}{Observation}

\usepackage{natbib}
\usepackage[inline]{enumitem}
\usepackage{booktabs}
\usepackage{subcaption}
\usepackage{bigdelim}

\newcommand{\restatehere}[1]{%
	\marginline{\vspace{0.6cm}\footnotesize \hyperlink{original#1}{\hypertarget{restated#1}{[Main]}}}%
	\csname #1\endcsname*%
}

\usepackage{nicefrac}
\usepackage{xspace}

\usepackage{etoolbox}
\usepackage{comment}

\usepackage{tikz}
\usetikzlibrary{positioning,arrows.meta,calc,decorations.markings}

\definecolor{partyone}{RGB}{239,162,154}
\definecolor{partytwo}{RGB}{159,197,237}
\definecolor{partythree}{RGB}{169,215,173}

\usepackage{booktabs}
\usepackage{tabularx}
\usepackage{multirow}
\usepackage{wrapfig}
\makeatletter
\def\wrapfill{%
  \par
  \ifx\parshape\WF@fudgeparshape
    \nobreak
    \ifnum\c@WF@wrappedlines>\@ne
      \advance\c@WF@wrappedlines\m@ne
      \vskip\c@WF@wrappedlines\baselineskip
      \global\c@WF@wrappedlines\z@
    \fi
    \allowbreak
    \WF@finale
  \fi
}
\makeatother

\newcommand{\thresholdPJR}{\text{threshold-PJR}}
\newcommand{\JR}{\text{JR}}
\newcommand{\thresholdJR}{\text{threshold-JR}}
\newcommand{\valueJR}{\text{value-JR}}
\newcommand{\appcut}[1]{\ensuremath{C_{\cap #1}}}
\newcommand{\appjoin}[1]{\ensuremath{C_{\cup #1}}}

\DeclareMathOperator*{\argmax}{arg\,max}

\DeclareRobustCommand{\abbrevcrefs}{%
    \crefname{figure}{Fig.}{Figs.}%
    \crefname{equation}{Eq.}{Eqs.}%
    \crefname{corollary}{Cor.}{Cors.}%
    \crefname{proposition}{Prop.}{Props.}%
    \crefname{theorem}{Thm.}{Thms.}%
    \crefname{lemma}{Lem.}{Lems.}%
    \crefname{table}{Tab.}{Tabs.}%
    \crefname{section}{Sec.}{Secs.}%
    \crefname{observation}{Obs.}{Obs.}%
}
\DeclareRobustCommand{\cshref}[1]{{\abbrevcrefs\cref{#1}}}

\newenvironment{sideexample}[1]
  {%
    \def\sideexampletable{#1}%
    \refstepcounter{example}%
    \par\addvspace{\topsep}%
    \noindent
    \begin{minipage}[t]{0.68\linewidth}%
      \vspace{0pt}%
      \textbf{Example~\theexample.}\enspace
      \itshape
  }
  {%
    \end{minipage}\hfill%
    \begin{minipage}[t]{0.28\linewidth}%
      \vspace{0pt}%
      \raggedleft
      \normalfont
      \sideexampletable
    \end{minipage}%
    \par\addvspace{\topsep}%
  }

\NewDocumentEnvironment{profileproof}
  {O{6} O{0.63\linewidth} m}
  {%
    \par\addvspace{\topsep}%
    \setlength{\columnsep}{1em}%
    \begin{wraptable}[#1]{r}{#2}
      \vspace{-\intextsep}
      \centering
      \normalfont
      #3
      \vspace{-.4\intextsep}
    \end{wraptable}%
    \noindent
    \normalfont
    \textit{\proofname.}\enspace
    \ignorespaces
  }
  {%
    \unskip\nobreak\hfill\qedsymbol
    \wrapfill
    \addvspace{\topsep}%
  }

\begin{document}

\maketitle

\bigskip
{\footnotesize\tableofcontents}
\begin{abstract}
	\begin{center}
		\textbf{\textsf{Abstract}} \smallskip
	\end{center}
	We initiate the study of a new model of approval-based multiwinner voting in which each candidate carries an exogenous quality score, capturing, for instance, the reliability of the candidate or their relevance to the context of the selection. Quality scores break with the standard assumption of approval-based multiwinner voting that candidates are fully defined by the set of their supporters. We rethink what proportional representation means in the presence of quality scores. For this, we introduce a threshold-based and a value-based family of axioms, analyze their relationships, satisfiability, and computational complexity, and present rules that achieve the strongest jointly satisfiable combinations of our proportionality axioms. We then analyze the compatibility of proportionality with the natural goal of maximizing the summed quality of the selected candidates. While imposing standard proportionality notions can lead to an almost complete loss of quality, we show that under a new class of \emph{reciprocal} axioms, which scale a group's entitlement by the quality of its commonly approved candidate(s), proportional committees retaining a summed quality of $\frac{3}{4}$ of the optimum always exist and can be computed by our voting rules at no additional computational cost.
\end{abstract}

\section{Introduction}
\label{sec:introduction}

Many collective decisions involve selecting a fixed-size subset of a given set of candidates: a city funds a slate of proposed projects, an HR department puts together a shortlist for a job, a funding agency composes an expert review panel, a search engine returns a set of documents for a query, or an electorate elects a parliament. Approval-based multiwinner voting provides a natural framework for such decisions: given voters' approval preferences over a set of candidates and a target size $k$, it asks which set of $k$ candidates, called a committee, should be selected. A central normative desideratum in the literature on approval-based multiwinner voting is \emph{proportionality}: a $\theta$-fraction of the voters should control roughly a $\theta$-fraction of the committee. Extensive work has formalized this intuition through a hierarchy of axioms \citep{azizJustifiedRepresentationApprovalbased2017,sanchez2017proportional,BrPe23a} and developed rules satisfying them \citep{DBLP:series/sbis/LacknerS23}.

In line with the principle of neutrality, the standard multiwinner voting model treats candidates as interchangeable: a candidate is fully described by the set of voters who approve them, and two candidates approved by the same voters are indistinguishable. The starting point of this paper is the observation that in many subset-selection problems, candidates do have an identity beyond their supporter sets that is relevant to the selection. We focus on the case where ``candidate identity'' takes the form of a score, which we call the candidate's \emph{quality} score (or quality for short). Quality scores are exogenously given, typically produced by a central assessment, and while they may well be correlated with candidates' popularity among voters, they may equally be unknown to the voters altogether. Quality scores for candidates arise in conceptually distinct ways across applications of multiwinner voting:

\begin{description}

\item[Relevance.]
Quality scores may measure how well a candidate fits the task or context for which the subset is selected. In shortlisting for a position \citep{DBLP:journals/corr/abs-2601-21277}, candidates fit the advertisement to differing degrees, information that voters may or may not take into account when expressing their preferences. In information retrieval, the task is to select a subset of documents in response to a user query \citep{DBLP:conf/sigir/DangC12,DBLP:conf/sigmod/BeharC22}: the documents act as candidates and naturally carry their relevance to the query as a score, while the voters might capture different perspectives or information needs that the selection should represent.

\item[Reliability.]
Quality may capture how dependable a candidate will be once selected. In participatory budgeting \citep{peters2021proportional,brill2023proportionality}, proposed projects routinely undergo a technical review of their implementation readiness, cost realism, and legal feasibility, which can be summarized in a reliability score and interpreted as the probability of successful implementation after selection. Reliability also matters, for instance, when composing a review board: candidates differ in their capacity and in their record of completing the work assigned to them.

\item[Urgency.]

Quality scores may express how time-sensitive it is that a candidate is selected. For instance, in participatory budgeting or collective agenda setting, some projects or items may be more pressing because a deadline approaches (potentially rendering a project infeasible in the future) or the cost of inaction becomes prohibitive. In these settings, voters might be interested in getting the most urgent projects they like selected to ensure they are implemented before a deadline passes. %

\item[Merit.]
Finally, quality scores may capture a candidate's general excellence,
independent of the task at hand or the group they would represent. This connects to the idea of candidate \emph{valence} in political elections
\citep{Stokes_1963,groseclose2001model}, which captures a candidate's generally valued characteristics, such as competence,
experience, and integrity, that are appealing to voters independent of their political positions. Another example is the selection of grant
proposals: proposals are reviewed with scores assessing their scientific excellence.
\end{description}

Orthogonal to the standard goal of proportional representation,
candidate quality also induces another natural decision goal: maximize the summed quality of the selected candidates. In this paper, we rethink proportionality in the presence of quality scores and analyze its compatibility with quality maximization.

\subsection{Our Contributions}
We initiate the study of approval-based multiwinner voting with quality
scores, guided by two questions: how should candidate quality affect the representation
that groups of voters deserve, and how can proportional representation be reconciled with maximizing the summed quality of the selected candidates?

In \Cref{sec:two-approaches}, we address the first question. To see why simply ignoring the scores and applying standard proportionality notions falls short, note that this would allow a group of voters entitled to representation to be served only by low-quality candidates it approves, even when it agrees on much stronger ones. This is clearly undesirable, for instance, when quality captures relevance in shortlisting or the success probability of projects in participatory budgeting.
To address this gap, we introduce two families of
quality-aware proportionality axioms, mirroring the standard hierarchy of
justified representation (JR), proportional justified representation (PJR),
and extended justified representation (EJR). \emph{Value-based} axioms demand
that a cohesive group receives a sufficient \emph{sum} of
quality from approved selected candidates.\footnote{This is natural when quality accumulates. For instance, the sum of
reliability scores can be interpreted as the expected number of successfully implemented projects
or completed reviews, and the sum of merit scores can measure the total
scientific contribution of funded proposals.} \emph{Threshold-based}
axioms instead demand that groups are represented by enough candidates of
\emph{sufficient} quality, at every quality level at which the group is
cohesive: if a sufficiently large group commonly approves high-quality
candidates, it must also be represented by high-quality candidates. Value-based axioms would, in contrast, also allow it to be represented by a larger number of medium-quality candidates.\footnote{Such a requirement is, for instance, natural when weak candidates could be discounted downstream:
in information retrieval, post-processing routinely discards documents of low
relevance, and in shortlisting, strong candidates may fill the position before
weaker ones are even considered.} We show that our notions of value- and threshold-based axioms are canonical: combining
natural quality-based representation requirements on party-list profiles with the
well-behavedness properties of \citet{DoPe26a} characterizes exactly (refinements
of) our axioms, thereby also pinpointing the normative distinction between them. We further
determine, for each axiom, whether it is always satisfiable and how hard it is
to verify.

In \Cref{sec:landscape}, we present a complete picture of the relationships
between our axioms and the standard hierarchy. We identify two appealing always satisfiable combinations: threshold-, value-, and standard PJR can be satisfied simultaneously, and value-EJR can be satisfied
together with standard PJR. For the former, we present a quality-adapted
variant of an expanding approvals rule that runs in polynomial time. For the latter, we present the value-based greedy cohesive
rule value-GCR that runs in exponential time. We complement value-GCR with a matching hardness result: computing a
value-EJR committee is NP-hard, already when candidates take only two distinct
quality values.
This also answers open questions of \citet{PPS23equalshares,baharav2026complexityjustifiedrepresentationadditive,DBLP:journals/corr/abs-2303-00621} on the computational complexity of computing EJR committees for additive utilities (see \Cref{sec:designing_rules} for a more detailed discussion).

In \Cref{sec:reciprocal}, we turn towards the second question of reconciling  roportionality with the objective of maximizing summed quality. We first observe that quality scores suggest a natural weakening of voter entitlements: a group approving only low-quality candidates may be granted fewer seats than an equally large group approving high-quality ones, as each seat spent on the former diminishes the committee's quality.
This motivates \emph{reciprocal} variants of our axioms, in which a group's entitlement scales with the quality of its commonly approved candidates. Reciprocity gives rise to a compromise between proportional representation and quality maximization: while imposing even the weakest non-reciprocal proportionality notions can cost a factor of $k$ in summed quality, the reciprocal notions can be satisfied by committees that
retain a $\nicefrac{3}{4}$-fraction of the maximum summed quality, and our reciprocal rules compute such committees at no additional computational cost.

\subsection{Related Work}
While most works on approval-based multiwinner voting assume that a candidate is fully characterized by their supporters, some equip candidates with additional attributes to restrict which committees can be selected \citep{DBLP:conf/sigecom/MasarikP024,DBLP:journals/corr/abs-2602-08504}. Most prominently, in participatory budgeting, each candidate carries a cost, and the total cost of the selection may not exceed a given budget \citep{DBLP:journals/corr/abs-2303-00621}. Thereby, costs from participatory budgeting and qualities from our work play different roles: costs constrain what may be selected, whereas any fixed-size candidate set is admissible in our setting.
Another popular constraint type are diversity constraints, where candidates carry attributes, and the committee must contain a prescribed number of candidates per attribute \citep{DBLP:conf/aaai/BredereckFILS18,DBLP:conf/ijcai/CelisHV18}.

Closer in spirit, outside of computational social choice, spatial models of single-winner elections have explored the role of candidate valence \citep{ansolabehere2000valence,groseclose2001model,aragones2002mixed}: there, a voter's utility for a candidate combines their ideological proximity with a candidate's valence score shared among all voters. In contrast, we study \emph{multiwinner} voting and use a different voter utility model.

Formally, our model is related to the special case of multiwinner voting with additive
utilities, in which every voter derives from each candidate either a candidate-specific utility or no utility.
In participatory budgeting, \citet{KRE25a} have termed this utility domain \emph{uniform utilities}  and analyzed how its structure can be exploited for faster computation of the Equal Shares voting rule. The uniform utilities model of  \citet{KRE25a} generalizes the widely used
\emph{cost utilities} in participatory budgeting, where one assumes that a voter's utility for an approved project is
its cost \citep{brill2023proportionality}.
However, previous work has not studied
proportionality in this domain in its own right when the score
is independent of the cost.

In terms of axioms, our value-based axioms align with proportionality notions developed for additive utilities
\citep{peters2021proportional,brill2023proportionality,DBLP:conf/ijcai/LosCG22,DBLP:conf/sigecom/MasarikP024,DBLP:journals/corr/abs-2602-08504} in the uniform utilities setting. %
Our threshold-based axioms are related to
proportionality axioms for multiwinner voting with ordinal preferences, which rely on measuring cohesiveness and satisfaction relative to a specific rank cutoff
\citep{dummett1984voting,DBLP:journals/scw/AzizL20,BrPe23a}, and to axioms
from proportional clustering,
where representation is demanded within distance cutoffs
\citep{DBLP:conf/nips/Kellerhals024,DBLP:conf/wine/AzizLCV24}. Unlike
these works, our cutoffs in threshold-based axioms are global quality values: a candidate lies above the cutoff for all or none of their supporters, and, relatedly, a fixed quality cutoff may correspond to different ranks in different voters' preferences.
As value- and threshold-based axioms have in the past been considered in different contexts, their relationship has, to the best of our knowledge, not been studied before. In our model, both families are defined over the same global candidate quality scores, allowing us to study their relationship and their compatibility with quality maximization.

Our analysis of the tension between proportionality and quality maximization
is related to a line of work on proportionality and utilitarian welfare (i.e., the total number of approvals that the selected candidates receive) in approval-based multiwinner voting.
Previous work has established welfare guarantees of proportional voting rules
\citep{DBLP:journals/ai/LacknerS20,DBLP:conf/sigecom/Skowron21,DBLP:conf/aaai/Brill024},
designed rules trading off proportionality and welfare
\citep{DBLP:conf/aaai/BaychkovBP26,DBLP:journals/corr/abs-2606-23320}, and
quantified the welfare loss of imposing proportionality
\citep{10.1145/3676953}. We differ in that our objective is the summed
quality of the committee, which is independent of the approval profile, and in
that we additionally design quality-aware relaxations of the proportionality
axioms.

\section{Model}
\label{sec:model}
For $x\in \mathbb N$, we write $[x] \coloneqq \{1,\dots, x\}$.\ Let $V$ be a set of $n$ \emph{voters} and $C$ a set of $m$ \emph{candidates}. Each voter $i\in V$ is equipped with an \emph{approval set} $A_i \subseteq C$ containing all candidates they approve. An \emph{approval profile} $A\coloneqq (A_i)_{i\in V}$ contains each voter's approval set.  %
For $c\in C$, we write $V[c]\coloneqq \{i \in V \mid c \in A_i\}$ to denote the \emph{supporters} of candidate $c$.
A profile is a \emph{party-list profile} if the set of voters can be partitioned into $V_1,\dots, V_r$, such that all voters within one voter set $V_j$ have the same, possibly empty, approval set $C_j$, and the sets $C_j$ and $ C_{j'}$ have empty intersection whenever $j\neq j'$.
Given a target committee size $k \in [m]$, we refer to sets of candidates of this size as \emph{committees}. In approval-based multiwinner voting, given a profile and target size $(A,k)$, the goal is to select a committee $W \subseteq C$ representing the voters.

\paragraph{Approval-Based Multiwinner Voting  with Candidate Qualities.}
We extend the standard  approval-based multiwinner voting  setting by equipping each candidate $c \in C$ with a \emph{quality score} $h_c \in (0,1]$. We call $\mathcal I = (A,k,h)$ an \emph{instance} of the quality-aware approval-based multiwinner voting  problem. An example instance is depicted in \Cref{tab:running_example}.
For a set of candidates $X\subseteq C$ and $\tau\in [0,1]$, %
we define $X^{\ge \tau}\coloneqq \{c \in X \mid h_c \geq \tau\}$ to denote the subset of candidates with quality at least $\tau$. Similarly, we denote by $X^{>\tau}\coloneqq \{c \in X \mid h_c > \tau\}$ and $X^{<\tau}\coloneqq \{c \in X \mid h_c < \tau\}$ the subsets of candidates with quality strictly greater than or strictly smaller than $\tau$, respectively. Given any subset $X \subseteq C$, the total quality of $X$ is $h(X) \coloneqq \sum_{c \in X} h_c$. We further denote the maximum quality achievable with at most $\ell\le m$ candidates from $X$ as $h(X\mid\ell) \coloneqq \max_{\substack{T\subseteq X\\ |T|\le \ell}} h(T)$.

\paragraph{Proportionality Notions.}
A \emph{proportionality notion} $\mathcal X$ maps each instance $\mathcal I$ to a (possibly empty) set of committees $\mathcal X(\mathcal I)$ fulfilling the notion. For two proportionality notions $\mathcal X, \mathcal Y$ we write $\mathcal Y\subseteq \mathcal X$ if $\mathcal Y(\mathcal I) \subseteq \mathcal X(\mathcal I)$ for all $\mathcal I$.
In formulating proportionality requirements, we follow a common approach from classical (approval-based) multiwinner voting: we first determine when a group of voters is entitled to representation, and then specify what exactly such a group is
entitled to.
Addressing the first point, a standard idea is that a group of voters is entitled to representation if it is sufficiently large and has sufficiently cohesive approval preferences. For a group of voters $S\subseteq V$, we write $\appcut{S} \coloneqq \bigcap_{i\in S} A_i$ for their shared approval set and $\appjoin{S} \coloneqq \bigcup_{i\in S}A_i$ for their joint approval set.
For $\ell \in [k]$, a group of voters $S \subseteq V$ is called \emph{$\ell$-large} if $\lvert S\rvert  \geq \ell \cdot \frac{n}{k}$, and \emph{$\ell$-cohesive} if it is $\ell$-large and $|\appcut{S}| \geq \ell$.
In the standard setting, the axioms justified representation (JR), proportional justified
representation (PJR), and extended justified representation (EJR) instantiate
this approach as follows: for every $\ell \in [k]$ and every $\ell$-cohesive
group $S \subseteq V$, JR requires $\appjoin{S} \cap W \neq \emptyset$, PJR
requires $\lvert \appjoin{S} \cap W \rvert \ge \ell$, and EJR requires
$\max_{i\in S} \lvert A_i\cap W\rvert \geq \ell$
\citep{azizJustifiedRepresentationApprovalbased2017,sanchez2017proportional}. %

\section{Defining Quality-Aware Proportionality}
\label{sec:two-approaches}
In this section, we introduce our proportionality notions that incorporate the quality scores of the candidates. We follow the general approach and ``template'' of the JR, PJR, and EJR hierarchy defined in \Cref{sec:model}.
While these standard notions can be applied directly in our setting, they fall short:
they guarantee groups of voters \emph{some number} of approved candidates in
the committee, but do not impose any requirements on the \emph{quality} of
these candidates. We therefore propose adaptations of JR
(\Cref{ax:jr}) as well as PJR and EJR (\Cref{ax:pjr_ejr}), which strengthen
the representation entitlements of groups based on candidate qualities. \Cref{tab:proportionality-notions}
provides a summary of all notions. In \Cref{sec:verify}, we discuss which of our axioms are satisfiable and the computational complexity of deciding whether a given committee fulfills the axiom.

\subsection{Adapting JR}\label{ax:jr}
\begin{table}
    \caption{Party-list profile with $n = 9$ voters. Voters are rows and candidates are columns, with \checkmark~indicating an approval. Each party consists of three voters approving four candidates. }
    \label{tab:intro-party-list-blocks}
    \centering
    \begin{tabular}{c @{\hspace{-10pt}} c c c c c c c c c c c c c}
        \toprule
        & & $a_1$ & $a_2$ & $a_3$ & $a_4$ & $b_1$ & $b_2$ & $b_3$ & $b_4$ & $c_1$ & $c_2$ & $c_3$ & $c_4$\\
        & quality & $1$ & $1$ & $1$ & $0.1$ & $\frac{1}{3}$ & $\frac{1}{3}$ & $\frac{1}{3}$ & $\frac{1}{3}$ & $1$ & $\frac{1}{2}$ & $\frac{1}{2}$ & $\frac{1}{2}$\\
        \midrule
        \ldelim\{{3}{*}[$V_1$] & $v_1$ & \checkmark & \checkmark & \checkmark & \checkmark & & & & & & & &\\
        & $v_2$ & \checkmark & \checkmark & \checkmark & \checkmark & & & & & & & &\\
        & $v_3$ & \checkmark & \checkmark & \checkmark & \checkmark & & & & & & & &\\
        \ldelim\{{3}{*}[$V_2$] & $v_4$ & & & & & \checkmark & \checkmark & \checkmark & \checkmark & & & &\\
        & $v_5$ & & & & & \checkmark & \checkmark & \checkmark & \checkmark & & & &\\
        & $v_6$ & & & & & \checkmark & \checkmark & \checkmark & \checkmark & & & &\\
        \ldelim\{{3}{*}[$V_3$] & $v_7$ & & & & & & & & & \checkmark & \checkmark & \checkmark & \checkmark\\
        & $v_8$ & & & & & & & & & \checkmark & \checkmark & \checkmark & \checkmark\\
        & $v_9$ & & & & & & & & & \checkmark & \checkmark & \checkmark & \checkmark\\
        \bottomrule
    \end{tabular}
\end{table}

We begin by adapting JR and discussing the representation a $1$-cohesive group of voters deserves. To build intuition and a normative basis for the representation entitlements in our quality-aware proportionality notions, we first consider the restricted class of party-list profiles (cf.\ \Cref{sec:model}). An example of a party-list profile is given in \Cref{tab:intro-party-list-blocks}, which we use as a running example.
Here, for $k=3$,
the ideal of proportional representation intuitively demands that each party, making up $\nicefrac{1}{3}$ of the voters, controls how to fill one of the three seats in the committee (as there are no fractional seats, the entitlement stays the same for $k=4$ and $k=5$). Standard JR implements this by demanding that for each of the three parties, one of the candidates is part of the committee; however, it only requires that each party is represented by some candidate, possibly only by its lowest-quality candidate, e.g.,  $a_4$ for $V_1$. This is unsatisfactory, as the parties also approve candidates of significantly higher quality, e.g., $a_1$ for $V_1$.
There are two natural ways to strengthen JR by incorporating candidate qualities.

First, we might assume that voters only care about the summed quality of the
selected candidates they approve. Then, for instance in our example, as party $V_3$ commonly supports the
quality-$1$ candidate $c_1$, $V_3$ is entitled to
representation with a total quality of $1$ for $k\geq 3$, i.e., candidates from $C_3$ of summed quality of at least $1$ need to be selected. More generally, each party deserving a seat is entitled to a total representation matching the quality of its commonly approved
candidate of highest quality. We call this the \emph{total quality} approach.
In the profile from \Cref{tab:intro-party-list-blocks} for $k=4$, the entitlement of the parties could, for instance, be met by selecting $\{a_1,b_1,c_2,c_3\}$.

Given its focus on total quality, this approach permits a party to be represented by multiple medium-quality candidates, even when it commonly approves candidates of high quality. In some settings, this
can be undesirable for the party because weak candidates may be discarded downstream: for instance, in the
information retrieval setting discussed in the introduction,
post-processing of the selected documents often drops those of insufficient
relevance. This motivates us to strengthen the total quality approach and require that party seats are filled
\emph{best-in-slot}, i.e., the seat a party deserves should be filled with its highest-quality
candidate.
In our running example, party $V_3$ would then need to be represented by its
highest-quality candidate $c_1$.

To formulate our quality-aware JR notions, we generalize the two
principles established on party-list profiles to general profiles, following again a standard approach from classical multiwinner voting.
When voter
preferences do not fall into the party list structure, we can emulate parties as follows: for
each group of voters $S\subseteq V$, we treat the voters as a party and the commonly approved
candidates $\appcut{S}$ as the party's candidates.
However, straightforward implementations of the party-list ideas, e.g., requiring each $1$-cohesive voter group $S$ to be represented via a candidate from its emulated party $\appcut{S}$ turn out to be too restrictive (cf.\ strong JR \citep{azizJustifiedRepresentationApprovalbased2017}).\footnote{Consider for example a profile with only quality-$1$ candidates and three voters with ballots $\{c_1,c_2\}$, $\{c_2,c_3\}$, and
$\{c_1,c_3\}$ for $k=2$. Each pair of voters forms a $1$-cohesive group, and
the three pairs emulate the pairwise distinct parties $\{c_2\}$, $\{c_3\}$,
and $\{c_1\}$, respectively. Serving each pair from within its emulated party
thus requires all three candidates, but $k=2$.}
Both our notions therefore allow entitlements to be met through any selected candidates that some voter in the group approves.

Generalizing the total quality approach, we thus require that a $1$-cohesive group should receive a summed quality of at least the quality of the best candidate in its emulated party $\appcut{S}$ through the union of all candidates they approve $\appjoin{S}$. We call the resulting notion \emph{value-JR}.
\begin{definition}
    A committee $W$ satisfies \emph{value-JR}\footnote{Alternatively, we could require in value-JR that the summed quality entitlement is met for a single voter, i.e., $h(A_i \cap W) \ge h(\appcut{S}\mid 1)$ for some $i\in S$. However, this variant would not be implied by value-PJR (cf. \Cref{tab:proportionality-notions} for a formal definition) and would thus break the JR to EJR hierarchy.} on instance $\mathcal I$, if
    for each $1$-cohesive $S\subseteq V$, we have that
    $h(\appjoin{S} \cap W) \ge h(\appcut{S}\mid 1)$.
\end{definition}

To adapt the best-in-slot idea, we require for each $1$-cohesive group $S\subseteq V$ that the best selected candidate approved by \emph{some} voter in $S$ has at least the quality of the best candidate in $\appcut{S}$. As it will be easier to generalize to the PJR- and EJR-level, we phrase this requirement
equivalently via quality cutoffs: for every cutoff $\tau$ such
that the emulated party $\appcut{S}$ contains a candidate of quality at least $\tau$, some voter in the group must approve a selected  candidate of quality at least $\tau$.
We call the resulting notion \emph{threshold-JR}.
\begin{definition}
    A committee $W$ satisfies \emph{threshold-JR} on instance $\mathcal I$,
    if for each $\tau\in [0,1]$ and each $1$-large $S\subseteq V$ with
    $\appcut{S}^{\ge \tau} \neq \emptyset$, we have that
    $\lvert \appjoin{S} \cap W^{\ge \tau} \rvert \geq 1$.
\end{definition}

\paragraph{Axiomatics.}
In \Cref{app:axiomatic_proportionality}, we analyze the two quality-aware JR notions through an axiomatic lens. We first adapt an axiom of \citet{delemazure2023strategyproofness} to formally capture the two normative intuitions discussed above on party-list profiles as the \emph{weak total representation} and \emph{weak best-in-slot representation} axioms. We then show that value- and threshold-JR are in some mathematically exact sense the canonical extensions of these two axioms to general profiles when considering $1$-cohesive groups.
That is, we prove that every proportionality notion satisfying one of the two party-list axioms together with some natural well-behavedness axioms of \citet{DoPe26a} is a refinement of value- or threshold-JR, respectively.

\subsection{Adapting PJR and EJR}\label{ax:pjr_ejr}

\begin{table*}[t]
    \centering
    \small
    \renewcommand{\arraystretch}{1.25}
    \setlength{\tabcolsep}{4pt}

    \caption{Entitlement and representation requirements of the standard, value, and threshold proportionality notions. All axioms except threshold-EJR are always satisfiable.}
    \label{tab:proportionality-notions}
    \begin{tabularx}{\textwidth}{
    @{}
    l
    >{\raggedright\arraybackslash}X
    >{\raggedright\arraybackslash}X
    >{\centering\arraybackslash}p{1.6cm}
    >{\centering\arraybackslash}p{2.4cm}
    @{}
    }
        \toprule
        Axiom
          & Entitlement condition\newline
            \emph{For every \(\ell\in[k]\) and every \(\ell\)-large
            \(S\subseteq V\), \(\ldots\)}
          & Required representation
          & Verifiability
          & Computability
          \\
        \midrule

        JR
          & \multirow{3}{=}{%
              \[
                \text{if } |\appcut{S}|\ge\ell
              \]}
          & \(|\appjoin{S} \cap W|\ge 1\).
          & P
          & P
          \\

        PJR
          &
          & \(|\appjoin{S} \cap W|\ge\ell\).
          & coNP-h.
          & P
          \\

        EJR
          &
          & \(\displaystyle
              \max_{i\in S}|A_i\cap W|\ge\ell\).
          & coNP-h.
          & P
          \\

        \midrule

        value-JR
          & \multirow{3}{=}{%
              \[
                \text{if } |\appcut{S}|\ge\ell
              \]}
          & \(\displaystyle
              h\bigl(\appjoin{S} \cap W\bigr)\ge h(\appcut{S}\mid 1)\).
          & coNP-h.
          & P
          \\

        value-PJR
          &
          & \(\displaystyle
              h\bigl(\appjoin{S} \cap W\bigr)\ge h(\appcut{S}\mid\ell)\).
          & coNP-h.
          & P
          \\

        value-EJR
          &
          & \(\displaystyle
              \max_{i\in S}h(A_i\cap W)
              \ge h(\appcut{S}\mid\ell)\).
          & coNP-h.
          & NP-h.
          \\

        \midrule

        threshold-JR
          & \multirow[c]{3}{=}{%
              \makebox[\linewidth][c]{%
                \shortstack[c]{%
                  \emph{For every \(\tau\in[0,1]\)}:\\[0.5ex]
                  \(\text{if } |\appcut{S}^{\ge\tau}|\ge\ell\)
                }}}
          & \(\displaystyle
              \left|\appjoin{S} \cap W^{\ge\tau}\right|\ge 1\).
          & P
          & P
          \\

        threshold-PJR
          &
          & \(\displaystyle
              \left|\appjoin{S} \cap W^{\ge\tau}\right|\ge\ell\).
          & coNP-h.
          & P
          \\

        threshold-EJR
          &
          & \(\displaystyle
              \max_{i\in S}|A_i\cap W^{\ge\tau}|
              \ge\ell\).
          & coNP-h.
          & NP-h.
          \\

        \bottomrule
  \end{tabularx}

\end{table*}

We proceed by extending these ideas to the more restrictive notions of PJR and EJR. Imposing standard PJR and EJR again runs into the problem that candidates a group is entitled to may be chosen sub-optimally in terms of quality. We discuss extensions of EJR here and refer to \Cref{tab:proportionality-notions} for definitions of the PJR variants, where we as usual require that entitlements are fulfilled by the union of candidates approved by a group of voters instead of a specific voter from the group as in EJR. Extending the total quality approach, value-EJR requires that every $\ell$-cohesive group of voters receives a summed quality of at least what it can achieve with $\ell$ of its commonly approved candidates. Formally, at the EJR-level, this yields the following notion.
\begin{definition}
    A committee $W$ satisfies \emph{value-EJR} on instance $\mathcal I$, if
    for each $\ell \in [k]$ and each $\ell$-cohesive group $S\subseteq V$, we have
    $\max_{i\in S} h(A_i \cap W) \ge h(\appcut{S}\mid \ell)$.
\end{definition}

This definition of value-EJR (and similarly value-PJR) coincides with the definition of EJR (and PJR) in participatory budgeting with additive utilities (see \citet{peters2021proportional} and \citet{brill2023proportionality}) for the special case of unit-costs and uniform utilities (as defined in \citet{KRE25a}).

To give some intuition for this notion, we consider the example instance from \Cref{tab:running_example} and identify some demands. The voter group $S = \{v_1, v_2, v_3, v_4\}$ is $2$-large and commonly approves $\appcut{S} = \{c_1, c_2\}$. Rather than requiring a certain number of approved candidates in the committee, value-EJR requires some voter in $S$ to approve candidates of total quality at least $h(\appcut{S} \mid 2) = h_{c_1} + h_{c_2} = \frac{5}{3}$ in the committee. Moreover, every size-two subgroup of $S$ is $1$-large and commonly approves the quality-$1$ candidate $c_1$, hence value-EJR requires one of the two voters from each of these size-two subgroups to approve candidates with total quality of $1$ in the committee. All value-EJR demands in this instance are, for example, satisfied by the committee $W = \{c_2, c_4, c_5\}$. In particular, voters $v_1$ and $v_2$ satisfy the $\frac{5}{3}$-demand, while voter $v_4$ approves candidates in the committee with total quality $\frac{4}{3}$. Thus, the group $\{v_3, v_4\}$ meets its value-EJR demand through $v_4$, even though neither voter approves a selected quality-$1$ candidate. This contrasts with threshold-EJR, which we introduce next.

For threshold-EJR, we require for each group of voters which is $\ell$-cohesive ``above some quality cutoff'', i.e., the group is $\ell$-large and commonly approves $\ell$ candidates above the cutoff, that some voter in the group is represented by $\ell$ candidates above this cutoff. This yields the following notion:
\begin{definition}
    A committee $W$ satisfies \emph{threshold-EJR} on instance $\mathcal I$,
    if for each $\ell \in [k]$, each $\tau\in [0,1]$, and each $\ell$-large
    $S\subseteq V$ with $\lvert \bigcap_{i\in S} A_{i}^{\ge \tau}\rvert \ge \ell$,
    we have $\max_{i\in S}\lvert A_i \cap W^{\ge \tau} \rvert \geq \ell$.
\end{definition}

We again consider the instance from \Cref{tab:running_example} and identify some demands arising at different quality thresholds. Since all candidates have quality at least $\frac{2}{3}$, threshold-EJR at threshold $\tau = \frac{2}{3}$ coincides with standard EJR, implying, e.g., that some voter from the $2$-cohesive group $S = \{v_1, v_2, v_3, v_4\}$ must approve at least two candidates in the committee. At threshold $\tau = 1$, we restrict attention to candidates of quality $1$. Every size-two subgroup of $S$ is $1$-large and commonly approves $c_1$, and hence must contain a voter who approves a selected quality-$1$ candidate. The requirements for the different thresholds have to hold simultaneously. For example, the committee $\{c_2, c_4, c_5\}$ satisfies all threshold-EJR requirements at $\tau = \frac{2}{3}$, and hence EJR, but violates threshold-EJR at $\tau = 1$, as the group $\{v_3, v_4\}$ commonly approves $c_1$ with quality $1$, while neither voter approves a selected candidate of quality $1$.

\begin{table}
    \caption{Example instance with $n = 6$ voters.}
    \label{tab:running_example}
    \centering
    \begin{tabular}{c c c c c c}
        \toprule
        $k = 3$ & $c_1$ & $c_2$ & $c_3$ & $c_4$ & $c_5$\\
        quality & $1$ & $\frac{2}{3}$ & $1$ & $1$ & $\frac{2}{3}$\\
        \midrule
        $v_1$ & \checkmark & \checkmark & & \checkmark &\\
        $v_2$ & \checkmark & \checkmark & & \checkmark &\\
        $v_3$ & \checkmark & \checkmark & & &\\
        $v_4$ & \checkmark & \checkmark & & & \checkmark \\
        $v_5$ & \checkmark & \checkmark & \checkmark & & \checkmark  \\
        $v_6$ & & & \checkmark & &\\
        \bottomrule
    \end{tabular}
\end{table}

\paragraph{Axiomatics and an Efficiently Verifiable Strengthening.}
In Appendix \ref{app:axiomatic_proportionality}, we analyze value-EJR/PJR and threshold-EJR/PJR axiomatically.
As a normative basis, we generalize an axiom called lower quota for party lists to the best-in-slot and total quality approaches, and prove that threshold- and value-PJR/EJR adhere to these principles. Further, we show that under the well-behavedness axioms of \citet{DoPe26a}, the following strengthening of threshold-PJR is the canonical requirement for proportional representation when generalizing the best-in-slot approach on party-list profiles to general profiles and larger cohesive groups.

\begin{definition}
    A committee $W$ satisfies \emph{threshold-PJR+}, if for every $\ell\in [k]$, $\tau \in [0,1]$, and $S\subseteq V$ such that $\lvert S\rvert \ge \frac{\ell n}{k}$ and $\lvert \appcut{S}^{\ge \tau} \setminus W \rvert \ge 1$, we have $ \left| \appjoin{S} \cap W^{\ge\tau}\right|\ge\ell$.\footnote{There is also an equivalent formulation of threshold-PJR+ closer in spirit to the PJR+ notion from classical approval-based multiwinner voting \citep{BrPe23a}: $W$ satisfies threshold-PJR+, if for every $\ell \in [k]$ and candidate $c\in C\setminus W$, and every $S\subseteq V[c]$ of size at least $\ell \frac nk$, we have that $\lvert C_{\cup S} \cap W^{\ge h_c}\rvert \ge \ell$.}
\end{definition}
Beyond its axiomatic appeal, as we argue in \Cref{sec:verify}, threshold-PJR+ has the advantage that it can be verified in polynomial time.
A similar approach using total lower quota leads to a technical refinement of value-PJR, which we omit here.\footnote{Roughly speaking, this refinement states that a committee $W$ satisfies value-PJR+ if it satisfies value-PJR and under all additions of candidates from $W$ to some approval ballots, $W$ still satisfies value-PJR. This axiom is, arguably, unintuitive. Further, one can show that this version remains coNP-hard to verify.}

\subsection{Satisfiability and Verification}\label{sec:verify}
From a practical perspective, notions that are not always satisfiable may be too restrictive to use. In \Cref{sec:designing_rules}, we develop algorithms showing that all so-far discussed proportionality notions are always satisfiable, with the exception of threshold-EJR. For threshold-EJR consider the following counterexample:

\begin{sideexample}{%
  \begin{tabular}{@{}c c c c c@{}}
    \toprule
     & $c_1$      & $c_2$      & $d_1$      & $d_2$      \\
    quality & $1$        & $1$        & $0.1$      & $0.1$      \\
    \midrule
    $v_1$     & \checkmark &             & \checkmark & \checkmark \\
    $v_2$     &            & \checkmark  & \checkmark & \checkmark \\
    \bottomrule
  \end{tabular}%
}
Consider the instance depicted in the table with $k=n=2$.
At $\tau=1$, each voter forms a $1$-cohesive group and deserves representation from a candidate with quality $1$, thus $c_1$ and $c_2$ must be selected. At $\tau=0.1$, both voters together form a $2$-cohesive group and therefore some voter must be represented by two
candidates, which is only possible if $d_1$ or $d_2$ is selected.
\end{sideexample}

Determining whether a given instance admits a threshold-EJR committee turns out to be intractable. We provide a proof for this in Appendix \ref{app:designing_rules} (see \Cref{prop:threshold-EJR_existence_hardness}).
\begin{restatable}{proposition}{thresholdEJRExistsIsHard}
    It is NP-hard to determine whether a given instance admits a  committee satisfying threshold-EJR.
\end{restatable}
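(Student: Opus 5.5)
We reduce from a standard NP-complete problem, using the counterexample above (two quality levels, high-quality singleton demands versus a low-quality joint demand) as a gadget. The source problem will be \textsc{Exact Cover by 3-Sets} (X3C): a universe $U$ with $|U|=3q$ and a family $\mathcal F$ of 3-subsets; the question is whether $q$ sets from $\mathcal F$ partition $U$. Only two qualities are used, $1$ and a small $\varepsilon$. Then threshold-EJR reduces to two conditions: EJR-type demands at $\tau=1$ involving only quality-$1$ candidates, and ordinary EJR at $\tau=\varepsilon$. The reduction sets up a tension between them. The $\tau=1$ demands should force a high-quality selection that encodes a cover. The $\tau=\varepsilon$ demands should be satisfiable together with it exactly when that selection is an exact cover, since otherwise some cohesive group would need low-quality seats that the budget $k$ no longer allows.

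Concretely, for each set $F\in\mathcal F$ we introduce a quality-$1$ candidate $c_F$. For each element $u\in U$ we introduce a voter block $V_u$ that approves $c_F$ for all $F\ni u$. We then calibrate $n$ and $k$ so that every $V_u$ is $1$-large. At $\tau=1$ the commonly approved quality-$1$ set of a block is nonempty, so some voter of $V_u$ must approve a selected $c_F$ with $u\in F$; that is, the selected quality-$1$ candidates must cover $U$. Next, we add a large block of ``filler'' voters together with low-quality candidates $d_j$ that are jointly approved by the filler voters and also by the element voters, in the style of the $d_1,d_2$ of the example. Through the $\tau=\varepsilon$ condition, this block forces at least $k-q$ seats onto candidates other than the $c_F$'s (for instance $d_j$'s). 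Hence at most $q$ quality-$1$ candidates can be chosen, and a cover with at most $q$ sets of size $3$ of a $3q$-universe is an exact cover. For the converse, given an exact cover, we select its $q$ set candidates plus $k-q$ of the $d_j$ and verify every demand. At $\tau=1$ each element is covered. At $\tau=\varepsilon$ every $\ell$-large group with $\ell$ common approvals has some voter approving $\ell$ selected candidates; the element voters approve their one cover set plus all selected $d_j$, and the filler voters approve all selected $d_j$.

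The main obstacle is the calibration in the converse direction. The $\tau=\varepsilon$ condition, i.e.\ standard EJR, must not create extra demands that an exact-cover committee violates, for example $\ell$-large unions of element blocks that commonly approve several $c_F$'s or mixed groups. We will first try to make every group's cohesiveness at level $\ell$ come essentially from the shared $d_j$'s. To that end, we give the element voters the $d_j$'s so that any large group's common approvals are dominated by the $d$-candidates, which are all selected up to the needed count. We will also bound how many $c_F$'s any two element blocks share, at most one since distinct elements lie in common sets only sparsely. If needed, we will pad with dummy voters so that unions of element blocks are never $2$-large without including filler voters. Finally, we check that the quality-$1$ demands only arise for single element blocks, which follows if $|V_u|$ is just above $n/k$ and unions of two blocks have empty common high-quality approvals unless they share a set, in which case a single selected set covers both.
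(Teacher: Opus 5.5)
Your X3C route differs from the paper's, and as written it has a gap at the one step you leave unspecified: how the $\tau=\varepsilon$ condition limits the committee to at most $q$ set candidates. This step cannot be done as you describe it.

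\textbf{The forcing step.} A filler-only group cannot force $k-q$ seats. The $3q$ disjoint element blocks each need at least $\frac nk$ voters, so there are at most $(k-3q)\frac nk$ fillers, and they can demand only $k-3q$ seats. So the forcing group must contain element voters. Against them, a demand of size $k-q$ is useless. Take a redundant cover with $q+1$ sets plus $k-q-1$ of the $d_j$. Every element voter approves those $k-q-1$ selected $d_j$'s and at least one selected $c_F$, so it meets any demand of size $k-q$. What the reduction actually needs (if all $d_j$ are approved by everyone) is the following:
\begin{itemize}
\item There are exactly $k-q+1$ low-quality candidates. With fewer, redundant covers get through as just shown. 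With more, the whole electorate is cohesive beyond the $k-q+1$ seats any voter receives under an exact cover, which breaks your converse.
\item A counting argument. A cover with $q+r$ sets has total over-coverage $3r$, so at most three elements are covered $r+1$ or more times. For $q\ge 4$ you can therefore pick $2q+1$ element blocks covered at most $r$ times. Their union with the fillers is $(k-q+1)$-cohesive, while every member approves at most $k-q$ selected candidates.
\end{itemize}
So the real difficulty is in this direction, not only in calibrating the converse as you suggest.

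\textbf{The converse at $\tau=1$.} Second, your converse fails at $\tau=1$ for general X3C instances. Suppose two elements $u,u'$ lie in two common sets $F,F'$. Then $V_u\cup V_{u'}$ is $2$-large, since each block must be $1$-large, and it commonly approves the two quality-$1$ candidates $c_F,c_{F'}$. Threshold-EJR then requires some voter there to approve two selected quality-$1$ candidates, which an exact cover never provides. Your remark that distinct elements share sets only sparsely is not a property of X3C. You would need to reduce from a variant in which any two sets share at most one element, and justify that variant's NP-hardness.

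For comparison, the paper reduces from \textsc{Vertex Cover} and swaps the roles of the quality levels. Every voter has a private quality-$1$ candidate, which is forced at $\tau=1$; with $n=k=2|E|+q+1$ this leaves exactly $q$ free seats. At the low threshold, the only $2$-cohesive groups are the two voters of an edge, over its two endpoints. So neither of the issues above arises.
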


\paragraph{Verification.}
To audit whether a selected committee is proportional, efficient algorithms to verify whether a given committee fulfills some proportionality axiom are needed. This is possible in polynomial time for all JR-level definitions, except for value-JR, as we show in \Cref{thm:value-JR_NP_Verification}. %
Unfortunately, for all PJR- and EJR-level definitions, verification is
intractable: on instances in which all candidates have quality $1$, our
notions reduce to their standard analogs, for which verification is
already coNP-complete \citep{azizJustifiedRepresentationApprovalbased2017,DBLP:conf/aaai/0001EHLFS18}.
In contrast, as for PJR+ \citep{BrPe23a}, threshold-PJR+ is polynomial-time verifiable: we can iterate over all $c\in C\setminus W$ and only need to check that $W^{\ge h_c}$ satisfies PJR+ on the reduced instance with candidates of quality at least $h_c$, which is known to be possible in polynomial time via submodular optimization.

\section{Axiom Relations and Designing Proportional Rules}
\label{sec:landscape}
In this section, we first analyze implications and (in-)compatibilities between our axioms.
Based on this, we then design rules which provide the best possible jointly satisfiable proportionality guarantees. A summary of all our results can be found in \Cref{fig:relations}.
\tikzset{
  implication/.style={
    ->,
    >=Stealth,
    thick
  },
  nonimplication/.style={
    black,
    dotted,
    >=Stealth,
    thick,
  },
  incompatibility/.style={
    red,
    dashed,
    semithick
  },
}

\newcommand{\arrowref}[6][0.28]{%
  \begingroup

  \path
    (#3) coordinate (arrowref-start)
    (#4) coordinate (arrowref-end);

  \pgfmathanglebetweenpoints
    {\pgfpointanchor{arrowref-start}{center}}
    {\pgfpointanchor{arrowref-end}{center}}%
  \edef\arrowrefangle{\pgfmathresult}%

  \pgfmathparse{%
    (\arrowrefangle > 90 && \arrowrefangle < 270)
      ? \arrowrefangle - 180
      : \arrowrefangle
  }%
  \edef\arrowreflabelangle{\pgfmathresult}%

  \ifstrequal{#5}{below}
    {\pgfmathsetmacro{\arrowrefshiftangle}{\arrowreflabelangle-90}}
    {\pgfmathsetmacro{\arrowrefshiftangle}{\arrowreflabelangle+90}}

  \draw[#2]
    (#3) -- (#4)
    node[
      pos=#1,
      shift={({\arrowrefshiftangle}:6pt)}
    ]
    {\rotatebox{\arrowreflabelangle}{
        \scriptsize #6}
    };
  \endgroup
}

\begin{figure*}[t]
  \centering

  \begin{tikzpicture}[
      x=5cm,
      y=1.8cm,
      every node/.style={align=center},
      implication/.style={-{Stealth},thick},
      nonimplication/.style={dotted,thick},
      incompatibility/.style={red,dashed,semithick}
    ]

    \node (EJRl) at (-1,2) {EJR};
    \node (PJRl) at (-1,1) {PJR};
    \node (JRl)  at (-1,0) {JR};

    \node (tEJR) at (0,2)
    {threshold-EJR\\[-1mm]\scriptsize unsatisfiable};
    \node (tPJR) at (0,1)
    {threshold-PJR\\[-1mm]\scriptsize polynomial-time satisfiable};
    \node (tJR) at (0,0)
    {threshold-JR};

    \node (vEJR) at (1,2)
    {value-EJR\\[-1mm]\scriptsize satisfiable};
    \node (vPJR) at (1,1)
    {value-PJR};
    \node (vJR) at (1,0)
    {value-JR};

    \node (EJRr) at (2,2) {EJR};
    \node (PJRr) at (2,1) {PJR};
    \node (JRr)  at (2,0) {JR};

    \draw[implication]
    (tEJR.south) -- (tPJR.north);

    \draw[implication]
    (tPJR.south) -- (tJR.north);

    \draw[implication]
    (vEJR.south) -- (vPJR.north);

    \draw[implication]
    (vPJR.south) -- (vJR.north);

    \draw[implication] (EJRl.south) -- (PJRl.north);
    \draw[implication] (PJRl.south) -- (JRl.north);

    \draw[implication] (EJRr.south) -- (PJRr.north);
    \draw[implication] (PJRr.south) -- (JRr.north);

    \arrowref[0.25]
      {incompatibility}
      {EJRl.south east}
      {tJR.north west}
      {above}
        {\cshref{prop:EJR_tJR_incompatible}}

    \arrowref[0.33]
        {nonimplication}
        {vEJR.east}
        {PJRr.west}
        {above}
        {\cshref{prop:vEJR_PJR}, \cshref{prop:vEJR_PJR_compatible}}

    \arrowref[0.3]
        {incompatibility}
        {vJR.north east}
        {EJRr.south west}
        {above}
        {\cshref{prop:EJR_vJR_incompatible}}

    \arrowref[0.5]
        {incompatibility}
        {tEJR.east}
        {vEJR.west}
        {above}
        {\cshref{thm:threshold_ejr_not_value_ejr}*}

    \arrowref[0.23]
        {incompatibility}
        {tJR.north east}
        {vEJR.south west}
        {below}
        {\cshref{prop:vEJR_tJR_incompatible}}

    \arrowref[0.5]
        {implication}
        {tEJR.west}
        {EJRl.east}
        {below}
        {\cshref{obs:threshold_implies_basic}}

    \arrowref[0.65]
        {implication}
        {tPJR.west}
        {PJRl.east}
        {below}
        {\cshref{obs:threshold_implies_basic}}

    \arrowref[0.5]
        {implication}
        {tJR.west}
        {JRl.east}
        {below}
        {\cshref{obs:threshold_implies_basic}}

    \arrowref[0.5]
        {implication}
        {vJR.east}
        {JRr.west}
        {below}
        {\cshref{obs:JR_vJR}}

    \arrowref[0.7]
        {implication}
        {tPJR.east}
        {vPJR.west}
        {below}
        {\cshref{prop:tPJR_implies_vPJR}}

    \arrowref[0.5]
        {implication}
        {tJR.east}
        {vJR.west}
        {below}
        {\cshref{cor:tJR_implies_vJR}}

  \end{tikzpicture}
  \caption{Relationships of threshold-, value- and JR notions. Arrows denote implications, red dashed lines incompatibilities and dotted lines compatibilities without implications. The incompatibility between threshold-EJR and value-EJR (marked with \textcolor{red}{*}) also holds on instances where threshold-EJR committees exist. Only the implications that can be inferred from the depicted arrows hold.
  }\label{fig:relations}\label{fig:relationships}
\end{figure*}

\subsection{Axiom Relations} \label{sec:axioms_relationships}
We first analyze the implications between our axioms in
\Cref{fig:relations}
(formal proofs can be found in \Cref{app:relations}; an intuitive discussion is included here). Within each
family, x-EJR implies x-PJR implies x-JR for $x\in\{
\text{value}, \text{threshold}\}$, and threshold-PJR+ implies threshold-PJR.
As all our notions reduce to their standard analogs when every candidate has
quality $1$, counterexamples from the standard setting carry over, and therefore no converse implication holds. Moreover, each threshold-notion implies its standard counterpart, as can be seen by choosing a cutoff below the minimum candidate quality. Value-JR implies JR, as being represented with positive quality implies that the voter approves at least one selected candidate. However, value-PJR and even value-EJR do not imply standard PJR, as it may be that an $\ell$-cohesive group is represented with sufficient total quality while only approving a single (high-quality) selected candidate.

Turning to the relations between value and threshold notions, at the JR- and PJR-level, the threshold notions imply their value counterparts: intuitively, threshold-(P)JR requires for every $\ell$-cohesive group that the $\ell$ selected candidates of highest quality approved by at least one voter from the group  are pointwise at least as good as the $\ell$ best candidates they commonly approve. This implies that the group's summed quality entitlement is also met.
Interestingly, at the EJR-level, this relation breaks down: here threshold-EJR and value-EJR are incompatible, even on instances that admit a threshold-EJR committee (\Cref{thm:threshold_ejr_not_value_ejr}).

Turning from implications to incompatibilities, it turns out that all PJR-level notions are jointly
satisfiable, while EJR notions are incompatible with various JR ones.
\begin{proposition}
\label{thm:relationship-map}
 EJR (and threshold-EJR) is incompatible with threshold- and value-JR.
Further, value-EJR is incompatible with threshold-JR.
\end{proposition}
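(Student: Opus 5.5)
The plan is to prove all incompatibilities with two small instances built like the threshold-EJR counterexample above ($k=n=2$, so every single voter is $1$-large and the pair of voters is $2$-large). Two remarks reduce the work. First, threshold-EJR implies EJR (\Cref{obs:threshold_implies_basic}), so it suffices to show that EJR is incompatible with the JR variants. Second, threshold-JR implies value-JR (\Cref{cor:tJR_implies_vJR}), so it suffices to show that EJR is incompatible with value-JR. What remains is (i) EJR versus value-JR and (ii) value-EJR versus threshold-JR.

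For (i), I would take candidates $c_1,c_2$ of quality $1$ and $d_1,d_2$ of quality $\varepsilon<\nicefrac12$ (say $0.1$). The approval sets are $A_{v_1}=\{c_1,d_1,d_2\}$ and $A_{v_2}=\{c_2,d_1,d_2\}$.
\begin{itemize}
\item Value-JR applied to the $1$-cohesive group $\{v_1\}$ requires $h(A_{v_1}\cap W)\ge 1$. Since $h(\{d_1,d_2\})=2\varepsilon<1$, this forces $c_1\in W$.
\item Symmetrically, value-JR applied to $\{v_2\}$ forces $c_2\in W$.
\item Because $k=2$, this leaves only $W=\{c_1,c_2\}$.
\item The pair $\{v_1,v_2\}$ is $2$-cohesive via $d_1,d_2$, so EJR demands that some voter approves two members of $W$. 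Under $W=\{c_1,c_2\}$ each voter approves only one, so EJR fails.
\end{itemize}
Hence no committee satisfies both EJR and value-JR, which gives the first sentence of the proposition.

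For (ii), I would use the same approval structure but raise the qualities of $d_1,d_2$ to a value $q\in(\nicefrac12,1)$, e.g.\ $q=0.9$.
\begin{itemize}
\item Threshold-JR at $\tau=1$, applied to $\{v_1\}$ and $\{v_2\}$, forces $W=\{c_1,c_2\}$.
\item The $2$-cohesive pair has $h(\appcut{S}\mid 2)=2q>1$.
\item Under $W=\{c_1,c_2\}$, each voter obtains quality only $1$. So value-EJR fails, and the two axioms are incompatible.
\end{itemize}

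The only delicate point is choosing the quality parameters. In (i), $\varepsilon$ must be small enough that the cheap candidates cannot substitute for $c_i$ in the value-JR demand. In (ii), $q$ must exceed $\nicefrac12$ so that the pair's summed entitlement exceeds what a single quality-$1$ candidate provides. Everything else is a direct check of the definitions on a $2$-voter instance.
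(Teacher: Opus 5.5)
Your proposal is correct and follows the paper's proof: the paper uses the same two $k=2$ instances, with $d$-qualities $0.1$ and $0.6$ (your $q=0.9$ works equally well, since only $q\in(\nicefrac12,1)$ matters). In both, the singleton groups force $\{c_1,c_2\}$ and the pair then witnesses the EJR or value-EJR violation. Your explicit reduction via threshold-EJR $\Rightarrow$ EJR and threshold-JR $\Rightarrow$ value-JR just spells out what the paper leaves implicit in its parenthetical ``(and threshold-EJR)'' and its ``threshold-JR or value-JR'' remark.
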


\begin{profileproof}{%
  \setlength{\tabcolsep}{3pt}%
  \begin{tabular}{@{}c@{\hspace{1.2em}}c@{}}
    \begin{tabular}{@{}c
        >{\columncolor{gray!20}}c
        >{\columncolor{gray!20}}c
        c c@{}}
      \toprule
      $k=2$
        & $c_1$      & $c_2$      & $d_1$      & $d_2$      \\
      quality
        & $1$        & $1$        & $0.1$      & $0.1$      \\
      \midrule
      $v_1$
        & \checkmark &             & \checkmark & \checkmark \\
      $v_2$
        &            & \checkmark  & \checkmark & \checkmark \\
      \bottomrule
    \end{tabular}
    &
    \begin{tabular}{@{}c
        >{\columncolor{gray!20}}c
        >{\columncolor{gray!20}}c
        c c@{}}
      \toprule
      $k=2$
        & $c_1$      & $c_2$      & $d_1$      & $d_2$      \\
      quality
        & $1$        & $1$        & $0.6$      & $0.6$      \\
      \midrule
      $v_1$
        & \checkmark &             & \checkmark & \checkmark \\
      $v_2$
        &            & \checkmark  & \checkmark & \checkmark \\
      \bottomrule
    \end{tabular}
  \end{tabular}%
}
Consider the following two instances. In both profiles, the grey columns identify the committee $\{c_1,c_2\}$. In the first instance, this is the only committee satisfying threshold-JR or value-JR, but it violates EJR for the voter group $\{v_1,v_2\}$. In the second instance, it is the only committee satisfying threshold-JR, but it violates value-EJR.
\end{profileproof}

\subsection{Designing Rules} \label{sec:designing_rules}
We now turn to designing rules that achieve appealing combinations of our axioms.
Having identified the boundary of simultaneously satisfiable axioms in the previous section, two always satisfiable design goals remain: satisfying all PJR-level axioms simultaneously, or satisfying value-EJR and standard PJR. For the first goal, we modify an existing class of rules due to \citet{DBLP:journals/scw/AzizL20} as follows.
\begin{definition}
    We say that $f$ is an \emph{expanding approvals rule (EAR)}, if it can be described as follows.
    Start with an empty committee $W$ and budgets $b_i = \frac{k}{n}$ for each voter $i \in V$. Let $r$ be the number of distinct quality scores and assume $h^1 > h^2 > \dots > h^r$ to be the distinct quality scores. For $j = 1, \ldots, r$, consider $C^j \coloneqq \{c\in C\setminus W \mid h_c \geq h^j \,\land\, \sum_{i\in V[c]} b_i \geq 1\}$, i.e., the set of candidates of quality at least $h^j$, whose supporters have a total remaining budget of at least $1$. While $C^j \neq \emptyset$, add a candidate $c \in C^j$ to $W$ and let voters in $V[c]$ pay a cost of $1$ for $c$, where the cost is arbitrarily distributed among $V[c]$ without overdrawing any voter's budget and $p_i(c)$ is the amount voter $i$ paid for candidate $c$. Update each $b_i$ by subtracting $p_i(c)$ and update $C^j$.
    Once no candidate in $C^r$ can be afforded by their supporters, add arbitrary candidates to $W$ until it contains $k$ candidates.
\end{definition}

All rules from this class satisfy threshold-PJR+, and therefore also threshold-, value-, and standard PJR.

\begin{restatable}{proposition}{EARThesholdPJR}\label{prop:EAR_threshold-PJR+}
   If $f$ is an EAR, it satisfies threshold-PJR+.
\end{restatable}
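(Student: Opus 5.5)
We argue by contradiction, using the budget accounting standard for PJR+ of expanding approvals rules. Suppose the committee $W$ output by an EAR violates threshold-PJR+. Then there are $\ell\in[k]$, a threshold $\tau$, a group $S$ with $|S|\ge \ell\frac nk$, and a candidate $c\in \appcut{S}^{\ge\tau}\setminus W$ such that $|\appjoin{S}\cap W^{\ge\tau}|\le \ell-1$. Since $h_c\ge\tau$, we may replace $\tau$ by $h_c$: this only shrinks $W^{\ge\tau}$, so the violation persists. Hence we assume $\tau=h_c=h^{j^*}$ for some index $j^*$. We then study the state of the algorithm at the end of phase $j^*$, i.e., once the while-loop for $C^{j^*}$ has terminated.

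The key step is to bound the budget that voters in $S$ have spent up to this point. In phases $1,\dots,j^*$ the rule only adds candidates of quality at least $h^{j^*}=\tau$. A voter $i\in S$ pays only for candidates in $A_i\subseteq\appjoin{S}$, so every candidate for which some voter in $S$ paid lies in $\appjoin{S}\cap W^{\ge\tau}$. Each selected candidate costs exactly $1$ in total, so the voters in $S$ together paid at most $|\appjoin{S}\cap W^{\ge\tau}|\le \ell-1$ by the end of phase $j^*$. Their initial total budget is $|S|\cdot\frac kn\ge\ell$. Hence their remaining budget satisfies
\[
\sum_{i\in S} b_i \ge \ell-(\ell-1)=1 .
\]
Budgets are never overdrawn, so every $b_i\ge 0$. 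Since $S\subseteq V[c]$, it follows that $\sum_{i\in V[c]}b_i\ge 1$.

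From this we reach the contradiction. The candidate $c$ is not in $W$ and has $h_c\ge h^{j^*}$, so $c\in C^{j^*}$ at the end of phase $j^*$. This contradicts the termination condition $C^{j^*}=\emptyset$ of the while-loop. We must also confirm that candidates added in the final filling step do not matter. That step happens after phase $j^*$, and we only need the payment bound at the end of phase $j^*$. Moreover, $c\notin W$ at the end of the whole run means $c$ was also not selected during phase $j^*$.

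The main subtlety is the reduction to $\tau=h_c$ combined with the accounting restricted to the end of phase $j^*$. Payments made by $S$ in later phases, for candidates of quality below $\tau$, must not be counted against the bound. Taking the snapshot exactly when phase $j^*$ ends handles this. Everything else is routine. Finally, threshold-PJR+ implies threshold-PJR, by taking $c$ from $\appcut{S}^{\ge\tau}\setminus W$ when one exists, and the requirement is trivial otherwise. So the statement also yields threshold-PJR and, through the implications in \Cref{fig:relations}, value-PJR and PJR.
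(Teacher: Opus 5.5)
Your proof is correct and is essentially the paper's argument. The paper splits the same budget accounting into two steps. First, it introduces threshold-maximal affordability of the pre-completion set $X$, certified by a price system, and shows that any completion of such a set satisfies threshold-PJR+, because $S$ can have spent at most $\ell-1$ on $X^{\ge h}$. Second, it shows that the EAR's pre-completion set has this property via the same end-of-phase contradiction you use: at the phase $j$ with $h^j = h_d$, the candidate $d$ would still lie in $C^j$. Your direct version skips the intermediate notion, while the paper's split lets it reuse the lemma for the reciprocal EAR and reciprocal threshold-PJR+.
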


Among the EAR rules, there are several appealing, computationally tractable ones. For instance, there is a natural adaptation of the Method of Equal Shares (MES) \citep{peters2020proportionality} to our setting, which we term \emph{threshold-MES} and which is the EAR such that among $C^j$, we choose the candidate $c$ minimizing $\rho^*(c)$, where $\rho^*(c)$ is the smallest non-negative $\rho$ such that $\sum_{i\in V[c]} \min(b_i, \rho) = 1$, and we deduct from the budget of each $i\in V[c]$ precisely $\min(b_i, \rho^*(c))$. Clearly, a committee selected by threshold-MES can be output in polynomial time, and the rule satisfies threshold-PJR+ by \Cref{prop:EAR_threshold-PJR+}.

For the second goal, we present an exponential-time rule which satisfies both value-EJR and PJR, which is based on a rule due to \citet{peters2021proportional}:

\begin{restatable}{definition}{valueGCR}
   The \emph{value-based greedy cohesive rule with price completion (value-GCR)}
   starts with the empty committee $W = \emptyset$ and budgets
   $b_i = \frac{k}{n}$. For some set of voters $S \subseteq V$ and $\ell\in [k]$, we say that $(S, \ell)$ \emph{witnesses a value-EJR
   violation} if $S$ is $\ell$-cohesive but we
   have $h(A_i\cap W)< h(\appcut{S}\mid \ell)$ for all $i \in S$. In the first phase, as long as some value-EJR violation witness exists, we
   iteratively choose the witness $(S,\ell)$ maximizing
   $h(\appcut{S}\mid \ell)$. Let $T\subseteq \appcut{S}$ denote a set of
   candidates with $h(T)=h(\appcut{S}\mid \ell)$ and $|T|= \ell$. We add all
   candidates from $T\setminus W$ to $W$ and deduct
   $\frac{\lvert T\setminus W \rvert}{\lvert S\rvert}$ from the budget of
   each $i\in S$. We proceed until no witness exists. In the second phase, as
   long as some candidate can be afforded by their supporting voters, i.e.,
   $c\in C\setminus W$ satisfies $\sum_{i \in V[c]} b_i \ge 1$, we add $c$ to
   $W$ and deduct a total budget of $1$ from the voters in $V[c]$ arbitrarily
   without charging any voter more than their remaining budget. Finally, if
   $|W| < k$, we complete $W$ arbitrarily.
\end{restatable}

\begin{restatable}{theorem}{valueGCRValueEJR} \label{thm:value-GCR_value-EJR_PJR}
    Value-GCR satisfies value-EJR and PJR.
\end{restatable}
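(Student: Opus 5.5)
The plan has four parts: value-EJR follows directly from termination of the first phase, well-definedness needs a budget-feasibility argument, and PJR needs a GCR-style charging argument.

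\emph{Value-EJR.} The first phase only stops when no pair $(S,\ell)$ witnesses a value-EJR violation. Later phases only add candidates, so $h(A_i\cap W)$ can only increase. Hence the final committee satisfies value-EJR, provided the first phase never overfills the committee.

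\emph{Well-definedness.} I would show by induction that whenever a witness $(S,\ell)$ is chosen, every voter in $S$ still has budget at least $\frac{\ell}{|S|}\ge\frac{|T\setminus W|}{|S|}$, so no budget goes negative. Then the total amount paid equals the number of added candidates, and the initial budget is $n\cdot\frac kn=k$, so $|W|\le k$ throughout.

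\emph{Key budget claim.} If a voter $i\in S$ has been charged by an earlier witness $(S',\ell')$, I want to show that voter's payments to date are small. Since witnesses are processed by decreasing $h(\appcut{S}\mid\ell)$, I expect $h(\appcut{S'}\mid\ell')\ge h(\appcut{S}\mid\ell)$. Voter $i$ approves all of $T'\subseteq \appcut{S'}$, which is now in $W$, so $h(A_i\cap W)\ge h(T')=h(\appcut{S'}\mid \ell')\ge h(\appcut{S}\mid\ell)$. Thus $i$ would already be satisfied for $(S,\ell)$, contradicting that $(S,\ell)$ is a witness. So no voter in a witness group was charged before, and each has full budget $\frac kn\ge\frac{\ell}{|S|}$ since $|S|\ge\ell\frac nk$. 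The monotonicity step is the only delicate point, and it is where I expect the main obstacle: witnesses appearing later could in principle have larger value. But adding candidates never creates new witnesses, because $h(A_i\cap W)$ only grows while cohesiveness and $\appcut{S}$ do not depend on $W$. So the maximum value over existing witnesses is nonincreasing, which gives the ordering. Every voter is therefore charged at most once in phase one, by at most $\frac{\ell}{|S|}\le\frac kn$.

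\emph{PJR.} Suppose $S$ is $\ell$-cohesive with $|\appjoin{S}\cap W|<\ell$. I would use the standard price argument for MES/GCR-style rules. Each candidate in $W$ costs at most $1$ in total, and voters of $S$ only pay for candidates in $\appjoin{S}\cap W$. So the voters of $S$ paid in total at most $|\appjoin{S}\cap W|\le\ell-1$. Their remaining budget is therefore at least $|S|\frac kn-(\ell-1)\ge 1$. Since $|\appcut{S}|\ge\ell>|\appjoin{S}\cap W|$, some $c\in\appcut{S}\setminus W$ exists. Its supporters include all of $S$, so $\sum_{i\in V[c]}b_i\ge1$, and phase two would not have stopped. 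This contradiction proves PJR.

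To justify the bound of one unit per candidate in phase one, note that a witness adds $|T\setminus W|$ candidates and charges exactly $|T\setminus W|$ in total. Since that charge is spread over the added candidates, the per-candidate accounting holds.
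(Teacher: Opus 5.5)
Your proposal is correct and follows the paper's proof essentially step by step. The phase-one witness groups are pairwise disjoint because witnesses are processed by decreasing $h(\appcut{S}\mid\ell)$; this gives budget feasibility and $|W|\le k$. Value-EJR then follows from termination of phase one, and PJR follows from the leftover-budget argument contradicting termination of phase two. Your explicit observation that any later witness was already a witness at every earlier step (since $W$ only grows), so its value cannot exceed that of an earlier chosen witness, fills in the inequality $h(T_x)\ge h(T_{x'})$ that the paper uses without justification.
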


We give the proof in \Cref{app:value-GCR_value-EJR_PJR}. Note, however, that value-GCR runs in exponential time, as it needs to check exponentially many groups of voters to identify a value-EJR violation witness. This is not incidental, as we show that, assuming $\mathrm{P}\neq\mathrm{NP}$, no polynomial-time rule guaranteeing value-EJR exists:

\begin{restatable}{theorem}{valueEJRExistsIsHard}\label{thm:value-EJR-hard}
    If some polynomial-time algorithm outputs a committee satisfying value-EJR for every instance
    $\mathcal I$, then
    $\mathrm{P}=\mathrm{NP}$. This holds already on instances in which there
    are only candidates of quality $1$ and one other, unary-encoded quality.
\end{restatable}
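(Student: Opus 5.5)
The plan is a polynomial-time reduction from an NP-complete problem; I would use Exact Cover by 3-Sets (X3C). The construction maps an X3C instance $(U,\mathcal F)$ to a quality-aware instance $\mathcal I$ with two properties.
\begin{enumerate*}[label=(\roman*)]
\item If $\mathcal F$ contains an exact cover, then \emph{every} committee satisfying value-EJR on $\mathcal I$ encodes an exact cover, and the cover can be read off in polynomial time.
\item If no exact cover exists, then no committee of $\mathcal I$ encodes one.
\end{enumerate*}
Since value-EJR committees always exist (\Cref{thm:value-GCR_value-EJR_PJR}), a hypothetical polynomial-time algorithm always returns some committee $W$. We would decode $W$ and check in polynomial time whether it yields an exact cover. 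This decides X3C, so $\mathrm P=\mathrm{NP}$. Note that we never need to verify value-EJR itself, which is coNP-hard.

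For the construction, let $|U|=3t$. I would take $k=t$ plus a padding term, and $n=3t$ element voters plus auxiliary voters, scaled so that $n/k=3$ on the element part. For each $F\in\mathcal F$ there is a quality-$1$ candidate $c_F$ approved exactly by the three element voters in $F$, so each such triple is a $1$-cohesive group whose value-EJR demand is $1$.

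A second gadget uses candidates of a single low quality $q=1/p$, with $p$ polynomial and hence unary-encodable. These are commonly approved by large groups, for example the whole electorate or large unions of element voters. They create demands of the form $h(\appcut{S}\mid\ell)=\ell q$ for large $\ell$. The role of this gadget is to use up the committee budget. Together with the triple demands, the total seat count should leave exactly $t$ seats for quality-$1$ set candidates. Counting seats against demands should then force:
\begin{enumerate*}[label=(\alph*)]
\item each element voter approves exactly one selected $c_F$, since otherwise some element voter lacks value $1$ and a triple containing it is violated; and
\item the selected $c_F$ are pairwise disjoint.
\end{enumerate*}
This is exactly an exact cover.

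The key lemma is property (i), and I expect this to be the main obstacle. One must rule out committees that satisfy every triple demand without forming a cover. There are two ways this could happen. First, a voter could reach value $1$ through several low-quality candidates, because $\lceil 1/q\rceil$ copies suffice. Second, triples could be covered through a single voter who is in many selected sets.

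I would first block the low-quality route by a counting argument. The quality-$q$ candidates are approved only by auxiliary voters together with carefully chosen element groups, and their number is so small relative to $p$ that no element voter can accumulate value $1$ from them. At the same time, the large-$\ell$ demands of the auxiliary gadget must absorb exactly the remaining $k-t$ seats. I would then show by double counting that every uncovered element voter yields a violated triple $\{v\}\cup\{\text{two others}\}$. This second step is delicate, because the triple must actually be a set in $\mathcal F$. If this fails, I would fall back to a modified gadget in which each element voter forms a $1$-large group with dedicated auxiliary voters.

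Property (ii) is immediate once decoding is defined as taking the selected $c_F$. The final step is to check that all numbers, in particular $p$, are polynomial, so that the instance uses only quality $1$ and one unary-encoded quality.
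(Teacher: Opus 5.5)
There is a genuine gap, and it is in your key lemma (i). The problem is structural, not a matter of tuning the gadgets. By \Cref{thm:value-GCR_value-EJR_PJR}, which you cite yourself, a value-EJR committee exists on \emph{every} instance, including the image of a no-instance of X3C. Your derivation of (a) and (b) counts seats against the demands of the triple groups and the low-quality gadget, then double counts. It never uses the hypothesis that an exact cover exists, and these groups are present in the image of every X3C instance. An argument of that form would therefore apply verbatim to no-instances. It would show that every value-EJR committee there encodes an exact cover, hence that no value-EJR committee exists, which contradicts totality.

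Any correct proof of (i) must use the existence of a cover to create a binding demand: a cohesive group whose entitlement rules out the committees that survive on no-instances. Your construction has no group whose cohesiveness depends on a cover existing. X3C certificates (disjoint triples covering $U$) also do not naturally take that form, since the existence of a large cohesive group is a biclique question on the voter--candidate incidence structure. Separately, step (a) fails as stated: a triple's value-EJR demand is met as soon as \emph{one} of its three voters has value $1$. So the triple demands only say that the union of the selected sets meets every $F\in\mathcal F$. Your decision procedure (run the algorithm, decode, never verify value-EJR) is fine; only the construction behind it is missing.

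The missing idea, which the paper uses, is to reduce from a problem whose certificates \emph{are} cohesive groups, and to have a default committee. The paper reduces from \textsc{Balanced Biclique}:
\begin{itemize}
\item It sets $k=n$, so singletons are $1$-large, and gives every voter a private quality-$1$ candidate.
\item It adds $\ell^*-t$ filler voters $U$, and for each $r\in R$ adds $B$ block candidates of quality $\beta=\frac{2}{2\ell^*-1}$ approved by $N_G(r)\cup U$.
\item The committee $W_0$ of all private candidates gives every voter value $1$.
\item A biclique $(I,J)$ makes $I\cup U$ an $\ell^*$-cohesive group over the $Bt=\ell^*$ block candidates of $J$. Its entitlement is $\ell^*\beta>1$, so $W_0$ fails.
\item Conversely, take any value-EJR committee $W\neq W_0$. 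A voter who lost their private candidate needs $\lceil 1/\beta\rceil=\ell^*$ selected block candidates, so at least $\ell^*$ block candidates are selected.
\item Since $k=n$, exactly as many voters lost their private candidate. With only $\ell^*-t$ fillers, at least $t$ vertex voters did.
\item A union bound shows that the sets of selected block candidates approved by these $t$ voters intersect in at least $B(t-1)+1$ elements. By pigeonhole these span $t$ right vertices adjacent to all of them.
\end{itemize}
The algorithm then answers ``yes'' iff its output differs from $W_0$. The calibration of $\beta$ does two things at once: $\ell^*\beta$ is just above $1$, so the biclique's entitlement beats $W_0$, and trading a private candidate for low-quality ones costs $\ell^*$ seats. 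Your low-quality gadget would need exactly this kind of calibration.
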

Recall that our setting is a special case of unit-cost participatory budgeting with additive utilities (cf.\ \Cref{ax:pjr_ejr}).
Assuming $\mathrm{P}\neq\mathrm{NP}$, \Cref{thm:value-EJR-hard} thus negatively answers an open question of \citet{PPS23equalshares} whether  an EJR outcome can be computed in strongly polynomial time for additive utilities and unit costs in participatory budgeting, a question whose complexity \citet{baharav2026complexityjustifiedrepresentationadditive} also recently posed as an open problem. Moreover, since our hardness holds already for unary-encoded utilities and unit costs, it also rules out pseudo-polynomial-time algorithms for computing
an EJR outcome, negatively answering a question posed by
\citet[Sec.~5.1.1.2]{DBLP:journals/corr/abs-2303-00621} whether an EJR
outcome can be found in time polynomial in the budget, costs, and utilities
of an instance in participatory budgeting.
Somewhat surprisingly, our result provides negative answers to these questions already in a---in terms of cost and utility values strongly restricted---special case.

Lastly, in \Cref{app:MES_value-EJR-1}, we observe that a MES variation defined by \cite{peters2021proportional}  adapted to our setting satisfies standard PJR and a relaxation of value-EJR while being computationally tractable.

\section{Quality-Aware Proportionality and Quality Maximization}
\label{sec:reciprocal}
So far, we have only considered how the representation demands of voter groups may increase in the presence of quality scores. However, an orthogonal perspective on qualities is that a group approving only low-quality candidates may deserve less entitlement: every seat spent on representing such a group is filled by a candidate contributing little quality, and thus diminishes the overall quality of the committee.
As a specific example, suppose quality measures how likely the selected candidate, e.g., a project, is to be successfully implemented. A quality-$0.5$ project would then need twice as many supporters as a quality-$1$ project to benefit the same number of voters in expectation. Since both occupy one slot, it seems natural to require more support for the less reliable project.
In \Cref{rec:prop}, we define \emph{reciprocal} proportionality notions implementing this idea, in which a group's entitlement scales with the quality of its commonly approved candidates.

The idea behind reciprocal proportionality notions is in line with the natural goal to maximize the summed quality of the selected candidates; a natural desideratum in many application domains, e.g., when selecting the most relevant documents in information retrieval or the most reliable projects in participatory
budgeting.
In \Cref{pop}, we explore the trade-off between this standard objective of maximizing committee quality and our notions of proportionality, showing that moving to reciprocal notions allows us to bound the quality loss of imposing proportionality.

\subsection{Reciprocal Proportionality}\label{rec:prop}
As motivated above, another way to think about proportionality in the presence of quality scores is to incorporate them by requiring that a group only approving low-quality candidates needs to be larger to be eligible for representation.
Consider again the example profile in \Cref{tab:intro-party-list-blocks} for $k = 6$. Under standard EJR, every party is entitled to two candidates. In particular, two of party~2's low-quality candidates must be selected, while several high-quality candidates are not selected.

Our reciprocal axioms instead scale each group's entitlement with the quality of its approved candidates: an $\ell$-large group whose $\ell$
highest-quality commonly approved candidates all have quality $h$ is only
entitled to $\ell \cdot h$ candidates.
We start by establishing some terminology:
\begin{definition}
    We say that a group of voters $S$ can \emph{reciprocally afford} a set of     candidates $T$ if $T \subseteq \appcut{S}$ and $\lvert S \rvert \ge \sum_{c \in T} \frac{1}{h_c} \cdot \frac{n}{k}$. We just say a set $T\subseteq C$ is \emph{reciprocally affordable} if $\sum_{c \in T} \frac{1}{h_c} \le k$. Further, for $X\subseteq C$, we write $q_\ell(X)\coloneqq \min_{T\subseteq X,\, |T|=\ell} \sum_{c\in T}\frac{1}{h_c}$, where $q_\ell(X)=\infty$ if $X$ contains fewer than $\ell$ candidates.
\end{definition}

Note that a group $S$ can reciprocally afford $\ell$ of its commonly approved candidates if and only if $\lvert S \rvert \ge q_\ell(\appcut{S}) \cdot \frac{n}{k}$. Replacing the $\ell$-largeness condition on a voter group $S$ by the condition that $S$ can reciprocally afford a candidate set of size $\ell$ leads to reciprocal adaptations of all previously considered JR-, PJR-, and EJR-type definitions. Note that each reciprocal axiom is a weakening of its non-reciprocal version. At the JR-level, we, for instance, get the following notion:
\begin{definition}
    A committee $W$ satisfies \emph{reciprocal JR} on instance $\mathcal I$,
    if for each $S\subseteq V$ with
    $\lvert S \rvert \ge q_1(\appcut{S}) \cdot \frac{n}{k}$, we have that
    $\appjoin{S} \cap W \neq \emptyset$.
\end{definition}
The formal definitions of all other reciprocal notions can be found in \Cref{tab:reciprocal-proportionality-notions}.
Reciprocal affordability can also be linked to participatory budgeting via a cost-based interpretation: viewing each candidate $c$ as having a cost of $\nicefrac{1}{h_c}$ and granting each voter a budget of $\nicefrac{k}{n}$, the reciprocal affordability requirement corresponds exactly to cohesiveness in participatory budgeting with budget $k$ \citep{peters2021proportional}.

We return to the running example from \Cref{tab:running_example} to illustrate reciprocal EJR. The group $S = \{v_1, v_2, v_3, v_4\}$ is $2$-large and $2$-cohesive over $T = \{c_1, c_2\}$ and therefore has a two-candidate claim under standard EJR. However, since $c_1$ and $c_2$ have quality $1$ and $\frac{2}{3}$, respectively, making the same claim under reciprocal EJR would require the group to contain at least $(\frac{1}{h_{c_1}} + \frac{1}{h_{c_2}}) \frac{n}{k} = (1 + \frac{3}{2}) \frac{6}{3} = 5$ voters. Hence, $S$ cannot reciprocally afford $T$ and no longer deserves the same amount of representation. Adding voter $v_5$ resolves this, since the group $\{v_1, v_2, v_3, v_4, v_5\}$ still commonly approves $\{c_1, c_2\}$ and has exactly the five voters required for this two-candidate reciprocal EJR claim. In contrast, reciprocal EJR does not weaken claims based on quality-$1$ candidates. For example, the group $\{v_3, v_4\}$ is $1$-large and commonly approves the quality-$1$ candidate $c_1$. Since $\frac{1}{h_{c_1}} \frac{n}{k} = 2$, the group can also reciprocally afford $\{c_1\}$ and therefore retains the same one-candidate claim under reciprocal EJR.

We analyze satisfiability and the relations between the reciprocal notions in \Cref{app:reciprocal_notions_and_relations}, finding that the relations  exactly match the relationships of their non-reciprocal versions, and that reciprocal threshold-EJR remains unsatisfiable.
Further, we define reciprocal threshold-PJR+ (cf. \Cref{def:rtPJR+}), an efficiently verifiable and axiomatically grounded strengthening of reciprocal threshold-PJR. We also provide axiomatic characterizations of all JR-level reciprocal axioms.

\subsection{Price of Representation} \label{pop}
As noted above, a natural desideratum in many application domains where candidate qualities emerge is to maximize the total quality of the selection $W$. To what extent can we achieve this while simultaneously maintaining proportionality? We formalize this question as follows. In this section, to avoid trivial counterexamples, we assume that each candidate has at least one supporter.

\begin{definition} \label{def:quality_price_of_representation}
For an instance $\mathcal I$, let $W^*(\mathcal I)$ be a committee consisting of the $k$ candidates with the highest quality.
    The \emph{quality-price of representation} of an (always satisfiable) proportionality notion $\mathcal X$ is the worst-case ratio
    $$
        \sup_{\mathcal I} \frac{h(W^\star(\mathcal I))}{ \max_{W\in \mathcal X(\mathcal I)}h(W)}
    $$
    over all instances $\mathcal I$ between a quality-optimal committee $W^\star(\mathcal I)$ and the quality of a quality-optimal $\mathcal X$-satisfying committee $W$.
    A rule $R$ provides a \emph{quality approximation} of at least $\alpha\le 1$ if $h(R(\mathcal I))\geq \alpha \cdot h(W^\star(\mathcal I))$ for all instances $\mathcal I$, where $R(\mathcal I)$ is any committee returned by the rule on instance $\mathcal I$.
\end{definition}

The quality-price of representation turns out to be large for all non-reciprocal notions we consider.

\begin{proposition}
\label{thm:unscaled-cost}
For JR, value-JR, and threshold-JR, the quality-price of representation is at least $k$.
\end{proposition}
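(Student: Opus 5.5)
We prove the bound with a single family of instances, parametrised by $k$ and a small $\varepsilon>0$. On each instance every committee satisfying JR, value-JR, or threshold-JR has summed quality at most roughly $1+(k-1)\varepsilon$, while the quality-optimal committee has summed quality about $k$. Letting $\varepsilon\to 0$ drives the ratio to $k$, so the supremum is at least $k$.

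The hierarchy lets us work with JR only. By \Cref{obs:threshold_implies_basic}, threshold-JR implies JR, and by \Cref{obs:JR_vJR}, value-JR implies JR. Hence every committee satisfying one of the three notions satisfies JR. An upper bound on the quality of the best JR committee therefore also bounds the best value-JR and the best threshold-JR committee.

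\textbf{Construction.} Take $n=k$ voters $v_1,\dots,v_k$. For each $j\in[k]$ there is a candidate $d_j$ of quality $\varepsilon$ approved only by $v_j$. In addition there are $k-1$ candidates $c_1,\dots,c_{k-1}$ of quality $1$, each approved by exactly one voter, namely $v_1$. Every candidate thus has a supporter, as the section assumes. Each singleton $\{v_j\}$ is $1$-large since $n/k=1$, and it commonly approves $d_j$, so it is $1$-cohesive. JR therefore requires $A_{v_j}\cap W\neq\emptyset$ for every $j$.

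\textbf{Upper bound on JR committees.} For $j\ge 2$ the only candidate $v_j$ approves is $d_j$, so every JR committee contains $d_2,\dots,d_k$. These are $k-1$ seats, which leaves exactly one seat. That seat can hold at most one quality-$1$ candidate, so $h(W)\le 1+(k-1)\varepsilon$. The committee $\{c_1,d_2,\dots,d_k\}$ satisfies JR and attains this value. In contrast, $W^\star$ contains $c_1,\dots,c_{k-1}$ and one candidate of quality $\varepsilon$, so $h(W^\star)=k-1+\varepsilon$.

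The resulting ratio is $(k-1+\varepsilon)/(1+(k-1)\varepsilon)$, which tends to $k-1$, not $k$. The construction needs a tweak to reach $k$.

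\textbf{Fixing the ratio.} Add one more quality-$1$ candidate $c_k$ approved by $v_1$, so there are $k$ such candidates in total. Now $h(W^\star)=k$. Every JR committee still contains $d_2,\dots,d_k$ and has one free seat, so its quality is at most $1+(k-1)\varepsilon$. The ratio becomes $k/(1+(k-1)\varepsilon)$, which tends to $k$ as $\varepsilon\to 0$. This proves that the quality-price of representation is at least $k$.

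Because the bound is a supremum, the limit suffices and no single instance has to attain the ratio exactly. Whether the paper intends "at least $k$" to follow from a limiting argument or from a strict instance is a minor presentational point. One could also show the supremum equals $k$ exactly, but the statement only asks for the lower bound.
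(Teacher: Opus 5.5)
Your final construction (after adding $c_k$) is exactly the paper's instance: $k$ quality-$1$ candidates approved only by $v_1$, plus a private quality-$\varepsilon$ candidate $d_j$ for each voter. Your argument is also the paper's: $d_2,\dots,d_k$ are forced, one seat is left, and the ratio $k/(1+(k-1)\varepsilon)\to k$, so the proposal is correct. Reducing to JR via the implications from threshold-JR and value-JR is a harmless streamlining of the paper's direct claim, and you can simply delete the intermediate $(k-1)$-candidate attempt.
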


\begin{profileproof}[8][0.48\linewidth]{%
    \setlength{\tabcolsep}{2.2pt}%
    \renewcommand{\arraystretch}{1.08}%
    \begin{tabular}{
      @{}c
      >{\columncolor{gray!20}}c
      >{\columncolor{gray!20}}c
      >{\columncolor{gray!20}}c
      >{\columncolor{gray!20}}c
      cccc@{}
    }
        \toprule
        & $a_1$
        & $a_2$
        & $\cdots$
        & $a_k$
        & $d_1$
        & $d_2$
        & $\cdots$
        & $d_k$
        \\
        quality
        & $1$
        & $1$
        & $\cdots$
        & $1$
        & $\varepsilon$
        & $\varepsilon$
        & $\cdots$
        & $\varepsilon$
        \\
        \midrule
        $v_1$
        & \checkmark
        & \checkmark
        & $\cdots$
        & \checkmark
        & \checkmark
        & & &
        \\
        $v_2$
        & & & & &
        & \checkmark
        & &
        \\
        $\vdots$
        & & & & &
        & & $\ddots$
        &
        \\
        $v_k$
        & & & & &
        & & &
        $\checkmark$\\
        \bottomrule
    \end{tabular}%
}
    Consider the profile on the right.
    Among all committees satisfying either JR, value-JR, or threshold-JR, the quality-maximizing committee is $W = \{a_1\}\cup \{d_2,\dots,d_k\}$, with $h(W) = 1 + (k-1) \varepsilon$. However, the  committee with highest summed quality is $W^\star = \{a_1,\dots,a_k\}$ with $h(W^\star) = k$. For arbitrarily small $\varepsilon$, we therefore get $\frac {h(W^\star)}{h(W)} = \frac {k}{1+ (k-1) \varepsilon} \xrightarrow{\varepsilon \to 0} k$.
\end{profileproof}

Note that when allowing for unsupported candidates, the quality-price becomes unbounded, as can be seen by removing all approvals for the candidates $a_1, \dots,a_k$ in the profile from \Cref{thm:unscaled-cost}.

Remarkably, it turns out that many of our reciprocal proportionality notions escape this incompatibility and allow for a constant-factor quality
guarantee. The intuition is that if reciprocal proportionality forces the inclusion of low-quality candidates, only a few of them will need to be included in the committee, so that we can use the remaining seats for candidates with high quality. Recall that for a given target size $k$, we say that a set of candidates $X \subseteq C$ of size at most $k$ is reciprocally affordable, if $\sum_{c\in X} \frac 1 {h_c} \le k$. We say that $X^+$ is a \emph{quality-greedy completion} of $X$, if $X\subseteq X^+$, $X^+$ is of size $k$, and $h(X^+)$ is maximal among all size-$k$ supersets of $X$. We  obtain the following result.

\begin{restatable}{theorem}{QualityPriceOfRepresentation} \label{thm:quality_approximation}
    Let $W^\star$ have optimal quality $h(W^\star)$ in a given instance $\mathcal I$.
    If $X$ is reciprocally affordable, then any quality-greedy completion $X^+$ satisfies $h(X^+) \ge \frac{3}{4} h(W^\star)$.
\end{restatable}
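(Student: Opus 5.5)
The plan is to compare $X^+$ with an explicit competitor built from $X$ and $W^\star$, and then to use reciprocal affordability of $X$ to show that the low-quality candidates in $X$ cannot cost too much. Write $x \coloneqq |X|$ and $H \coloneqq h(W^\star)$. Because $X^+$ maximizes quality among size-$k$ supersets of $X$, it is $X$ together with the $k-x$ highest-quality candidates of $C\setminus X$. So it suffices to exhibit \emph{some} size-$k$ superset $Y\supseteq X$ with $h(Y)\ge \frac34 H$. The natural choice is $Y = X\cup R$, where $R\subseteq W^\star\setminus X$ consists of the $k-x$ best candidates of $W^\star\setminus X$; this is possible since $|W^\star\setminus X|\ge k-x$. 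Writing $D \coloneqq W^\star\setminus R$ for the set of displaced optimal candidates, we have $|D| = x$ and
\[
h(X^+) \;\ge\; h(X) + H - h(D).
\]
Hence the goal reduces to $h(D) - h(X) \le \frac14 H$.

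The key inequality is a pointwise AM--GM (tangent-line) bound. For any $c\in X$ and any $w$ with quality $h_w$, we have $h_c + \frac{h_w^2}{4h_c}\ge h_w$, which is equivalent to $\bigl(h_c - \frac{h_w}{2}\bigr)^2\ge 0$. Fix a bijection $\pi\colon X\to D$ that is the identity on $X\cap W^\star\subseteq D$; note that $D \supseteq X\cap W^\star$ because $R\cap X=\emptyset$. Summing the inequality over $c\in X$ gives
\[
h(D)-h(X)\;\le\;\frac14\sum_{c\in X}\frac{h_{\pi(c)}^2}{h_c}.
\]
If every $h_w\le 1$ and we use $\sum_{c\in X}\frac1{h_c}\le k$, this yields $h(D)-h(X)\le \frac k4$. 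That already proves the claim whenever $H=k$, and more generally whenever $W^\star$ consists of quality-$1$ candidates. This also indicates that $\frac34$ is tight: take $x=k/2$ candidates of quality $\frac12$ that displace quality-$1$ candidates.

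The main obstacle is the case $H<k$, where the crude bound $h_w\le 1$ is too weak. In this case I would exploit the remaining freedom in two ways. First, the displaced elements $D\setminus X$ are the \emph{lowest}-quality elements of $W^\star\setminus X$, so their qualities are at most the average $H/k$-type levels of $W^\star$ rather than $1$. Second, candidates $c\in X$ with $h_c$ larger than the $k$-th best quality are in $W^\star$ up to ties, so they are matched to themselves at no cost. Concretely, I would choose $\pi$ so that it pairs the small-quality elements of $X$ with the smallest displaced elements. I would then try to bound $\sum_{c} h_{\pi(c)}^2/h_c$ by $\sum_c h_{\pi(c)}\cdot\bigl(h_{\pi(c)}/h_c\bigr)$ together with a rearrangement/Cauchy--Schwarz argument against $\sum_c 1/h_c\le k$ and $\sum_{w\in W^\star}h_w = H$, aiming for $\sum_c h_{\pi(c)}^2/h_c\le H$.

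If this direct route fails, the fallback is a two-bound argument. The first bound is $h(X^+)\ge \frac{x}{k}h(X)+(1-\frac xk)H$, obtained by averaging over $W^\star\setminus X$. The second is the bound $h(X)+H-h(D)$ above. Together with $h(X)\ge x^2/k$, which follows from AM--HM, I would take a suitable convex combination of the two bounds as a function of $a=x/k$ and $k/H$, and tighten whichever bound is loose via the ordering of $D$.
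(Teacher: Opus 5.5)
\textbf{Gap: the case $H<k$ is not proved.} Your reduction of the goal to $h(D)-h(X)\le\frac14 H$ is correct, and so is the tangent-line inequality. But with $h_w\le 1$ it only gives $h(X^+)\ge H-\frac{k}{4}$, which equals $\frac34 H$ only when $H=k$. The case $H<k$ is the actual content of the theorem, and there you only describe what you would try.

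The fallback, as stated, does not close this case:
\begin{itemize}
\item Your first bound $h(X^+)\ge \frac{x}{k}h(X)+(1-\frac{x}{k})H$ comes from charging all of $h(X)$ against $W^\star$ instead of only $h(X\cap W^\star)$.
\item Combined with $h(X)\ge x^2/k$ and the crude $h(D)\le x$, the two bounds are too weak. For $x=0.4k$, $H=\frac{2}{3}k$ and $h(X)=x^2/k$, they give only $0.696H$ and $0.64H$.
\item So no convex combination of them reaches $\frac34 H$.
\end{itemize}

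\textbf{The missing idea: separate the overlap first, then use the ordering of $D$ as an average.} Delete $\hat X=X\cap W^\star$ and replace $k$ by $k-|\hat X|$. This is harmless because:
\begin{itemize}
\item reciprocal affordability is preserved, since every deleted candidate has $1/h_c\ge 1$;
\item the greedy part $X^+\setminus X$ is unchanged, and $W^\star\setminus\hat X$ remains optimal;
\item since $h(X^+)\le H$, removing $h(\hat X)$ from numerator and denominator can only lower the ratio.
\end{itemize}
Once $X\cap W^\star=\emptyset$, your $D$ is exactly the set of the $x$ lowest-quality members of $W^\star$, so $h(D)\le\frac{x}{k}H$. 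Together with AM--HM this gives
\[
h(D)-h(X)\;\le\;\frac{x}{k}H-\frac{x^2}{k}\;\le\;\frac{H^2}{4k}\;\le\;\frac{H}{4}.
\]
This is the paper's argument. Note that $H\le k$ enters only in the last step, so small $H$ helps rather than hurts; $H<k$ is the easy case, not the obstacle.

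Your pointwise route can also be completed along the lines you sketch, but you have not done it. After the same reduction, pair $X$ and $D$ in the same increasing order of quality. Chebyshev's sum inequality then gives $\sum_c h_{\pi(c)}^2/h_c\le\frac{k}{x}\sum_{w\in D}h_w^2$. Using $h(R)\ge(k-x)\max_{w\in D}h_w$ and $h_w\le 1$, this is at most $H$. As submitted, that step is only an aim, not a proof.
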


\begin{proof}[Proof sketch]
    Let $t^\star = \lvert X \cap W^\star \rvert$ be the number of candidates shared between the reciprocally affordable and the optimal committee. Intuitively, a large $t^\star$ results in a high approximation ratio, as some of the optimal candidates are already in the committee, even before the greedy completion. In this proof sketch, we therefore consider the $t^\star = 0$ ``extreme'' case and additionally, for ease of exposition, assume unique qualities for all candidates in $C$, i.e., $h_c\neq h_{c'}$ for all $c\neq c'\in C$.

    Let $t = |X|$. Now, as $X \cap W^\star = \emptyset$, the quality-greedy completion $X^+ \setminus X$ must contain exactly the $(k-t)$ candidates with the highest qualities from $W^\star$. These candidates need to have a total quality of at least $(k-t)$ times the average quality of $W^\star$, i.e., $h(X^+ \setminus X) \ge \frac{k-t}{k} h(W^\star)$.
    Furthermore, by the AM-HM-inequality and the reciprocal affordability of $X$, we can bound the total quality of $X$ by $h(X) = \sum_{c \in X} h_c \ge \frac{t^2}{\sum_{c \in X} \frac{1}{h_c}} \ge \frac{t^2}{k}$.
    The total quality ratio can then be expressed as
    $$
        f(t) := \frac{h(X^+)}{h(W^\star)} \ge \frac{\frac{t^2}{k} + \frac{k-t}{k} h(W^\star)}{h(W^\star)} = 1 + \frac{t^2 - t \cdot h(W^\star)}{k \cdot h(W^\star)}.
    $$
    We can find the minimum of that function at $t = \frac{h(W^\star)}{2}$, giving the final lower bound of
    $$
        f\Big(\frac{h(W^\star)}{2}\Big) \ge 1 + \frac{\big(\frac{h(W^\star)}{2}\big)^2 - \frac{h(W^\star)}{2} \cdot h(W^\star)}{k \cdot h(W^\star)} = 1 - \frac{h(W^\star)}{4k} \ge 1 - \frac{k}{4k} = \frac{3}{4}.
    $$
\end{proof}

In \Cref{app:quality-price_of_representation}, we define reciprocal versions of MES, EAR-rules and GCR. We prove that these satisfy reciprocal EJR, reciprocal threshold-PJR+ and reciprocal value-EJR, respectively, and always output quality-greedy completions of reciprocally affordable candidate sets, thus achieving the optimal $\frac{3}{4}$ quality approximation. The quality-prices of these three notions and all implied notions are therefore at most $\frac{4}{3}$. This bound is tight, as shown by the following statement (recall that all considered reciprocal notions imply  reciprocal JR):
\begin{restatable}{proposition}{recJRupperbound}\label{obs:rec_JR_upper_bound}
    The quality-price of reciprocal JR is at least $\frac{4}{3}$.
\end{restatable}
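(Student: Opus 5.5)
The plan is to make the extremal case from the proof sketch of \Cref{thm:quality_approximation} exact. The minimum there is attained when $h(W^\star)=k$ and the forced set $X$ consists of $t=k/2$ candidates of quality about $\nicefrac{1}{2}$. Such a set is exactly reciprocally affordable, since $\sum_{c\in X}\nicefrac{1}{h_c}\approx k$. So I will build a family of instances in which reciprocal JR forces $k/2$ candidates of quality just above $\nicefrac12$ into every admissible committee. Meanwhile, $k$ supported quality-$1$ candidates remain available.

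\textbf{Construction.} Fix an even $k$, a small $\delta>0$ and a large integer $g$. The candidates are $a_1,\dots,a_k$ with $h_{a_i}=1$ and $d_1,\dots,d_{k/2}$ with $h_{d_j}=\nicefrac12+\delta$. The voters are as follows.
\begin{itemize}
  \item For each $j\in[k/2]$ there is a block $V_j$ of $g$ voters, each with approval set $\{d_j\}$.
  \item There is one extra voter $v_0$ with approval set $\{a_1,\dots,a_k\}$.
\end{itemize}
Thus $n=\tfrac{k}{2}g+1$, and every candidate has a supporter, as \Cref{pop} requires. The optimal committee is $W^\star=\{a_1,\dots,a_k\}$, with $h(W^\star)=k$.

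\textbf{Reciprocal JR forces every $d_j$.} The group $S=V_j$ has $\appcut{S}=\{d_j\}$, so it reciprocally affords a single candidate iff
\[
g\ \ge\ \frac{1}{\nicefrac12+\delta}\cdot\frac{n}{k}\ =\ \frac{g/2+1/k}{\nicefrac12+\delta}.
\]
This holds for every fixed $\delta>0$ once $g$ is large enough. Since $\appjoin{V_j}=\{d_j\}$, reciprocal JR then forces $d_j\in W$ for every $j$.

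It remains to check that a reciprocal JR committee exists. Any voter set with nonempty common approval set lies inside some $V_j$ or equals $\{v_0\}$, because blocks and $v_0$ have disjoint ballots. The committee $\{d_1,\dots,d_{k/2},a_1,\dots,a_{k/2}\}$ therefore represents every voter and so satisfies reciprocal JR.

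\textbf{Bounding the ratio.} Every reciprocal JR committee contains all $k/2$ candidates $d_j$, so it has at most $k/2$ further seats, each of quality at most $1$. Hence
\[
h(W)\ \le\ \tfrac{k}{2}\bigl(\tfrac12+\delta\bigr)+\tfrac{k}{2}\ =\ \tfrac{3k}{4}+\tfrac{k\delta}{2}.
\]
The ratio $h(W^\star)/\max_W h(W)$ is therefore at least $\frac{k}{3k/4+k\delta/2}$, which tends to $\nicefrac43$ as $\delta\to 0$, so the supremum is at least $\nicefrac43$.

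The only delicate step is the choice of $g$ and $\delta$. The single voter $v_0$ inflates $n$ slightly, so with $\delta=0$ a block would fall just short of affording $d_j$. This is why $\delta$ must be positive and $g$ chosen large relative to $\nicefrac{1}{\delta}$, rather than giving the $d_j$ quality exactly $\nicefrac12$.
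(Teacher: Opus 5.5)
Your proposal is correct and is essentially the paper's argument: the paper's proof uses exactly your instance in the special case $k=2$, with one block of $\lceil \frac{1}{2\varepsilon}\rceil$ voters approving a single candidate of quality $\frac{1}{2}+\varepsilon$ and one extra voter approving two quality-$1$ candidates. Your version with $k/2$ blocks and $g \ge \frac{1}{k\delta}$ is a direct generalization showing the ratio tends to $\frac{4}{3}$ for every even $k$, whereas the paper only needs $k=2$ to bound the supremum.
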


Analogously to \Cref{def:quality_price_of_representation}, we consider in \Cref{app:welfare_tradeoffs} the trade-off between proportional representation and the overall welfare of the committee, i.e., the sum $\sum_{i \in V} h(A_i \cap W)$.
Here, a notable difference between value- and threshold- notions emerges: due to its restrictive ``lexicographic'' nature, threshold-JR cannot guarantee more than a $\nicefrac{1}{k}$-fraction of the optimal welfare, even when all qualities are arbitrarily close to $1$.
In contrast,  a value-JR committee with welfare approximation $\nicefrac{h^-}{8}\cdot \nicefrac{1}{\sqrt{k}}$ always exists and can be computed in polynomial time, where $h^-$ denotes the minimum candidate quality. This suggests that while threshold notions provide stronger proportionality guarantees, these stronger guarantees come at the cost of other desiderata.

\section{Conclusion}
\label{sec:conclusion}
We initiated the study of approval-based multiwinner voting with candidate quality scores and studied notions of  proportionality in this new setting, proposing two distinct ways to define quality-aware proportionality notions through a threshold- and a value-based family of JR, PJR, and EJR adaptations. While we showed that all satisfiable threshold notions imply their value-based counterparts, their stronger demands are not always satisfiable at the EJR level. We provided rules, most notably threshold-MES and value-GCR, which achieve the strongest possible combinations of proportionality guarantees under our notions. On the negative side, we showed that computing a value-EJR committee is NP-hard, thereby also resolving open questions on the complexity of EJR in participatory budgeting with additive utilities and unit costs. As maximizing the quality of the committee is a natural desideratum in many application settings, we analyzed the trade-off between our notions and total quality. We proposed reciprocal versions of our axioms that remain compatible with quality maximization, allowing for an optimal constant-factor approximation of the total quality while maintaining representation guarantees at no additional computational cost.

There are several directions for future work. First, it would be interesting to study further quality-aware axioms for voting rules beyond quality-aware proportionality notions, e.g., capturing that increasing a candidate's quality should never remove them from the committee. Second, since candidate qualities are externally assessed, studying the robustness of rules and axioms with respect to noisy scores is well-motivated. Lastly, there are various application domains, e.g., information retrieval and participatory budgeting, in which our model could be applied and empirically studied.

\paragraph{Acknowledgements}
Parts of this work were developed with the assistance of large language models. Specifically, the proof of \Cref{prop:minRelWelfareBound} was developed in collaboration with Claude Fable 5 (Anthropic). Initial proof ideas and versions for \Cref{thm:value-EJR-hard}, \Cref{thm:value-JR_NP_Verification}, and \Cref{prop:threshold-EJR_existence_hardness} were obtained by querying ChatGPT 5.6 Sol (OpenAI); the authors verified and substantially rewrote these proofs, including drastic simplifications in the case of \Cref{thm:value-EJR-hard}. All results were checked by the authors, who take full responsibility for the content and correctness of the paper.

\newpage

\appendix
\section{Hardness of Finding a Value-EJR Committee}
In this section, we show the following theorem. We note that our construction is similar to a construction by \citet{baharav2026complexityjustifiedrepresentationadditive}. Since the original proof idea was found by ChatGPT Sol 5.6, it is unclear whether the LLM had access to and was inspired by the paper of \citeauthor{baharav2026complexityjustifiedrepresentationadditive} while finding the proof idea.
\valueEJRExistsIsHard*

\begin{proof}
    We reduce from \textsc{Balanced Biclique}: given a bipartite graph
$G=(L\cup R,E)$ and an integer $t$, decide whether $G$
contains a biclique $K_{t,t}=(I,J)$ of size $t$, which is a pair of sets $I\subseteq L$ and $J\subseteq R$ with
$|I|=|J|=t$ so that $\{u,v\}\in E$ for all $u\in I$ and $v\in J$. This problem is
NP-complete~\citep{GareyJohnson1979}. We assume without loss of generality that $2\le t\le\min\{|L|,|R|\}$. We write $N_G(v) \coloneqq \{u\in L\cup R \mid \{v,u\} \in E\}$ for the (open)
neighborhood of $v\in L\cup R$ in $G$.

\paragraph{Construction}
Given an instance $(G=(L\cup R,E), t)$ of \textsc{Balanced Biclique}, we construct a multiwinner election $f(G,t)$ with candidate qualities as follows.
Fix
\[
  B\coloneqq(t-1)\bigl(|L|-t\bigr)+1,
  \quad
  \ell^{*}\coloneqq Bt,
  \quad
  D\coloneqq\ell^{*}-t,
  \quad
  \beta\coloneqq\frac{2}{2\ell^{*}-1}.
\]
The voters of $f(G,t)$ are $V\coloneqq L\cup U$, where $U$ is a set of
$D$ \emph{filler} voters disjoint from $L$; we refer to $L$ as the \emph{vertex} voters.
The set $C$ consists of two kinds of candidates:
\begin{itemize}[nosep]
  \item for every $i\in V$, we add a \emph{private} candidate $p_i$ of quality $1$,
        approved by $i$ and by nobody else;
  \item for
        every $r\in R$, we add $B$ \emph{block} candidates $c_{r,1},\dots,c_{r,B}$ of quality $\beta$, each approved exactly by the voters in $N_G(r)\cup U$.\footnote{Note that $t\ge2$ implies $\ell^{*}\ge2$ and hence $\beta\le\tfrac23$, so all qualities lie in $(0,1]$. Since $t\leq |L|$, we have $B\leq |L|^2+1\leq |L|^3$, so $\ell^*\leq|L|^4$. Hence, encoding the numerator $2$ and the denominator $2\ell^{*}-1\le2|L|^4-1$ of $\beta$ in unary requires only polynomially many bits in the input size.}
\end{itemize}
The target size is
\[
  k\coloneqq n=|L|+D=\ell^{*}+\bigl(|L|-t\bigr).
\]
Let $W_0\coloneqq\{p_i\mid i\in V\}$ be the committee consisting of all private candidates, and observe that $|W_0|=n=k$. All voters $i\in V$ have $h(A_i\cap W_0)=1$.
Since $n=k$, a group $S\subseteq V$ is $\ell$-large exactly when $\lvert S \rvert\ge\ell$.

The idea behind the construction is that $W_0$ is a---and in fact the unique---value-EJR committee if and only if there is no $K_{t,t}$ in $G$: If there is a biclique $(I,J)$, then the vertex voters from $I$ together with all filler voters $U$ form a cohesive group with respect to the block candidates corresponding to $J$, and this group has an entitlement strictly larger than one, preventing $W_0$ from satisfying value-EJR.
Thus, to decide the given \textsc{Balanced Biclique} instance we can apply the construction, run an algorithm for our problem and return No if and only if $W_0$ is returned by our algorithm.

\begin{lemma}\label{lem:vejr-baseline}
If $G$ contains a biclique $K_{t,t}=(I,J)$, then $W_0$ violates value-EJR in
$f(G,t)$.
\end{lemma}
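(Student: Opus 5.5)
The plan is to exhibit an explicit value-EJR violation witness for $W_0$ built from the biclique. Given a biclique $(I,J)$ with $I\subseteq L$, $J\subseteq R$ and $|I|=|J|=t$, I take the voter group $S\coloneqq I\cup U$ and the level $\ell\coloneqq\ell^{*}$. I then check three things: $S$ is $\ell^{*}$-large, $S$ commonly approves at least $\ell^{*}$ candidates, and the resulting entitlement strictly exceeds what every voter gets from $W_0$.

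\emph{Largeness.} Since $I\subseteq L$ and $U$ is disjoint from $L$, we have $|S|=t+D=\ell^{*}$. As noted in the construction, $n=k$, so $\ell$-largeness simply means $|S|\ge\ell$; hence $S$ is $\ell^{*}$-large. We also need $\ell^{*}\in[k]$, which holds because $k=\ell^{*}+(|L|-t)\ge\ell^{*}$ by $t\le|L|$.

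\emph{Cohesiveness.} Fix $r\in J$. Every $u\in I$ is adjacent to $r$ because $(I,J)$ is a biclique, so $I\subseteq N_G(r)$. Each block candidate $c_{r,j}$ is approved exactly by $N_G(r)\cup U\supseteq S$. Therefore all $Bt=\ell^{*}$ block candidates $\{c_{r,j}\mid r\in J,\ j\in[B]\}$ lie in $\appcut{S}$, so $|\appcut{S}|\ge\ell^{*}$ and $S$ is $\ell^{*}$-cohesive.

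\emph{Entitlement versus representation.} Choosing these $\ell^{*}$ block candidates, each of quality $\beta$, gives
\[
h(\appcut{S}\mid\ell^{*})\ \ge\ \ell^{*}\beta=\frac{2\ell^{*}}{2\ell^{*}-1}>1 .
\]
The value could be larger if $\appcut{S}$ contains further candidates, but a lower bound suffices. On the other side, every voter $i\in S$ approves exactly one candidate of $W_0$, namely $p_i$, so $h(A_i\cap W_0)=1$. Hence $\max_{i\in S}h(A_i\cap W_0)=1<h(\appcut{S}\mid\ell^{*})$, and $(S,\ell^{*})$ witnesses a value-EJR violation, which proves the statement.

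No step here is a genuine obstacle. The only points needing care are that $\ell^{*}\le k$ and that the strict inequality $\ell^{*}\beta>1$ holds, and both follow directly from the parameter choices.
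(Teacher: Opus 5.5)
Your proof is correct and follows the paper's argument: the witness group is $S = I\cup U$ at level $\ell^{*}$, and the $Bt$ block candidates of $J$ show that $S$ is $\ell^{*}$-cohesive. The paper then concludes with the same comparison, $\ell^{*}\beta > 1 = h(A_i\cap W_0)$, and your explicit check that $\ell^{*}\le k$ is a detail the paper leaves implicit.
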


\begin{proof}
Set $S\coloneqq I\cup U$ and
$T\coloneqq\{c_{u,\alpha}\mid u\in J,\ \alpha\in[B]\}$. Every voter of $S$ approves every
candidate of $T$: the filler voters $U$ approve all block candidates, and each $v\in I\subseteq L$
is a neighbour of each $u\in J \subseteq R$. Hence $T\subseteq \appcut{S}$ and
$|\appcut{S}|\ge|T|=Bt=\ell^{*}$. Moreover, $\lvert S \rvert=t+D=\ell^{*}$, so $S$ is
$\ell^{*}$-large and therefore $\ell^{*}$-cohesive. Its entitlement is
\[
  h(\appcut{S}\mid\ell^{*})\ \ge\ h(T)\ =\ \ell^{*}\beta\ >\ 1,
\]
since $\ell^{*}\beta=\frac{2\ell^{*}}{2\ell^{*}-1}>1$. As $h(A_i\cap W_0)=1$ for every $i\in S$, no member of $S$ meets the entitlement, so $W_0$ violates value-EJR.
\end{proof}

\begin{lemma}\label{lem:vejr-unique}
If $G$ does not contain a biclique $K_{t,t}$, then $W_0$ is the unique value-EJR committee of
$f(G,t)$.
\end{lemma}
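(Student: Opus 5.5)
The plan is to prove the two halves separately: first that $W_0$ satisfies value-EJR when $G$ has no $K_{t,t}$, and then that every value-EJR committee must equal $W_0$. Both halves use the same structural fact. Private candidates are approved by a single voter, so any group $S$ with $\lvert S\rvert\ge 2$ has $\appcut{S}$ consisting only of block candidates. Writing $J_S\subseteq R$ for the common neighbourhood of $S\cap L$ (all of $R$ if $S\cap L=\emptyset$), we get $\lvert\appcut{S}\rvert = B\lvert J_S\rvert$. Recall also that $\ell$-large means $\lvert S\rvert\ge \ell$, since $n=k$.

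\emph{$W_0$ satisfies value-EJR.} Let $S$ be $\ell$-cohesive.
\begin{itemize}
\item If $\lvert S\rvert=1$, then $\ell=1$ and the entitlement is at most $1=h(A_i\cap W_0)$.
\item If $\lvert S\rvert\ge2$, the entitlement is $h(\appcut{S}\mid\ell)=\ell\beta$ (up to capping by $\lvert\appcut{S}\rvert$). We have $\ell\beta\le 1$ iff $\ell\le \ell^*-\tfrac12$, i.e.\ iff $\ell<\ell^*$. So the only danger is $\ell\ge \ell^*$, which needs $\lvert S\rvert\ge \ell^*=Bt$ and $B\lvert J_S\rvert\ge \ell^*$.
\end{itemize}
In that dangerous case, $\lvert J_S\rvert\ge t$. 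Moreover $\lvert S\cap L\rvert\ge \ell^*-D=t$, because $\lvert U\rvert=D<\ell^*$ forces $S$ to contain vertex voters. Then $t$ vertices of $S\cap L$ together with $t$ vertices of $J_S$ form a $K_{t,t}$, a contradiction. Hence every voter's value $1$ meets every entitlement.

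\emph{Uniqueness.} Let $W$ satisfy value-EJR and let $M=\{i\in V\mid p_i\notin W\}$; suppose $M\neq\emptyset$.
\begin{enumerate}
\item Each singleton $\{i\}$ is $1$-cohesive with entitlement $h_{p_i}=1$. A voter $i\in M$ can reach value $1$ only through block candidates of quality $\beta$, so it needs at least $\lceil 1/\beta\rceil=\lceil \ell^*-\tfrac12\rceil=\ell^*$ approved selected block candidates.
\item Since $\lvert W\rvert=k=n$ and exactly $\lvert M\rvert$ private candidates are missing, $W$ contains exactly $\lvert M\rvert\le n$ block candidates. Step 1 then gives $\lvert M\rvert\ge \ell^*$, and so $\lvert M\cap L\rvert\ge \ell^*-D=t$.
\item Pick $t$ vertex voters $v_1,\dots,v_t\in M\cap L$. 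Each approves at least $\ell^*$ of the at most $n=\ell^*+(\lvert L\rvert-t)$ selected block candidates, so each misses at most $\lvert L\rvert-t$ of them.
\item Starting from the $\ge\ell^*$ selected block candidates approved by $v_1$ and removing those missed by $v_2,\dots,v_t$, at least
\[
\ell^*-(t-1)(\lvert L\rvert-t)=Bt-(B-1)=B(t-1)+1
\]
selected block candidates are approved by all of $v_1,\dots,v_t$.
\item Each $r\in R$ contributes only $B$ block candidates, so by pigeonhole these common candidates come from at least $t$ distinct vertices $r$. Each such $r$ is adjacent to all $v_j$, giving a $K_{t,t}$, a contradiction.
\end{enumerate}
Hence $M=\emptyset$ and $W=W_0$.

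The main obstacle is the counting in steps 3–5. This is exactly where the choice $B=(t-1)(\lvert L\rvert-t)+1$ must make the pigeonhole bound strict. I would also double-check the rounding $\lceil 1/\beta\rceil=\ell^*$, and the edge case where $W$ has fewer block candidates than some voter needs.
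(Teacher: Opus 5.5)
Your proof is correct. The uniqueness half (steps 1--5) is the paper's argument almost verbatim. The singleton entitlement of $1$ forces every voter in $M$ to approve $\ell^*=\lceil 1/\beta\rceil$ selected block candidates. The number of selected block candidates equals $|M|$, so $|M|\ge\ell^*$, which leaves at least $t$ vertex voters in $M$. The intersection bound $B(t-1)+1$ together with pigeonhole then produces the biclique. Your ``start from $v_1$ and subtract the misses'' is the paper's union bound rearranged. The edge case you flag is already handled by your step 2, since a voter can approve at most $|M|$ selected block candidates.

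The real difference is how you show that $W_0$ itself satisfies value-EJR. The paper does not check this directly. It cites \Cref{thm:value-GCR_value-EJR_PJR} to get that some value-EJR committee exists; once every $W\neq W_0$ is excluded, that committee must be $W_0$. You verify it by hand instead:
\begin{itemize}
\item singletons are satisfied by their private candidate $p_i$;
\item any larger group has entitlement $\ell\beta$, which exceeds $1$ only when $\ell\ge\ell^*$;
\item such a group would have to contain $t$ vertex voters sharing $t$ common right-neighbours, i.e., a $K_{t,t}$.
\end{itemize}

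Your route is self-contained and does not depend on the (exponential-time) value-GCR existence result. It also makes explicit that the biclique groups from \Cref{lem:vejr-baseline} are the only possible threats to $W_0$, so you are effectively proving the exact converse of that lemma. The paper's route is shorter because it outsources existence to a general theorem.
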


\begin{proof}
By \Cref{thm:value-GCR_value-EJR_PJR}, the instance $f(G,t)$ has at least one value-EJR
committee, so it suffices to show that if $G$ does not contain a $K_{t,t}$ then no committee $W\neq W_0$ satisfies value-EJR. Suppose therefore,
for contradiction, that some $W\neq W_0$ satisfies value-EJR.
Let $Y\coloneqq W\cap \{c_{r,\alpha}\mid r\in R,\ \alpha\in[B]\}$ be the block candidates in $W$
and let $O\coloneqq\{i\in V\mid p_i\notin W\}$ be the voters whose private candidate has not been selected. As $n=k$ and $|W|=k$, we get $|O|=|Y|\eqqcolon x$, with $x>0$ because
$W\neq W_0$. For $i\in V$, let
$d(i)\coloneqq|\{c_{r,\alpha}\in Y\mid i\in N_G(r)\cup U\}|$ be the number of selected
block candidates that $i$ approves.

The following argument is structured in four steps. First, we show that every voter whose private candidate has not been selected needs to approve $\ell^*$ block candidates in $W$ (Claim~1). As $W\neq W_0$, at least one such voter needs to exist, which since $k=n$ in fact implies that there are at least $\ell^*$ voters whose private candidate has not been selected (Claim~2). We can then show (Claim~3) that there are at least $t$ vertex voters among them. In Claim~4, we use the existence of these $t$ vertex voters each approving $\ell^*$ block candidates in $W$ together with the fact that $x$ exceeds $\ell^*$ only by $(|L|-t)$ to conclude that the intersection of their approval sets needs to be large. This then implies that the intersection needs to contain block candidates corresponding to at least $t$ right vertices. Since by construction a vertex voter (corresponding to a left side vertex) approves a block candidate only if they are adjacent to the corresponding right side vertex, these vertices form a biclique of size $t$.

\smallskip
\noindent\emph{Claim 1: $d(i)\ge\ell^{*}$ for every $i\in O$.}
Intuitively stated, every voter whose private candidate is not selected needs to approve sufficiently many block candidates in $W$.
Fix $i\in O$. The singleton group $S=\{i\}$ is $1$-large and $1$-cohesive as $p_i\in A_i$, so $i$ has entitlement at least $h(\{p_i\})=1$. Since $p_i\notin W$, all of $i$'s utility comes from block candidates, i.e., it needs to hold that
$\beta\cdot d(i)\ge1$. As $d(i)$ is an integer,
\[
  d(i)\ \ge\ \Bigl\lceil\tfrac1\beta\Bigr\rceil
        \ =\ \Bigl\lceil\ell^{*}-\tfrac12\Bigr\rceil
        \ =\ \ell^{*}.
\]

\medskip
\noindent\emph{Claim 2: $x\ge\ell^{*}$.}
Since at least one voter has lost their private candidate, $W$ needs to contain at least $\ell^*$ block candidates. Since $|O|=x>0$, such a voter $i\in O$ exists. By Claim~1, we have $\ell^{*}\le d(i)$. Since the block candidates in $W$ approved by $i$ are a subset of $Y$, it follows that $d(i)\le|Y|=x$, so $x\geq \ell^*$.

\medskip
\noindent\emph{Claim 3: $|O\cap L|\ge t$.}
It holds that $|O|=x\geq \ell^*$ by Claim~2. Moreover, as $V$ consists only of the (disjoint) sets of filler voters $U$ and vertex voters $L$, we have $|O\cap U|+|O\cap L|=|O|\geq \ell^*$. As there are only $D=\ell^{*}-t$ filler voters $U$, we have $|O\cap L| \geq \ell^* -|O\cap U| \geq \ell^* -(\ell^{*}-t)=t$.

\medskip
\noindent\emph{Claim 4: $G$ contains a biclique $K_{t,t}$.}
By Claims~1 and~3, any $t$ distinct vertex voters in $O\cap L$ each approve at least $\ell^{*}$
of the $x$ block candidates in $W$. Intuitively, we will now show that since $x$ is sufficiently close to $\ell^*$, these voters all need to share more than $B(t-1)$ block candidates in $W$, implying that they share $t$ neighbors in the graph.

Formally, by Claim~3, there exist $t$ distinct voters $i_1,\dots,i_t\in O\cap L$. For $a\in [t]$, denote by $Y_a\coloneqq Y\cap A_{i_a}$ the block candidates in $W$ approved by $i_a$, so that $|Y_a|=d(i_a)\ge\ell^{*}$ by Claim~1.
We start by observing that as $Y_a\subseteq Y$ we have $|Y\setminus Y_a|=x-|Y_a|$ for all $a\in [t]$, and
$Y\setminus\bigcap_{a\in [t]} Y_a=\bigcup_{a\in [t]}(Y\setminus Y_a)$, so by the union bound
\[
  x-\Bigl|\bigcap_{a\in [t]}Y_a\Bigr|
  \ =\ \Bigl|\bigcup_{a\in [t]}(Y\setminus Y_a)\Bigr|
  \ \le\ \sum_{a\in [t]}\bigl(x-|Y_a|\bigr)
  \ =\ t\,x-\sum_{a\in [t]}|Y_a|,
\] which by rearranging gives the following lower bound on the intersection, $$\Bigl|\bigcap_{a\in [t]}Y_a\Bigr|\geq \sum_{a\in [t]}|Y_a|-(t-1)\,x.$$
We can further lower bound the right hand side, using $x=|O|\le|V|=|L|+D=\ell^{*}-t+|L|$,
$$\sum_{a\in [t]}|Y_a|-(t-1)\,x\ \ge\ \sum_{a\in [t]}|Y_a|-(t-1)\bigl(\ell^{*}-t+|L|\bigr).$$
Thus, as $\sum_{a\in [t]}|Y_a|\geq t\ell^*$, we have
\[
  \Bigl|\bigcap_{a\in [t]}Y_a\Bigr|
  \ \ge\ t\ell^{*}-(t-1)\bigl(\ell^{*}-t+|L|\bigr)
  \ =\ \ell^{*}-(t-1)\bigl(|L|-t\bigr)
  \ =\ Bt - (B-1)
  \ =\ B(t-1)+1.
\]
Note that each $r\in R$ contributes at most $B$ candidates to $Y$, while $\bigcap_{a\in [t]}Y_a\subseteq Y$ contains more than $B(t-1)$ candidates. By the pigeonhole principle, there are $t$ distinct vertices $r_1,\dots,r_t\in R$ such that for each $b\in [t]$ there is an $\alpha_b\in[B]$ with $c_{r_b,\alpha_b}\in\bigcap_{a\in [t]}Y_a$. In particular, for every $a\in [t]$, $c_{r_b,\alpha_b}\in Y_a$ and hence $c_{r_b,\alpha_b}$ is approved by $i_a$, which by construction implies $i_a\in N_G(r_b)$. Hence $\{i_a,r_b\}\in E$ for all $a,b\in[t]$, so $\{i_1,\dots,i_t\}$ and $\{r_1,\dots,r_t\}$ form a biclique $K_{t,t}$ of $G$, contradicting the assumption that $G$ contains no biclique of size $t$.
\end{proof}

\noindent

Assume now there is a polynomial-time algorithm that constructs a committee satisfying value-EJR for a given multiwinner election with quality scores.
Then, we can decide \textsc{Balanced Biclique} in polynomial time as follows.
Given an instance $(G,t)$ of \textsc{Balanced Biclique}, we construct the instance $f(G,t)$ in polynomial time and run our polynomial-time algorithm returning a committee $W$ on $f(G,t)$.
Answer
\textsc{yes} if and only if $W\neq W_0$. If $G$ has no $K_{t,t}$, then $W_0$ is
the only value-EJR committee by \Cref{lem:vejr-unique}, so $W=W_0$ and we answer
\textsc{no}. If $G$ has a $K_{t,t}$, then $W_0$ is not
value-EJR by \Cref{lem:vejr-baseline}, so $W\neq W_0$ and we answer
\textsc{yes}.
Since \textsc{Balanced Biclique} is NP-complete, the existence of such a polynomial-time algorithm implies $\mathrm{P}=\mathrm{NP}$.
\end{proof}

\section{Axiomatic Proportionality}\label{app:axiomatic_proportionality}

\subsection{Justified Representation}
We start by providing the characterizations of our JR-level notions.

\paragraph{Non-zero Representation} When there are no quality scores, a minimal requirement on party list instances is that every sufficiently large voter group is represented by some candidate. Formally, \emph{weak representation} \citep{delemazure2023strategyproofness} requires for every party list instance that every group of voters with $\lvert V_j\rvert \ge \frac{n}{k}$ and $C_j\neq \emptyset$ is represented by at least one candidate.

A natural way to include qualities is to choose these candidates best-in-slot: we say that a committee $W$ satisfies \emph{weak best-in-slot representation}, if  $W\cap \arg\max\{ h_c\mid c\in C_j \}\neq \emptyset$ whenever $\lvert V_j\rvert \ge \frac{n}{k}$ and $C_j\neq \emptyset$. From the voters' perspective, it may be that only the sum of scores matters. Formally, $W$ satisfies \emph{weak total representation}, if $h(W\cap C_j) \ge \max_{c\in C_j} h_c$ whenever $\lvert V_j\rvert \ge \frac{n}{k}$ and $C_j\neq \emptyset$.

We say that a proportionality notion $X$ satisfies weak (best-in-slot or total) representation, if all $W\in X(\mathcal I)$ satisfy weak (best-in-slot or total) representation whenever $\mathcal I$ is a party-list instance.
Clearly, weak best-in-slot representation implies weak total representation implies weak representation.
Indeed, our JR-level definitions can easily be seen to meet these requirements.
\begin{observation}
     JR satisfies weak representation, value-JR satisfies weak total representation, and
    threshold-JR satisfies weak best-in-slot representation.
\end{observation}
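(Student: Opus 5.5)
The plan is to fix a party-list instance $\mathcal I$ with parties $V_1,\dots,V_r$ and disjoint approval sets $C_1,\dots,C_r$. We then fix a party $j$ with $\lvert V_j\rvert \ge \frac nk$ and $C_j\neq\emptyset$, and take $S \coloneqq V_j$ as the witnessing group in each of the three definitions. Because all voters in $V_j$ have the same ballot, $\appcut{S}=\appjoin{S}=C_j$. In particular $\lvert\appcut{S}\rvert\ge 1$, so $S$ is $1$-large and $1$-cohesive. Each of the three claims then comes from reading off the required representation for this $S$.

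\emph{JR.} Applying JR to $S$ gives $\appjoin{S}\cap W = C_j\cap W\neq\emptyset$. This is exactly what weak representation asks for.

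\emph{Value-JR.} Applying value-JR to $S$ gives $h(C_j\cap W)\ge h(C_j\mid 1)$. We then only need to note that $h(C_j\mid 1)=\max_{c\in C_j}h_c$: since qualities are positive, the best subset of $C_j$ with at most one element is a single highest-quality candidate. This is exactly weak total representation.

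\emph{Threshold-JR.} Set $\tau\coloneqq\max_{c\in C_j}h_c$. Then $\appcut{S}^{\ge\tau}\neq\emptyset$, so threshold-JR yields a candidate $c\in C_j\cap W^{\ge \tau}$. Since $c\in C_j$, we also have $h_c\le\tau$, hence $h_c=\tau$ and $c\in\arg\max\{h_{c'}\mid c'\in C_j\}$. This is exactly weak best-in-slot representation.

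None of these steps is a real obstacle. The only points needing care are two small ones. First, we must check that the party itself (and not only some subgroup of it) qualifies as a $1$-large, $1$-cohesive group. Second, in the threshold case we must choose the cutoff equal to the party's top quality.
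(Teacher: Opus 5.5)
Your proposal is correct and matches the paper's approach: the paper only remarks that the JR-level notions "can easily be seen" to satisfy these axioms, and your argument is the direct check behind that remark. You take $S=V_j$ with $\appcut{S}=\appjoin{S}=C_j$, and in the threshold case you use the cutoff $\tau=\max_{c\in C_j}h_c$.
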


While there are many notions satisfying the weak representation axioms, remarkably, the JR-level definitions are canonical among these when we require the well-behavedness conditions used by \citet{DoPe26a}.

\begin{itemize}
    \item \emph{Independence of losers} requires that if $W$ is proportional and we remove some $c\notin W$ from the instance, then $W$ remains proportional.
    \item \emph{Robustness to fully satisfied voters} requires that if $W$ is proportional and some voter changes their ballot from $A_i$ to a subset of $A_i\cap W$, then $W$ remains proportional.
    \item \emph{Monotonicity} requires that if $W$ is proportional and some voter changes their ballot from $A_i$ to $A_i\cup \{c\}$ with $c\in W$, then $W$ remains proportional.
\end{itemize}

Indeed, with these three natural well-behavedness conditions, we see that the weak representation axioms can canonically be extended to correspond to our JR-level definitions.

\begin{theorem}
    Let $X$ be a proportionality notion satisfying independence of losers, robustness to fully satisfied voters, and monotonicity. The following statements hold.
    \begin{itemize}
        \item If $X$ satisfies weak representation, then $X\subseteq \JR$.
        \item If $X$ satisfies weak total representation, then $X\subseteq \valueJR$.
        \item If $X$ satisfies weak best-in-slot representation, then $X\subseteq \thresholdJR$.
    \end{itemize}
    Further, \JR, \valueJR, and \thresholdJR{} satisfy independence of losers, robustness to fully satisfied voters, and monotonicity, as well as the respective weak representation axiom.
\end{theorem}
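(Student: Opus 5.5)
The proof splits into two parts: the characterization direction (if $X$ satisfies the weak axiom and the three well-behavedness conditions, then $X$ is contained in the JR-level notion), and the verification that JR, value-JR and threshold-JR themselves satisfy the properties.

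\textbf{Characterization.} Take a committee $W\in X(\mathcal I)$ and suppose, for contradiction, that it violates the respective JR-level notion. The witness is a $1$-large group $S$ together with a candidate $c^\star\in\appcut{S}$, chosen as follows.
\begin{itemize}
\item For JR, $\appjoin{S}\cap W=\emptyset$.
\item For value-JR, $c^\star$ has maximal quality in $\appcut{S}$ and $h(\appjoin{S}\cap W)<h_{c^\star}$.
\item For threshold-JR, $h_{c^\star}\ge\tau$ and $\appjoin{S}\cap W^{\ge\tau}=\emptyset$.
\end{itemize}
We transform $\mathcal I$ into a party-list instance in which $W$ must remain in $X$, yet the weak axiom is violated. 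The transformation has three steps.
\begin{enumerate}
\item Use independence of losers to delete every candidate outside $W\cup\{c^\star\}$. In the threshold case, first note that $c^\star\notin W$, since otherwise $c^\star\in\appjoin{S}\cap W^{\ge\tau}$. Deletion shrinks ballots but keeps $c^\star$ commonly approved by $S$, and $W$ stays in $X$.
\item For each voter $i\notin S$, use robustness to fully satisfied voters to replace $A_i$ by $A_i\cap W$. Then $c^\star$ is approved only by voters in $S$.
\item Make the profile party-list. Let $W_S\coloneqq\appjoin{S}\cap W$. Using monotonicity, add every candidate of $W_S$ to every voter in $S$, so that all of $S$ has ballot $\{c^\star\}\cup W_S$. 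Then use robustness on the voters outside $S$: each such voter moves to $A_i\cap W\setminus W_S$, or to the empty set. Since all voters outside $S$ now approve only members of $W\setminus W_S$, we can finally make them approve $\emptyset$, which is a subset of $A_i\cap W$.
\end{enumerate}
The resulting instance is party-list: one party is $S$ with $C_S=\{c^\star\}\cup W_S$, and all remaining voters have empty ballots, so every other party's candidate set is empty. The party $S$ satisfies $|S|\ge n/k$; the number of voters $n$ is unchanged throughout.

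Now check that the weak axiom fails on this instance.
\begin{itemize}
\item For JR, $W_S=\emptyset$ and $c^\star\notin W$, so $S$ receives nothing.
\item For value-JR, $h(W\cap C_S)=h(W_S)<h_{c^\star}\le\max_{c\in C_S}h_c$. Adding candidates to the ballots cannot lower the maximum, so the weak total requirement is violated.
\item For threshold-JR, all of $W_S$ has quality $<\tau\le h_{c^\star}$ and $c^\star\notin W$, so $W$ contains no argmax candidate of $C_S$, violating weak best-in-slot representation.
\end{itemize}
In each case this contradicts $W\in X$.

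The main obstacle is the bookkeeping of step 3. Each modification must be legal under the exact form of the well-behavedness axioms. In particular, monotonicity only allows adding candidates from $W$, which is why only $W_S$, not $c^\star$, is ever added. Added candidates must also not break $c^\star$'s membership in $\appcut{S}$. I would write out this order of operations carefully.

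\textbf{Satisfaction of the properties.} The weak axioms were already observed above. For the well-behavedness conditions, check each notion directly.
\begin{itemize}
\item \emph{Independence of losers.} Removing a $c\notin W$ only shrinks every $\appcut{S}$, and $h(\appcut{S}\mid 1)$ and $\appcut{S}^{\ge\tau}$ only weaken, while the right-hand sides are unchanged.
\item \emph{Robustness to fully satisfied voters.} Suppose voter $i$ now has ballot $A_i'\subseteq A_i\cap W$, and consider a group $S$ containing $i$ whose new common set $\appcut{S}$ is nonempty. Then $\appcut{S}\subseteq A_i'\subseteq W$, so $\appcut{S}\subseteq\appjoin{S}\cap W$ and the group is satisfied: the best candidate of $\appcut{S}$, or one of quality at least $\tau$, is itself selected.
\item \emph{Monotonicity.} Adding $c\in W$ to voter $i$'s ballot can create new groups $S$ with $c\in\appcut{S}$. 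Every such group already has $c\in\appjoin{S}\cap W$, which suffices for JR. For value-JR, a new best common candidate is either $c$, which is selected, or was already common before the change. For threshold-JR at a threshold $\tau\le h_c$, the group is covered by $c$. In all other cases the demand existed before the change, and $\appjoin{S}$ only grew, so it remains satisfied.
\end{itemize}
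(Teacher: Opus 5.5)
Your proposal is correct and follows essentially the same route as the paper. Both argue by contraposition: independence of losers deletes all non-winners except the witness $c^\star$, monotonicity gives every voter in $S$ the ballot $\{c^\star\}\cup(\appjoin{S}\cap W)$, and robustness to fully satisfied voters neutralizes the voters outside $S$, yielding a party-list instance on which the respective weak axiom fails. The only differences are cosmetic: you send the outside voters to the empty ballot rather than to $W\setminus(\appjoin{S}\cap W)$ (or $W$ in the JR case), which is permitted since party-list profiles may have empty approval sets, and you spell out the well-behavedness checks that the paper dismisses as easy to verify.
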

\begin{proof}
It is easy to verify that $\JR, \valueJR, \thresholdJR$ satisfy the claimed properties.
We prove the remaining statements by contraposition. Let $X$ satisfy all three well-behavedness axioms. Our goal is to show that if $X$ does not refine one of the JR-level notions, then it violates the corresponding weak representation axiom.
\begin{itemize}
    \item First, let $W\in X(\mathcal I) \setminus \JR(\mathcal I)$. Then, there is $S\subseteq V$ that is $\ell$-cohesive such that $\appjoin{S} \cap W= \emptyset$. Take any $c\in \appcut{S}\setminus W$. Apply independence of losers to remove all candidates in $C\setminus W$ from the instance, except for $c$. Further, apply robustness to fully satisfied voters to let each $i\in V\setminus S$ approve of $A_i\cap W$ (i.e., potentially removing the approval of $c$), and then apply monotonicity so that $A_i\cap W$ becomes $W$ for all $i\in V\setminus S$. In total, we arrived at  a party-list profile in which $V_1 = S$, $V_2 = V\setminus S$. By assumption, weak representation is violated for $W$ in this resulting party-list instance $\mathcal I^*$. However, by our operations, $W\in X(\mathcal I^*)$.

    \item Next, let $W\in X(\mathcal I) \setminus \valueJR(\mathcal I)$. Then, there is $S\subseteq V$ that is $\ell$-cohesive such that $h(\appjoin{S} \cap W) < h(\appcut{S}\mid 1)$. Take any $c\in \arg\max\{h_d\mid d\in \appcut{S}\}$, then $c\notin W$. We proceed with similar operations as for JR, but one extra step inbetween: apply independence of losers to remove all candidates in $C\setminus W$ from the instance, except for $c$;
    monotonicity to let all voters in $S$ approve of $(\appjoin{S} \cap W) \cup\{c\}$;
    robustness to fully satisfied voters and monotonicity to let each $i\in V\setminus S$ approve of precisely $W \setminus (\appjoin{S} \cap W)$. In the resulting party-list instance $\mathcal I^*$ with $V_1 = S$, $C_1 = (\appjoin{S} \cap W) \cup\{c\}$, $V_2 = V\setminus S$, $C_2 = W \setminus (\appjoin{S} \cap W)$ we still have $h(W \cap C_1)= h(\appjoin{S} \cap W) < h(\appcut{S}\mid 1) \le h_{\mathcal I^*}(C_1\mid 1)$.
    Therefore, on instance $\mathcal I^*$, the committee $W$ violates weak total representation. However, since we only applied well-behaved profile changes, $W\in X(\mathcal I^*)$.

    \item Finally, let $W\in X(\mathcal I) \setminus \thresholdJR(\mathcal I)$. Then, there is $S\subseteq V$ and $\tau \in (0,1]$ such that $S$ is $\ell$-large and $\lvert \appcut{S}^{\ge \tau}\rvert \ge 1$, but $\lvert \appjoin{S} \cap W^{\ge \tau} \rvert = 0$.
    Take any $c\in \appcut{S}^{\ge \tau}\setminus W$, which exists as $\appcut{S}^{\ge \tau}\neq \emptyset$ and $\appcut{S}^{\ge \tau} \cap W \subseteq \appjoin{S} \cap W^{\ge \tau} = \emptyset$. We proceed with similar operations as for value-JR: apply independence of losers to remove all candidates in $C\setminus W$ from the instance, except for $c$;
    monotonicity to let all voters in $S$ approve of $(\appjoin{S} \cap W) \cup\{c\}$;
    robustness to fully satisfied voters and monotonicity to let each $i\in V\setminus S$ approve of precisely $W\setminus (\appjoin{S} \cap W)$. In the resulting party-list instance $\mathcal I^*$ with $V_1 = S$, $C_1 = (\appjoin{S} \cap W) \cup\{c\}$, $V_2 = V\setminus S$, $C_2 = W\setminus (\appjoin{S} \cap W)$ we still have $(W \cap C_1)^{\ge \tau} = \appjoin{S} \cap W^{\ge \tau} =\emptyset$.
    Therefore, on instance $\mathcal I^*$, the committee $W$ violates weak best-in-slot representation. However, since we only applied well-behaved profile changes, $W\in X(\mathcal I^*)$.\qedhere
\end{itemize}
\end{proof}

\paragraph{Proportional Representation} Without quality scores given, a broadly accepted approach to party list instances is \emph{lower quota}, requiring that each set of voters $V_j$ deserves representation by at least $q_j = \min(\lfloor \frac{\lvert V_j \rvert }{n} k \rfloor, \lvert C_j\rvert)$ candidates. We present two natural extensions of this:
firstly, we require that if $V_j$ is eligible for $\ell$ candidates, these are chosen best-in-slot. Formally, we say that a committee $W$ satisfies \emph{best-in-slot lower quota}, if for each $
V_j$, there exists $S_j \subseteq W \cap C_j$ of size $q_j$, such that no candidate in $C_j \setminus S_j$ has a strictly higher quality than any candidate in $S_j$.
Secondly, we can instead focus on the voter-utilities: if $V_j$ is eligible for $\ell$ candidates, the voters should be at least compensated with the achievable utility of $\ell$ candidates. Formally, we say that a committee $W$ satisfies \emph{total lower quota}, if for each $V_j$, we have that $h(W\cap C_j)\ge h(C_j\mid q_j)$.

\begin{observation}
Classic PJR and EJR satisfy lower quota. Value-PJR and -EJR satisfy total lower quota. Threshold-PJR and -EJR satisfy best-in-slot lower quota.
\end{observation}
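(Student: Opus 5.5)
The plan is to handle the six claims one by one. Each time I fix a party-list instance with parties $V_1,\dots,V_r$ and candidate sets $C_j$, and put $q_j=\min(\lfloor |V_j|k/n\rfloor,|C_j|)$. If $q_j=0$ there is nothing to show. Otherwise I apply the relevant axiom to the group $S=V_j$ with $\ell=q_j$. This choice fits the setup: $|V_j|\ge q_j\frac nk$ makes $S$ $q_j$-large, and in a party-list profile $\appcut{S}=\appjoin{S}=C_j=A_i$ for every $i\in S$, so $S$ is $q_j$-cohesive.

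For classic PJR and EJR, lower quota then follows immediately. PJR gives $|C_j\cap W|=|\appjoin{S}\cap W|\ge q_j$. EJR gives $\max_{i\in S}|A_i\cap W|=|C_j\cap W|\ge q_j$. The value versions work the same way. Since all voters in $V_j$ have approval set $C_j$, value-PJR and value-EJR both reduce to $h(W\cap C_j)\ge h(\appcut{S}\mid q_j)=h(C_j\mid q_j)$, which is exactly total lower quota. Finally, threshold-EJR implies threshold-PJR on this instance, because the max over $i\in S$ of $|A_i\cap W^{\ge\tau}|$ is just $|C_j\cap W^{\ge\tau}|$. So it suffices to treat threshold-PJR for the last claim.

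For best-in-slot lower quota, sort $C_j$ by quality as $c_1,\dots,c_{|C_j|}$ with $h_{c_1}\ge h_{c_2}\ge\dots$. For each $s\le q_j$, take the cutoff $\tau_s=h_{c_s}$. Then $|\appcut{S}^{\ge\tau_s}|\ge s$, and $S$ is also $s$-large, so threshold-PJR applied with $\ell=s$ gives
\[
  |C_j\cap W^{\ge h_{c_s}}|\ge s\quad\text{for all } s\le q_j .
\]
I then build $S_j$ greedily: take the $q_j$ highest-quality candidates of $W\cap C_j$, which exist by the case $s=q_j$.

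The main obstacle is checking the best-in-slot condition, namely that no $d\in C_j\setminus S_j$ has strictly higher quality than some $e\in S_j$. I would argue by contradiction. Suppose such $d,e$ exist, and let $e$ be the $s$-th best element of $W\cap C_j$ (so $e$ is the $s$-th element of $S_j$). The displayed bound shows $h_e\ge h_{c_s}$, i.e.\ the $s$-th best selected candidate is pointwise at least as good as the $s$-th best candidate overall. Since $d\notin S_j$, we have either $d\in W$ ranked below position $q_j\ge s$, in which case $h_d\le h_e$ directly, or $d\notin W$. In the second case, $d$ together with the $s$ elements of $S_j$ of quality at least $h_e$ gives $s+1$ distinct candidates of quality at least $h_e$. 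I would then reason on ranks: at least $s$ candidates of $C_j$ have quality strictly above $h_e$, so $h_{c_s}>h_e$, contradicting $h_e\ge h_{c_s}$. Ties need care here, so in the write-up I would phrase this counting with strict versus non-strict quality levels.
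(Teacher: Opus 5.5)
Your overall route is the right one: apply each axiom to $S=V_j$ with $\ell=q_j$, and for the threshold notions use the cutoffs $h_{c_s}$. The paper states this observation without proof, and your threshold argument is the same pointwise-domination idea it uses to show that threshold-PJR implies value-PJR. The classical part, the value part, and the reduction from threshold-EJR to threshold-PJR are fine. The gap is in the final counting step of the best-in-slot argument.

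The $s$ elements $e_1,\dots,e_s=e$ of $S_j$ together with $d$ give only $s+1$ candidates of quality \emph{at least} $h_e$. If $h_{e_1}=\dots=h_{e_s}$, then $d$ may be the only one strictly above $h_e$. So ``at least $s$ candidates of $C_j$ strictly above $h_e$'' does not follow when $s$ is the rank of $e$. For such a tied $e$, the bound $h_{e_s}\ge h_{c_s}$ at that rank need not yield a contradiction; it appears only at a smaller rank. Saying ties ``need care'' is not enough, because the inference itself has to change.

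To fix this, take $s$ to be the \emph{first} position in the sorted $S_j$ at which quality $h_e$ occurs, that is, $s=1+|\{x\in W\cap C_j: h_x>h_e\}|$. All such $x$ lie in $S_j$, since $S_j$ is a top-$q_j$ subset of $W\cap C_j$. This gives $s\le q_j$ and $h_{e_s}=h_e$. Moreover, $e_1,\dots,e_{s-1}$ all have quality strictly above $h_e$. Together with $d\notin S_j$, they form $s$ distinct candidates of $C_j$ strictly above $h_e$. Hence $h_{c_s}>h_e=h_{e_s}\ge h_{c_s}$, which is the contradiction you want. This version also removes the need for your case split on $d\in W$ versus $d\notin W$.
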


As \citet{DoPe26a} have shown, the three well-behavedness axioms combined with lower quota characterize a refinement of PJR called PJR+ \citep{BrPe23a}.
We generalize this result:
\begin{theorem}
    Let $X$ satisfy independence of losers, robustness to fully satisfied voters, and monotonicity. If $X$ satisfies best-in-slot lower quota, $X$ refines threshold-PJR+. Vice versa, threshold-PJR+ satisfies all these axioms.
\end{theorem}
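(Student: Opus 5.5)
I will prove the two directions separately, following the template of the JR-level characterization above. For the converse direction (threshold-PJR+ satisfies the axioms), I check each property directly. The equivalent formulation from the footnote is the most convenient: for every $c\notin W$ and every $S\subseteq V[c]$ with $|S|\ge \ell\frac nk$, we need $|\appjoin{S}\cap W^{\ge h_c}|\ge \ell$.

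\emph{Independence of losers.} Removing a loser only removes potential witnesses $c$, and $\appjoin{S}\cap W$ is unchanged.

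\emph{Robustness to fully satisfied voters.} A voter switching to a subset of $A_i\cap W$ stops supporting any $c\notin W$, so the groups $S\subseteq V[c]$ only shrink. Moreover, for a surviving $S$ the voter is not in $S$, so $\appjoin{S}\cap W$ is unchanged.

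\emph{Monotonicity.} Adding an approval for some $c'\in W$ leaves every $V[c]$ with $c\notin W$ unchanged and can only enlarge $\appjoin{S}\cap W^{\ge h_c}$.

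\emph{Best-in-slot lower quota.} On a party-list profile, let $\ell=q_j$. Suppose the required set $S_j$ does not exist. Then some $c\in C_j\setminus W$ is strictly better than the $\ell$-th best candidate of $W\cap C_j$, or $|W\cap C_j|<\ell$ while $C_j\setminus W\neq\emptyset$. Now apply threshold-PJR+ to $S=V_j$, the threshold $\tau=h_c$ and this $\ell$. Because $\appjoin{V_j}=C_j$, we get $|C_j\cap W^{\ge h_c}|\ge \ell$. So $W\cap C_j$ contains $\ell$ candidates of quality at least $h_c$, and these can serve as $S_j$.

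For the main direction I argue by contraposition, mimicking the threshold-JR case. Let $W\in X(\mathcal I)\setminus \thresholdPJR+(\mathcal I)$. Then there are $\ell$, $\tau$ and $S$ with $|S|\ge \ell\frac nk$ and some $c\in \appcut{S}^{\ge\tau}\setminus W$, yet $|\appjoin{S}\cap W^{\ge\tau}|<\ell$. I transform the profile with well-behaved operations only:
\begin{enumerate*}[label=(\roman*)]
\item by independence of losers, delete every candidate of $C\setminus W$ except $c$;
\item by monotonicity, let every voter in $S$ approve all of $(\appjoin{S}\cap W)\cup\{c\}$;
\item by robustness (shrinking to $A_i\cap W$, which drops $c$) followed by monotonicity, let every voter in $V\setminus S$ approve exactly $W\setminus \appjoin{S}$.
\end{enumerate*}
The result is a party-list instance $\mathcal I^*$ with $V_1=S$ and $C_1=(\appjoin{S}\cap W)\cup\{c\}$. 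Since $W$ was transported by allowed operations, $W\in X(\mathcal I^*)$.

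It remains to show that $W$ violates best-in-slot lower quota for $V_1$ in $\mathcal I^*$, and this is the step needing care. We have $q_1=\min(\lfloor |S|k/n\rfloor,|C_1|)$, where $\lfloor |S|k/n\rfloor\ge \ell$.
\begin{itemize}[nosep]
\item \emph{Case $|C_1|\le \ell$.} Then $q_1=|C_1|$. Since $c\in C_1\setminus W$, we have $|W\cap C_1|<|C_1|=q_1$, so no $S_1$ of size $q_1$ exists.
\item \emph{Case $|C_1|>\ell$.} Then $q_1\ge \ell$, and any valid $S_1\subseteq W\cap C_1$ has size at least $\ell$. No member of $S_1$ can be strictly worse than $c$, because $c\in C_1\setminus S_1$. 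Hence every member of $S_1$ has quality at least $h_c\ge\tau$. This would give $|\appjoin{S}\cap W^{\ge\tau}|\ge \ell$, a contradiction.
\end{itemize}
The main subtlety is making sure the threshold in the violation can be taken as $h_c$, or at least that $h_c\ge\tau$ suffices. The case analysis above handles this.
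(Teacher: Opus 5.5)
Your proposal is correct and follows the paper's proof: the same contraposition via independence of losers, monotonicity, and robustness leads to the two-party instance with $V_1=S$ and $C_1=(\appjoin{S}\cap W)\cup\{c\}$. Your case split on $|C_1|\le \ell$ versus $|C_1|>\ell$ spells out what the paper compresses into ``wlog $\tau=h_c$'' and ``$c$ is not chosen'', and your check of the converse fills in what the paper calls easy to verify. One small correction in step (iii): robustness must shrink the ballot to $A_i\cap W\setminus\appjoin{S}$, not just to $A_i\cap W$, because monotonicity cannot remove approvals; the axiom permits this, so the argument still goes through.
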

\begin{proof}
    It is easy to verify that $\thresholdPJR+$ satisfies the claimed properties.
We prove the other direction by contraposition. Let $X$ satisfy all three well-behavedness axioms. Our goal is to show that if $X$ does not refine threshold-PJR+, then it violates best-in-slot lower quota.
    Let $W\in X(\mathcal I) \setminus \thresholdPJR+(\mathcal I)$. Then, there is $S\subseteq V$ and $\tau \in (0,1]$ such that $S$ is $\ell$-large and $\lvert \appcut{S}^{\ge \tau}\setminus W\rvert \ge 1$, but $\lvert \appjoin{S} \cap W^{\ge \tau} \rvert <  \ell$.
    Take any $c\in \appcut{S}^{\ge \tau}\setminus W$. Since $\tau\le h_c$ and increasing $\tau$ makes $\appjoin{S} \cap W^{\ge \tau}$ weakly smaller, wlog choose $\tau = h_c$.
    We proceed with the exact same operations as for threshold-JR: apply independence of losers to remove all candidates in $C\setminus W$ from the instance, except for $c$;
    monotonicity to let all voters in $S$ approve of $(\appjoin{S} \cap W) \cup\{c\}$;
    robustness to fully satisfied voters and monotonicity to let each $i\in V\setminus S$ approve of precisely $W\setminus (\appjoin{S} \cap W)$. In the resulting party-list instance $\mathcal I^*$ with $V_1 = S$, $C_1 = (\appjoin{S} \cap W) \cup\{c\}$, $V_2 = V\setminus S$, $C_2 = W\setminus (\appjoin{S} \cap W)$ we still have $(W \cap C_1)^{\ge \tau} = \appjoin{S} \cap W^{\ge \tau}$, which contains strictly less than $\ell$ candidates.
    Therefore, on instance $\mathcal I^*$, the committee $W$ violates best-in-slot lower quota, as $c$ is not chosen. However, since we only applied well-behaved profile changes, $W\in X(\mathcal I^*)$.\qedhere
\end{proof}

\section{Hardness of Value-JR Verification}\label{sec:hardness_value-JR_verification}

We prove the following hardness result.
\begin{theorem}\label{thm:value-JR_NP_Verification}
    Given an instance $\mathcal{I}=(A,k,h)$ and a committee $W$, deciding
    whether $W$ satisfies value-JR is coNP-complete. This holds even if
    there are only two distinct candidate qualities.
\end{theorem}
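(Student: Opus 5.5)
Membership in coNP is the easy part. A certificate for a violation is a single group $S\subseteq V$. Given $S$, we check in polynomial time that $|S|\ge \frac nk$, compute $\appcut{S}$ and $h(\appcut{S}\mid 1)=\max_{c\in\appcut{S}}h_c$, compute $h(\appjoin{S}\cap W)$, and compare the two values. Hence the complement of the problem is in NP. For hardness we reduce from \textsc{Clique} (given a graph $G=(N,E)$ and an integer $q\ge 3$, decide whether $G$ contains a clique on $q$ vertices) to the problem of finding a value-JR violation. The guiding idea is that the search for a violating group becomes a search for many voters whose approvals cover few selected candidates. This mirrors the standard coNP-hardness argument for PJR verification.

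\emph{Construction.} We use the two qualities $1$ and $\varepsilon\coloneqq \frac{1}{q+1/2}$.
\begin{itemize}
  \item For every vertex $u\in N$ there is a vertex candidate $c_u$ of quality $\varepsilon$. There is one special candidate $c^\star$ of quality $1$.
  \item For every edge $\{u,w\}\in E$ there is an edge voter approving $\{c_u,c_w,c^\star\}$.
  \item We choose $k\ge |N|$ large enough that $k\binom q2\ge |E|$. We then add $k\binom q2-|E|$ dummy voters with empty ballots, so that $n=k\binom q2$ and therefore $\frac nk=\binom q2$.
  \item The committee is $W\coloneqq\{c_u\mid u\in N\}$, filled up to size $k$ with fresh unapproved candidates of quality $\varepsilon$. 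Note that $c^\star\notin W$.
\end{itemize}
The construction is clearly polynomial. Every quality lies in $(0,1]$, and only two distinct qualities occur.

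\emph{Correctness.} We classify the $1$-large groups $S$ by their shared approval set.
\begin{itemize}
  \item If $S$ contains a dummy voter, then $\appcut{S}=\emptyset$, so $S$ has no entitlement.
  \item If $\appcut{S}$ contains some $c_u$, then $c_u\in W$ lies in $\appjoin{S}\cap W$, so the entitlement of at most $1$ is met whenever it only asks for quality $\varepsilon$. The only groups with a larger entitlement are therefore those whose shared approvals contain $c^\star$.
  \item Otherwise $S$ consists of at least $\binom q2$ edge voters with $c^\star\in\appcut{S}$. Its entitlement is $1$, and it receives $h(\appjoin{S}\cap W)=\varepsilon\cdot|\text{vertices covered by }S|$.
\end{itemize}
So $W$ violates value-JR iff some set of at least $\binom q2$ edges covers fewer than $q+\frac12$ vertices, that is, at most $q$ vertices. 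Such an edge set exists iff $G$ contains a $q$-clique: at most $q$ vertices span at most $\binom q2$ edges, with equality exactly for a clique. Conversely, the edges of a $q$-clique give such a set.

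The step needing the most care is this case analysis. It must ensure that no unintended group produces a violation, and that the padding (dummy voters and filler candidates) only serves to make $\frac nk$ equal to $\binom q2$ exactly. Giving the dummy voters empty ballots and leaving the filler candidates unapproved handles both concerns. The choice of $\varepsilon$ strictly between $\frac1{q+1}$ and $\frac1q$ gives the sharp threshold on the number of covered vertices.
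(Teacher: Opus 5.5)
Your proposal is correct and is essentially the paper's proof. The paper reduces from \textsc{Minimum $t$-Union} using the same gadget: a quality-$1$ witness candidate $c^\star\notin W$ approved by all set-voters, low-quality element candidates of quality $\frac{2}{2b+1}$ placed in $W$, empty-ballot dummy voters that fix $\frac nk=t$, and unapproved fillers. Your \textsc{Clique} reduction is exactly this construction instantiated with two-element sets (edges), $t=\binom q2$ and $b=q$.
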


\begin{proof}
    Note that a violation exists iff there exists $S\subseteq V$ with $\lvert S \rvert\geq\frac{n}{k}, \appcut{S}\neq\varnothing, h(\appjoin{S} \cap W)<h(\appcut{S}\mid1)$. Therefore, a valid certificate for $W$ violating value-JR is a set of voters $S$ and a candidate $c\in \appcut{S}$ maximizing $h$, and the problem lies in coNP.

    For the converse direction, we reduce from the following problem: \textsc{Minimum $t$-Union} takes as input a finite universe $U$, a collection $\mathcal{B}=\{B_1,\dots,B_r\}\subseteq 2^U$ of subsets of $U$, the target integer $t$ and the total, integral budget $b\ge 1$. The question to answer is whether there exists a selection of $t$ sets from $\mathcal B$ with at most $b$ elements in total. Formally, a yes instance is given if there exist $J\subseteq[r]$ of size $t$ such that $\lvert \bigcup_{j\in J}B_j\rvert \leq b$. \citet{chlamtac2018densest} show that this problem is NP-hard.

    We construct a voting instance from a \textsc{Minimum $t$-Union} instance $(U, \mathcal{B}, t, b)$. First, set $k = \max\left\{|U|,\left\lceil\frac rt\right\rceil\right\}$ and $n = tk$. By definition, $\frac{n}{k}= t$ and $n\ge r$.
    Further, we set $C = X \cup F\cup \{c^*\}$, with $X= \{x_u\mid u\in U\}$ being the \emph{universe candidates}, $F= \{f_1,\dots,f_{k-|U|}\}$ being the \emph{filler candidates}, and $c^*$ being the \emph{potential witness candidate}. The two occurring qualities are $1$ and $\beta \coloneqq \frac{2}{2b+1}\le \frac 23$, with $h_{c^*}\coloneqq 1$ and $h_c = \beta$ for all $c\in C \setminus \{c^*\}$. We partition the voter set into $r$ \emph{set-voters} and $n-r$ \emph{dummy voters}. For the $r$ set-voters, let voter $i$ correspond to the set $B_i$ and additionally approve the potential witness candidate, i.e., $A_i = \{c^*\}\cup\{x_u\mid u\in B_i\}$, whereas the $n-r$ dummy voters submit $A_i = \emptyset$. The committee we will check for value-JR is $W= F\cup X$. Indeed, since $\lvert F \rvert= k-\lvert U\rvert$, we have $\lvert W \rvert = k$.

    We claim that \textsc{Minimum $t$-Union} has answer yes if and only if in the corresponding instance, $W$ violates value-JR.

    ``$\Longrightarrow$'' First, let the answer to the problem be yes. Then, there exist $J\subseteq [r]$ of size $t$ with $\lvert \bigcup_{j\in J} B_j \rvert\le b$. Let $S$ be the set of corresponding set-voters. Then, $S$ has cardinality $t= \frac nk$, and is cohesive w.r.t. $c^*\notin W$. We have $h(C_{\cap S} \mid 1) = 1$ and $\appjoin{S} \cap W=\left\{x_u\;\middle|\;u\in\bigcup_{j\in J}B_j\right\}$.
    Since each $x_u$ has quality $\beta$ and the size of $\appjoin{S} \cap W$ corresponds to the size of $\bigcup_{j\in J}B_j$ which is at most $b$, we have in total that $h(\appjoin{S} \cap W) \le b \beta < 1$. Therefore, $W$ indeed violates value-JR.

    ``$\Longleftarrow$'' Vice versa, let $W$ violate value-JR in the corresponding instance.
    Then, we realize that since $C\setminus W= \{c^*\}$, the $1$-cohesive set of voters $S$ witnessing the violation must be witnessing it due to $c^*$, i.e. $S$ consists only of set-voters and we have $h(\appjoin{S} \cap W)<h(\appcut{S}\mid 1) = h(c^*) = 1$. Let $J\subseteq [r]$ be the set of indices such that each set voter in $S$ corresponds to some index in $J$. Then, it holds that $\appjoin{S} \cap W=\{x_u\mid u\in\bigcup_{j\in J}B_j\}$. Therefore, $\frac{2}{2b+1}\left|\bigcup_{j\in J}B_j\right|=h(\appjoin{S} \cap W)<1$. This is only possible if $\lvert\bigcup_{j\in J}B_j \rvert\le b$. Since $S$ has size at least $\frac nk = t$, choose any subset $J^*$ of $J$ of size $t$. Then, the \textsc{Minimum $t$-union} problem has answer yes due to $J^*$.

    NP-hardness of finding a violation now implies coNP-hardness of verification.
\end{proof}

\section{Axiom Relationships}\label{app:relations}

\subsection{Threshold- and Value-JR Notions}

\begin{theorem}\label{thm:threshold_ejr_not_value_ejr}
  Threshold-EJR and value-EJR are incompatible, even on instances on which threshold-EJR is satisfiable.
\end{theorem}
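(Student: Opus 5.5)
The plan is to give a single explicit counterexample instance $\mathcal I$ and argue three things about it. First, $\mathcal I$ admits a threshold-EJR committee. Second, the threshold-EJR committees of $\mathcal I$ can be pinned down completely, ideally there is exactly one, $W_{\mathrm t}$. Third, $W_{\mathrm t}$ violates value-EJR. Together these give the incompatibility. Since the second condition holds, this is strictly stronger than the incompatibility in the example of \Cref{sec:verify}, where threshold-EJR is simply unsatisfiable.

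The mechanism I will exploit is the following. Threshold-EJR for a group at cutoff $\tau$ only asks that some voter gets $\ell$ selected candidates of quality at least $\tau$. Value-EJR instead asks for the \emph{sum} $h(\appcut{S}\mid\ell)$, and this sum can mix one high-quality and several lower-quality commonly approved candidates. Concretely, let a group $S$ commonly approve a candidate $x$ of quality close to $1$ (say $0.9$) and a candidate $y$ of quality $0.5$. Then value-EJR demands total quality $1.4$ from a single voter. Threshold-EJR can be met for $S$ in a split way:
\begin{itemize}
\item at $\tau=0.5$, by one voter $a\in S$ who approves two selected candidates of quality $0.5$;
\item at $\tau=0.9$, by another voter $b\in S$ who approves one selected candidate of quality $\ge 0.9$ and little else.
\end{itemize}
In that situation no single voter reaches $1.4$.

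To force the committee into this split configuration, I will add small groups that make threshold-JR demands at cutoff $\tau=1$. I plan to use singleton groups with private-like quality-$1$ candidates, as in the example preceding \Cref{thm:relationship-map}. These demands pin down which high-quality candidates must be selected and leave only enough seats for the low-quality candidates approved by $a$. I will keep $n=k$ (or $n/k$ a small integer), so that $\ell$-largeness is just a size condition and the forced structure can be read off directly.

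The order of work is as follows.
\begin{enumerate}
\item Fix the profile, with roughly $k\le 4$ voters and a handful of candidates at qualities $1$, $0.9$ and $0.5$.
\item List all threshold-EJR requirements. Only finitely many cutoffs $\tau$ matter, namely the distinct quality values.
\item Show that these requirements force a specific committee, and check directly that this committee meets every requirement.
\item Exhibit the group $S$ and the level $\ell$ for which $\max_{i\in S}h(A_i\cap W_{\mathrm t})<h(\appcut{S}\mid\ell)$.
\end{enumerate}

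The main obstacle is the third step. I must ensure that the cutoff-$1$ forcing does not accidentally hand some voter of $S$ enough total quality. At the same time, the remaining $\tau=0.5$ and $\tau=0.9$ demands must still be jointly satisfiable within $k$ seats, and this is exactly where the unsatisfiable example of \Cref{sec:verify} failed. My first attempt will give voter $a$ a low-quality candidate pair that also serves other groups, so that seats are shared, and will check small $k$ exhaustively. If that fails, I will scale the counterexample by using larger blocks of identical voters and candidates, so that the seat counting has more slack.
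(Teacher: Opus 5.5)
Your mechanism is the right one. It is exactly what drives the paper's counterexample: a group commonly approves a quality-$1$ and a quality-$0.5$ candidate, its cutoff-$1$ claims are met by some members, and its $\ell=2$ claim at cutoff $0.5$ is met by a different member. But the proposal never produces an instance, and the setup you lead with provably cannot produce one. That setup is $n=k$, a two-candidate group $S$, and forcing via singletons.

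If $n=k$, every singleton is $1$-large. Let $q_1\ge q_2$ be the two highest qualities in $\appcut{S}$, and let $j\in S$ witness the claim of $S$ at $\tau=q_2$, $\ell=2$, so $|A_j\cap W^{\ge q_2}|\ge 2$. The singleton $\{j\}$ also approves the quality-$q_1$ candidate, so threshold-EJR gives $|A_j\cap W^{\ge q_1}|\ge 1$. Hence $A_j\cap W$ contains one candidate of quality at least $q_1$ and a second of quality at least $q_2$, so $h(A_j\cap W)\ge q_1+q_2=h(\appcut{S}\mid 2)$. In your sketch, the voter you call $a$ is therefore forced to hold a selected candidate of quality at least $0.9$ besides the quality-$0.5$ ones, and the split you describe never occurs. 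So with $n=k$ no group with $\ell=2$ can witness the violation. The real obstacle is the witness voter's own claim, not the seat counting you single out in step~3.

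The missing idea is to take $n/k\ge 2$, so that single voters raise no claims. Then shape $S$ so that every $1$-large subgroup contains a voter served at the high cutoff, while exactly one voter is the low-cutoff witness. Your parenthetical allowance for $n/k$ a small integer points this way, but the plan does not say why it is needed or how to use it.

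The paper does this with $n=10$ and $k=5$, qualities $h_a=h_{e_j}=1$ and $h_b=h_{f_j}=0.5$:
\begin{itemize}
\item The group is $S=\{v_1,\dots,v_4\}$. Voters $v_1,v_2,v_3$ approve $\{a,b,e_1\}$ and $v_4$ approves $\{a,b,f_1,f_2\}$, so every pair in $S$ contains a voter approving $e_1$.
\item Three voter pairs $B_j$ approve $\{e_j,f_1,f_2,f_3\}$, which forces $e_1,e_2,e_3$ at $\tau=1$.
\item The union of the $B_j$ is $3$-large and forces two of $f_1,f_2,f_3$ at $\tau=0.5$. This fills all five seats and excludes $a$ and $b$.
\item The $\ell=2$ claim of $S$ at $\tau=0.5$ then forces $f_1$ and $f_2$.
\end{itemize}
The result is the unique threshold-EJR committee $\{e_1,e_2,e_3,f_1,f_2\}$, under which every voter of $S$ has utility $1<1.5=h_a+h_b$. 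The seat sharing you anticipated appears here: $e_1$ serves both $B_1$ and $v_1,v_2,v_3$, and $f_1,f_2$ serve both the union of the $B_j$ and $v_4$. What remains is the exhaustive check that this committee meets every claim at both cutoffs, which your plan also leaves open. The paper does it by enumerating the possible common approval sets.
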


\begin{proof}
    Consider the instance in \Cref{tab:threshold_ejr_unique_no_value} with $n = 10$ voters, $k = 5$, and $m = 8$ candidates with $h_a = h_{e_1} = h_{e_2} = h_{e_3} = 1$ and $h_b = h_{f_1} = h_{f_2} = h_{f_3} = 0.5$. Since $\frac{n}{k}  = 2$, a group is $\ell$-large iff it has at least $2\ell$ voters; in particular singletons raise no claims. Write $S = \{v_1, v_2, v_3, v_4\}$, let $B_1 = \{v_5, v_6\}$, $B_2 = \{v_7, v_8\}$, $B_3 = \{v_9, v_{10}\}$ with union $B = B_1 \cup B_2 \cup B_3$, and set $W = \{e_1, e_2, e_3, f_1, f_2\}$. The five ballot types are $\{a, b, e_1\}$ (voters $v_1$ to $v_3$), $\{a, b, f_1, f_2\}$ (voter $v_4$), and $\{e_j, f_1, f_2, f_3\}$ (for $B_j$) for $j \in [3]$.

    We first show that no other committee than $W$ can satisfy threshold-EJR. For that, let $W'$ be a threshold-EJR committee.
    For each $j \in [3]$ the group $B_j$ is $1$-large and both members approve $e_j$, the only quality-$1$ candidate on their ballot. The claim with $\ell = 1$ at level $\tau = 1$ can only be met if some member has an approved committee candidate of quality $1$, i.e. if $e_j \in W'$. Hence $\{e_1, e_2, e_3\} \subseteq W'$.
    The union $B$ is $3$-large, and commonly approves $\appcut{B} = \{f_1, f_2, f_3\}$, all of quality $0.5$. The claim with $\ell = 3$ at level $\tau = 0.5$, thus requires a voter $i \in B$ with $|A_i \cap W'| \geq 3$. Since each of these voters already has one approved candidate of quality at least $0.5$ in the committee, this forces $\lvert \{f_1, f_2, f_3\} \cap W' \rvert \ge 2$.
    Since $|W'| = 5$, it contains \emph{exactly} two of these candidates and neither $a$ nor $b$.
    Finally, $S$ is $2$-large with $\bigcap_{i \in S} A_i = \{a, b\}$, both of quality at least $0.5$. The claim with $\ell = 2$ at level $\tau = 0.5$ requires a voter $i \in S$ with $|A_i \cap W'| \geq 2$. Voters $v_1$ to $v_3$ only approve $e_1$ from the committee $W'$, thus the witness must be $v_4$, giving $\{f_1, f_2\} \subseteq W'$. This yields $W' = \{e_1, e_2, e_3, f_1, f_2\} = W$.

    Next, we argue that $W$ does indeed satisfy threshold-EJR. Only the qualities $1$ and $0.5$ occur, so it suffices to verify all claims at these thresholds.

    \emph{Level $\tau = 1$.} The only approval sets of size $\geq 2$ are those of $v_1, v_2, v_3$, who are not $2$-large together, so every level-$1$ claim has $\ell = 1$. Each $1$-large group with a commonly approved quality-$1$ candidate contains a member approving some candidate in $W^1$. If the common candidate is $e_j$, every member approves $e_j \in W^1$. If it is $a$, the group lies in $S$ and, being $1$-large (size $\geq 2$), contains some voter who approves $e_1 \in W^1$. Thus all level-$1$ claims are met.

    \emph{Level $\tau = 0.5$.} All candidates have qualities at least $0.5$, so this reduces to checking EJR without considering the qualities. We enumerate the possible common approval sets of a group $G$ of voters, each being an intersection of ballot types. The non-empty possibilities are the following.
    \begin{itemize}
        \item $G \subseteq S$ and $2 \le \lvert G \rvert \le 3$: then $\ell = 1$, met by $e_1 \in W$.
        \item $G = S$: then $\ell = 2$, met by $v_4$ via $\{f_1, f_2\} \subset W$.
        \item $G \subseteq \{v_1, v_2, v_3, v_5, v_6\}$: then $\ell = 1$, met by $e_1 \in W$.
        \item $G \subseteq \{v_5, v_6, v_7, v_8, v_9, v_{10}\}$: then $\ell \leq 3$, met by any voter in $G$, who approves one of $e_1, e_2, e_3 \in W$ and $\{f_1, f_2\} \subseteq W$.
        \item $G \subseteq \{v_4, v_5, v_6, v_7, v_8, v_9, v_{10}\}$ and $v_4 \in G$: then $\ell \le 2$, met by any voter in $G$ by $\{f_1, f_2\} \subseteq W$.
    \end{itemize}
    Since every possible claim at every threshold is met, $W$ satisfies threshold-EJR.

    It remains to show that $W$ fails to satisfy value-EJR.
    $S$ is $2$-cohesive, over $\appcut{S} = \{a, b\}$, which gives the demand $h_a + h_b = 1.5$. However, $h(A_{v_i} \cap W) = h_{e_1} = 1 < 1.5$ for $i \in \{1, 2, 3\}$ and $h(A_{v_4} \cap W) = h_{f_1} + h_{f_2} = 1 < 1.5$, thus $W$ violates value-EJR.
\end{proof}

\begin{table}
  \caption{Instance for \Cref{thm:threshold_ejr_not_value_ejr} ($n = 10$,
    $k = 5$). The unique threshold-EJR committee
    $W = \{e_1, e_2, e_3, f_1, f_2\}$ (shaded) violates value-EJR for the $2$-cohesive
  group $S = \{v_1, \dots, v_4\}$ with $\appcut{S} = \{a, b\}$.}
  \label{tab:threshold_ejr_unique_no_value}
  \centering
  \begin{tabular}{c c c >{\columncolor{gray!20}} c >{\columncolor{gray!20}} c >{\columncolor{gray!20}} c >{\columncolor{gray!20}}c >{\columncolor{gray!20}}c c}
    \toprule
    $k = 5$ & $a$ & $b$ & $e_1$ & $e_2$ & $e_3$ & $f_1$ & $f_2$ & $f_3$\\
    quality & $1$ & $0.5$ & $1$ & $1$ & $1$
    & $0.5$ & $0.5$ & $0.5$\\
    \midrule
    $v_1$    & \checkmark & \checkmark & \checkmark & & & & & \\
    $v_2$    & \checkmark & \checkmark & \checkmark & & & & & \\
    $v_3$    & \checkmark & \checkmark & \checkmark & & & & & \\
    $v_4$    & \checkmark & \checkmark & & & & \checkmark & \checkmark & \\
    $v_5$    & & & \checkmark & & & \checkmark & \checkmark & \checkmark \\
    $v_6$    & & & \checkmark & & & \checkmark & \checkmark & \checkmark \\
    $v_7$    & & & & \checkmark & & \checkmark & \checkmark & \checkmark \\
    $v_8$    & & & & \checkmark & & \checkmark & \checkmark & \checkmark \\
    $v_9$    & & & & & \checkmark & \checkmark & \checkmark & \checkmark \\
    $v_{10}$ & & & & & \checkmark & \checkmark & \checkmark & \checkmark \\
    \bottomrule
  \end{tabular}
\end{table}

\begin{proposition}\label{prop:tPJR_implies_vPJR}
  Threshold-PJR implies value-PJR.
\end{proposition}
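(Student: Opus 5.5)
Fix a committee $W$ satisfying threshold-PJR, some $\ell\in[k]$, and an $\ell$-cohesive group $S$. We need to show $h(\appjoin{S}\cap W)\ge h(\appcut{S}\mid \ell)$. Let $T=\{t_1,\dots,t_\ell\}\subseteq \appcut{S}$ be the $\ell$ highest-quality commonly approved candidates, ordered so that $h_{t_1}\ge\dots\ge h_{t_\ell}$. Then $h(T)=h(\appcut{S}\mid \ell)$, which is possible since $|\appcut{S}|\ge\ell$ and all qualities are positive. Let $w_1,w_2,\dots$ list the candidates of $\appjoin{S}\cap W$ in non-increasing order of quality. The plan is to show a pointwise domination: $\appjoin{S}\cap W$ contains at least $\ell$ candidates, and $h_{w_j}\ge h_{t_j}$ for every $j\in[\ell]$. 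Summing these inequalities then gives $h(\appjoin{S}\cap W)\ge\sum_{j\le\ell}h_{w_j}\ge h(T)$.

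For the domination, fix $j\in[\ell]$ and set $\tau=h_{t_j}$. Then $t_1,\dots,t_j\in\appcut{S}^{\ge\tau}$, so $|\appcut{S}^{\ge\tau}|\ge j$. Since $S$ is $\ell$-large and $j\le\ell$, $S$ is also $j$-large. Threshold-PJR applied with parameters $(j,\tau)$ therefore gives $|\appjoin{S}\cap W^{\ge\tau}|\ge j$. Hence at least $j$ candidates of $\appjoin{S}\cap W$ have quality at least $h_{t_j}$. Because $w_1,w_2,\dots$ is sorted by quality, this means $h_{w_j}\ge h_{t_j}$. Taking $j=\ell$ in particular shows that $\appjoin{S}\cap W$ has at least $\ell$ elements, so all the $w_j$ we use exist.

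The only step needing care is exactly this one: applying the axiom at a smaller level $j<\ell$ together with the threshold $\tau=h_{t_j}$, and using that $\ell$-largeness implies $j$-largeness. Ties in quality cause no difficulty, because every bound is stated with $\ge\tau$. Beyond this, the argument is a routine sorting and summation, so I expect no real obstacle. The same argument with $\ell=1$ gives \Cref{cor:tJR_implies_vJR}, which I would note afterwards.
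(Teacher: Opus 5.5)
Your proof is correct and is essentially the paper's argument: sort the top-$\ell$ commonly approved candidates and the selected candidates in $\appjoin{S}$, apply threshold-PJR at level $j$ with threshold $\tau = h_{t_j}$ to get pointwise domination, then sum. The only cosmetic difference is that the paper obtains $|\appjoin{S}\cap W|\ge \ell$ by applying the axiom with $\tau=0$, whereas you read it off the case $j=\ell$; both are fine.
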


\begin{proof}
    Let $W$ be a threshold-PJR committee.
    Consider any group of voters $S$ that is $\ell$-cohesive, commonly approving candidates $\appcut{S}$ and consider a set $$T^\star \in \argmax_{\substack{T\subseteq \appcut{S}\\ |T| = \ell}} \,\, \sum_{c \in T} h_c$$
    of $\ell$ commonly approved candidates with the highest quality. Define $c_1, \dots, c_\ell$ as the candidates in $T^\star$ in order of decreasing quality. Let $W_S=\appjoin{S} \cap W$. Note that threshold-PJR with $\tau = 0$ implies that $|W_S|\geq \ell$ as $S$ is $\ell$-cohesive. Again, define $d_1, \dots, d_\ell$ as the first $\ell$ candidates in $W_S$ in order of decreasing quality. In the following, we show that $h_{d_x}\geq h_{c_x}$ for all $x \in [\ell]$.
    Let $x \in [\ell]$. Note that $S$ is $x$-cohesive for the quality threshold $\tau_x = h_{c_x}$. Thus, threshold-PJR implies that $|\appjoin{S} \cap W^{\ge \tau_x}| \ge x$. Therefore, the candidate with the $x$-th highest quality in $W_S$ needs to have quality at least $h_{c_x}$, so $h_{d_x} \ge h_{c_x}$. Since this holds for all $x \in [\ell]$, we have that
    $$
        h(\appjoin{S} \cap W)
        \ge \sum_{x \in [\ell]} h_{d_x}
        \geq \sum_{x \in [\ell]} h_{c_x} =
        \sum_{c \in T^\star} h_{c}
        = h(\appcut{S} \mid \ell)
    $$
    which completes the proof.
\end{proof}

The above proof works in particular for the special case of $\ell = 1$, giving the following corollary.

\begin{corollary}\label{cor:tJR_implies_vJR}
  Threshold-JR implies value-JR.
\end{corollary}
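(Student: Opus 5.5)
The corollary follows by specializing the proof of \Cref{prop:tPJR_implies_vPJR} to $\ell = 1$. Threshold-PJR restricted to $\ell=1$ is exactly threshold-JR, and value-PJR restricted to $\ell=1$ is exactly value-JR. That argument, however, assumes the full threshold-PJR hypothesis, so the plan is to check that only the $\ell=1$ instances of it are used. Then a threshold-JR committee suffices.

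Concretely, let $W$ satisfy threshold-JR, and let $S$ be a $1$-cohesive group, i.e., $\lvert S\rvert \ge \frac nk$ and $\appcut{S}\neq\emptyset$. Pick $c_1 \in \argmax\{h_c \mid c\in \appcut{S}\}$, so that $h(\appcut{S}\mid 1) = h_{c_1}$. Set $\tau = h_{c_1}$. Then $\appcut{S}^{\ge\tau}$ contains $c_1$ and is nonempty, and $S$ is $1$-large.

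Threshold-JR at this threshold yields some $d\in \appjoin{S}\cap W^{\ge \tau}$, hence $h_d \ge h_{c_1}$. Since all qualities are positive,
\[
h(\appjoin{S}\cap W) \ge h_d \ge h_{c_1} = h(\appcut{S}\mid 1),
\]
which is the value-JR requirement.

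There is no genuine obstacle. The only point needing care is that the proof of \Cref{prop:tPJR_implies_vPJR} also invokes threshold-PJR at $\tau=0$ to get $\lvert W_S\rvert\ge \ell$. For $\ell=1$ this is subsumed by the instance at $\tau=h_{c_1}$ above, so nothing beyond threshold-JR is required.
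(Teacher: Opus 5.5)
Your proof is correct and takes the same approach as the paper, which derives the corollary by noting that the proof of \Cref{prop:tPJR_implies_vPJR} goes through for $\ell = 1$; you carry out exactly that specialization, taking $\tau = h_{c_1}$ and using that qualities are positive. Your remark that the $\tau = 0$ step is subsumed when $\ell=1$ is correct, and it makes explicit what the paper leaves implicit: only threshold-JR instances of the hypothesis are needed.
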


The incompatibility of value- and threshold-EJR can be strengthened to value-EJR and threshold-JR.

\begin{proposition}\label{prop:vEJR_tJR_incompatible}
  Value-EJR and threshold-JR are incompatible.
\end{proposition}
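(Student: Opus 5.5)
The plan is to exhibit a single instance on which every committee satisfying threshold-JR violates value-EJR. In fact this was already done in the proof of \Cref{thm:relationship-map}. There, the second instance has $k=2$, $n=2$, candidates $c_1,c_2$ of quality $1$ and $d_1,d_2$ of quality $0.6$, with ballots $A_{v_1}=\{c_1,d_1,d_2\}$ and $A_{v_2}=\{c_2,d_1,d_2\}$. I would reuse this instance and write out the two verifications explicitly.

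\emph{Threshold-JR forces $W=\{c_1,c_2\}$.} Since $\frac nk=1$, each singleton $\{v_j\}$ is $1$-large. At cutoff $\tau=1$, the group $\{v_1\}$ has $\appcut{\{v_1\}}^{\ge 1}=\{c_1\}\neq\emptyset$. Threshold-JR therefore demands $|A_{v_1}\cap W^{\ge 1}|\ge 1$, and the only quality-$1$ candidate $v_1$ approves is $c_1$, so $c_1\in W$. Symmetrically, $c_2\in W$. Since $k=2$, we get $W=\{c_1,c_2\}$. For completeness I would note that this committee does satisfy threshold-JR, although incompatibility only needs that no committee satisfies both axioms.

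\emph{$\{c_1,c_2\}$ violates value-EJR.} The group $\{v_1,v_2\}$ is $2$-large, and its common approval set is $\{d_1,d_2\}$, so it is $2$-cohesive. Its entitlement is $h(\{d_1,d_2\}\mid 2)=1.2$. Each voter gets only $h(A_{v_j}\cap W)=1<1.2$, so value-EJR fails. Hence no committee satisfies both axioms.

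There is no real obstacle here. The only point needing care is choosing the quality of the $d$-candidates so that two of them sum to more than $1$ while each stays below $1$; any value in $(1/2,1)$ works. Note that quality $0.1$, as in the first instance of \Cref{thm:relationship-map}, would not work, because then $2\cdot 0.1<1$.
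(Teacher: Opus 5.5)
Your proof is correct and matches the paper's argument. The paper uses the same two-voter, $k=2$ instance, except that the commonly approved candidates have quality $0.75$ instead of $0.6$; as you note, any quality in $(1/2,1)$ works. Threshold-JR at $\tau=1$ forces $\{c_1,c_2\}$, and this committee violates value-EJR for the $2$-cohesive group $\{v_1,v_2\}$.
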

\begin{table}[H]
    \caption{Example instance showing that value-EJR and threshold-JR are incompatible. The only threshold-JR committee is $\{c_1, c_2\}$ (choose threshold $\tau = 1$), which violates value-EJR for the voter group $\{v_1, v_2\}$.}
    \centering
    \label{tab:vEJR_tJR_incompatible}
    \begin{tabular}{c >{\columncolor{gray!20}} c >{\columncolor{gray!20}}  c c c}
        \toprule
        $k = 2$ & $c_1$ & $c_2$ & $d_1$ & $d_2$ \\
        quality & $1$ & $1$ & $0.75$ & $0.75$\\
        \midrule
        $v_1$ & \checkmark &  & \checkmark & \checkmark \\
        $v_2$ & & \checkmark  & \checkmark & \checkmark \\
        \bottomrule
    \end{tabular}
\end{table}

\begin{proof}
      Consider the instance from \Cref{tab:vEJR_tJR_incompatible} with $k = 2$, voters $V = \{v_1, v_2\}$, and candidates $C = \{c_1, c_2, d_1, d_2\}$ where $h_{c_i} = 1$ for $i \in [2]$ and $h_{d_j} = 0.75$ for $j \in [2]$. Let $A_i = \{c_i\} \cup \{d_1, d_2\}$ for each $i \in [2]$.

      Any threshold-JR committee $W$ must contain $c_i$ for every $i \in [2]$, since $\{v_i\}$ is $1$-large (as $\frac{n}{k} = 1$) and approves only $c_i$ at quality threshold $1$. Hence $W = \{c_1, c_2\}$ is the unique threshold-JR committee.
      However, $V$ is $2$-cohesive at $\{d_1, d_2\}$, so value-EJR demands that some voter $i \in V$ has $h(A_i \cap W) \ge h_{d_1} + h_{d_2} = 1.5$. Yet each voter $v_i$ has $h(A_i \cap W) = 1 < 1.5$, violating value-EJR.
\end{proof}

\subsection{Threshold- and Basic JR Notions}

It is easy to see that all threshold-JR notions imply the corresponding JR notions, as choosing the threshold $\tau = 0$ gives exactly the corresponding definition.

\begin{observation}\label{obs:threshold_implies_basic}
    Threshold-EJR implies EJR, threshold-PJR implies PJR, and threshold-JR implies JR.
\end{observation}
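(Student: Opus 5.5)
The plan is to handle each of the three implications separately, in each case by specialising the quality cutoff $\tau$ to a value at which every candidate passes it. Candidate qualities lie in $(0,1]$, so the choice $\tau=0$ (or any $\tau\le\min_{c\in C}h_c$) gives $X^{\ge\tau}=X$ for every $X\subseteq C$. I will fix an arbitrary committee $W$ that satisfies the threshold notion, together with an arbitrary $\ell\in[k]$ and an $\ell$-cohesive group $S\subseteq V$ (with $\ell=1$ for the JR case). The goal is to show that the standard requirement holds for $S$.

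First, by $\ell$-cohesiveness, $S$ is $\ell$-large and $|\appcut{S}|\ge\ell$. Because $\appcut{S}^{\ge 0}=\appcut{S}$, the entitlement condition of the threshold notion at $\tau=0$ is exactly $|\appcut{S}^{\ge 0}|\ge\ell$, which is met. For threshold-EJR, this is the condition $|\bigcap_{i\in S}A_i^{\ge 0}|\ge\ell$ from its definition, and it coincides with the previous one because $\bigcap_{i\in S}A_i^{\ge\tau}=\appcut{S}^{\ge\tau}$.

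Second, I will read off the conclusion at $\tau=0$ and use $W^{\ge 0}=W$. Threshold-JR then gives $|\appjoin{S}\cap W|\ge1$, which is JR. Threshold-PJR gives $|\appjoin{S}\cap W|\ge\ell$, which is PJR. Threshold-EJR gives $\max_{i\in S}|A_i\cap W|\ge\ell$, which is EJR.

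None of these steps is a real obstacle. The only point to check is that $\tau=0$ is an admissible cutoff: the definitions quantify over $\tau\in[0,1]$, so it is. Since the statement is an observation, the write-up will be a single short paragraph following these steps.
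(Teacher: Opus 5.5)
Your proposal is correct and matches the paper's argument. The paper likewise specializes to the cutoff $\tau=0$ (any cutoff below the minimum quality), at which $X^{\ge 0}=X$ for all $X\subseteq C$, so each threshold axiom's entitlement condition and required representation coincide with those of its standard counterpart. Your remaining details are also fine: the identity $\bigcap_{i\in S}A_i^{\ge\tau}=\appcut{S}^{\ge\tau}$ holds, and restricting JR to $\ell=1$ loses nothing because every $\ell$-cohesive group is $1$-cohesive.
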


Clearly, the other direction does not hold, in fact we can show that EJR is incompatible with threshold-JR.

\begin{proposition}\label{prop:EJR_tJR_incompatible}
  EJR is incompatible with threshold-JR.
\end{proposition}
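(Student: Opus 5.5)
The plan is to exhibit a single instance on which no committee satisfies both EJR and threshold-JR. The first instance from the proof of \Cref{thm:relationship-map} already does this, so I will reuse it and verify it carefully. The instance has $k=n=2$, voters $v_1,v_2$ and candidates $c_1,c_2$ of quality $1$ and $d_1,d_2$ of quality $0.1$. The ballots are $A_{v_1}=\{c_1,d_1,d_2\}$ and $A_{v_2}=\{c_2,d_1,d_2\}$. Since $\frac{n}{k}=1$, every singleton is $1$-large, and $\{v_1,v_2\}$ is $2$-large.

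The first step uses threshold-JR at $\tau=1$ to pin down the committee. The group $\{v_1\}$ is $1$-large and $\appcut{\{v_1\}}^{\ge 1}=\{c_1\}\neq\emptyset$. Threshold-JR therefore requires $|A_{v_1}\cap W^{\ge 1}|\ge 1$, and the only candidate of quality at least $1$ that $v_1$ approves is $c_1$, so $c_1\in W$. The same argument for $\{v_2\}$ gives $c_2\in W$. Because $k=2$, the unique threshold-JR committee is $W=\{c_1,c_2\}$. It is also worth noting that threshold-JR is satisfied by this $W$, so threshold-JR alone is satisfiable here and the incompatibility is genuine.

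The second step checks EJR for the group $V=\{v_1,v_2\}$. This group is $2$-large, and $\appcut{V}=\{d_1,d_2\}$ has size $2$, so $V$ is $2$-cohesive. EJR therefore demands $\max_{i\in V}|A_i\cap W|\ge 2$. However, $|A_{v_1}\cap W|=|\{c_1\}|=1$ and likewise $|A_{v_2}\cap W|=1$, so EJR fails. Hence no committee satisfies both axioms.

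There is no real obstacle in this argument. The only care needed is to confirm that each singleton is indeed $1$-large (this uses $n/k=1$) and that at $\tau=1$ each voter approves exactly one eligible candidate, which is what forces $W$ uniquely.
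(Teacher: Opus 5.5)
Your proposal is correct and follows the paper's proof: the same instance ($k=n=2$, $c_1,c_2$ of quality $1$, $d_1,d_2$ of quality $0.1$), threshold-JR at $\tau=1$ forcing the committee $\{c_1,c_2\}$, and the $2$-cohesive group $\{v_1,v_2\}$ over $\{d_1,d_2\}$ witnessing the EJR violation. Your extra check that $\{c_1,c_2\}$ does satisfy threshold-JR is not needed for the incompatibility, but it is correct and does no harm.
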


\begin{table}[H]
  \caption{Example instance showing that threshold-JR and EJR (also value-JR and EJR) are incompatible. The only threshold-JR (value-JR) committee is $\{c_1, c_2\}$, which violates EJR for the voter group $\{v_1, v_2\}$.}
  \label{tab:tJR_EJR_incompatible}
  \centering
  \begin{tabular}{c >{\columncolor{gray!20}} c >{\columncolor{gray!20}} c c c c}
    \toprule
    $k = 2$ & $c_1$ & $c_2$ & $d_1$ & $d_2$ \\
    quality & $1$ & $1$ & $0.1$ & $0.1$ \\
    \midrule
    $v_1$ & \checkmark & & \checkmark & \checkmark \\
    $v_2$ & & \checkmark & \checkmark & \checkmark \\
    \bottomrule
  \end{tabular}
\end{table}

\begin{proof}
    Consider the instance from \Cref{tab:tJR_EJR_incompatible} with $k = 2$, voters $V = \{v_1, v_2\}$, and candidates $C = \{c_1, c_2, d_1, d_2\}$ where $h_{c_1} = h_{c_2} = 1$ and $h_{d_1} = h_{d_2} = 0.1$. Let $A_i = \{c_i\} \cup \{d_1, d_2\}$ for each $i \in [2]$.

    Any threshold-JR committee $W$ must contain $c_i$ for every $i \in [2]$, since $\{v_i\}$ is $1$-large (as $\frac{n}{k} = 1$) and approves only $c_i$ at quality threshold $1$. Hence $W = \{c_1, c_2\}$ is the unique threshold-JR committee.
    However, $V$ is $2$-cohesive at $\{d_1, d_2\}$, so EJR demands that some voter $i \in V$ has $|A_i \cap W| \geq 2$. Therefore, $W$ violates EJR.
\end{proof}

\subsection{Value- and Basic JR Notions}

\begin{proposition}\label{prop:vEJR_PJR}
    Value-EJR does not imply PJR.
\end{proposition}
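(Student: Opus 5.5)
We prove this with a small counterexample: a committee that satisfies value-EJR but violates PJR. The paper's own intuition points the way: an $\ell$-cohesive group can meet its summed-quality entitlement through a single high-quality selected candidate, while PJR still demands $\ell$ approved candidates.

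\textbf{Construction.} Take $n=k=2$, voters $v_1,v_2$, and candidates $a$, $d_1$, $d_2$, $e$ with $h_a=1$, $h_{d_1}=h_{d_2}=\varepsilon$ (say $\varepsilon=0.1$) and $h_e=1$. Set $A_{v_1}=\{a,d_1,d_2\}$ and $A_{v_2}=\{d_1,d_2,e\}$, and consider $W=\{a,e\}$.

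\textbf{PJR fails.} The group $V$ is $2$-large and commonly approves $\{d_1,d_2\}$, so it is $2$-cohesive. PJR therefore requires $\lvert\appjoin{V}\cap W\rvert\ge 2$.

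But this union is $\{a,e\}$, which has size $2$, so this group is satisfied. The instance must be adjusted so that the group's union of approvals meets $W$ in only one candidate.

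\textbf{Adjusted construction.} Let $v_1$ and $v_2$ both approve $\{a,d_1,d_2\}$, with $h_a=1$ and $h_{d_1}=h_{d_2}=\varepsilon$. Add a third candidate-holder by taking $k=2$ and $n=4$: voters $v_3,v_4$ both approve only $e$, with $h_e=1$. Now $n/k=2$. Let $W=\{a,e\}$.

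\textbf{PJR fails in the adjusted instance.} The group $\{v_1,v_2\}$ is only $1$-large, so it cannot trigger a size-$2$ demand. The larger group $V$ has empty common approval set, so it raises no claim either.

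A $2$-large group needs all $4$ voters, so this instance does not witness a PJR violation. Instead, I would use $k=3$, $n=3$:
\begin{itemize}
\item $v_1,v_2$ approve $\{a,d_1,d_2\}$;
\item $v_3$ approves $\{e\}$;
\item $h_a=h_e=1$ and $h_{d_i}=\varepsilon$;
\item $W=\{a,e,f\}$, where $f$ is a quality-$1$ candidate approved only by $v_3$.
\end{itemize}

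\textbf{Verifying the violation.} With $n/k=1$, the group $\{v_1,v_2\}$ is $2$-large and $2$-cohesive via $\{a,d_1,d_2\}$. PJR requires two selected candidates in $\appjoin{\{v_1,v_2\}}$, but only $a$ is selected, so PJR is violated.

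\textbf{Verifying value-EJR.} The $\ell=2$ claim of $\{v_1,v_2\}$ has entitlement $h(\{a,d_1,d_2\}\mid 2)=1+\varepsilon$, while $v_1$ receives only $h_a=1<1+\varepsilon$. So the entitlement must be paid by a single high-quality candidate, for example $h_a=1$, $h_{d_i}=\varepsilon$, with $a$ replaced by something of quality $\ge 1+\varepsilon$. That is impossible since qualities are at most $1$.

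\textbf{Fix: scale down the common candidates.} Instead let the group approve one quality-$1$ candidate $a$ plus two low candidates $d_1,d_2$ of quality $\varepsilon$, and select $a$ together with a second quality-$1$ candidate $g$ approved only by $v_1$. Then $v_1$ gets $2\ge 1+\varepsilon$, but $\appjoin{\{v_1,v_2\}}\cap W=\{a,g\}$ has size $2$ again.

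So the proof must exploit the $\max_i$ versus union gap differently. My final plan is to make the group $\{v_1,v_2\}$ commonly approve two candidates $d_1,d_2$ of quality $\varepsilon$, and select a single candidate $a$ of quality $1\ge 2\varepsilon$ approved by $v_1$.

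\textbf{Final instance.} Take $n=k=3$ with:
\begin{itemize}
\item $A_{v_1}=\{a,d_1,d_2\}$ and $A_{v_2}=\{d_1,d_2\}$;
\item $A_{v_3}=\{e_1,e_2\}$;
\item $h_a=h_{e_1}=h_{e_2}=1$ and $h_{d_i}=0.1$;
\item $W=\{a,e_1,e_2\}$.
\end{itemize}

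\textbf{PJR is violated.} The group $\{v_1,v_2\}$ is $2$-cohesive, yet $\appjoin{\{v_1,v_2\}}\cap W=\{a\}$ has size $1$.

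\textbf{Value-EJR holds.} We check every cohesive group.
\begin{itemize}
\item The singleton $\{v_1\}$ has entitlement $h(\{a,d_1,d_2\}\mid 1)=1$ and receives $1$.
\item The pair $\{v_1,v_2\}$ has entitlements $0.1$ (for $\ell=1$) and $0.2$ (for $\ell=2$), and $v_1$ receives $1$.
\item The singleton $\{v_3\}$ has entitlement $1$ and receives $2$.
\item The singleton $\{v_2\}$ is $1$-cohesive with entitlement $h(\{d_1,d_2\}\mid 1)=0.1$, but it receives $0$, so value-EJR is violated here.
\end{itemize}

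\textbf{Repair for the voter $v_2$.} Every singleton is $1$-large, so to avoid the $\{v_2\}$ violation I would raise $n/k$. Duplicate each voter type to get $n=6$, $k=3$, so $n/k=2$ and singletons and identical pairs carry no larger claims. The $2$-cohesive group then needs $4$ voters, for instance two copies each of $v_1$ and $v_2$. The pair of $v_2$-copies is $1$-cohesive with entitlement $0.1$ and still receives $0$. To fix this, let $v_2$ also approve $e_1$. Its singleton-level claims are then met, and $\appjoin{S}\cap W$ for $S$ the four $v_1/v_2$ copies becomes $\{a,e_1\}$, of size $2$, which again breaks the PJR violation.

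\textbf{Main obstacle.} The crux is balancing these two requirements against each other: every small subgroup of the $\ell$-cohesive group must be value-satisfied, while the group's union of approvals still meets $W$ in fewer than $\ell$ candidates. I would try to resolve it with thresholds: use $n/k$ large and make the weak voters' subgroups too small to be cohesive on their own. Then every $1$-large subgroup must contain a $v_1$-type voter, who is satisfied through $a$.
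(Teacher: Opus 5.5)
\textbf{Gap: no verified counterexample.} Your proposal does not prove the statement, because it never exhibits a single instance that is shown to work. Every instance you fully specify fails, as you yourself note:
\begin{itemize}
\item The first $n=k=2$ instance satisfies PJR.
\item The $n=4$, $k=2$ instance has no $2$-large cohesive group.
\item In the $n=k=3$ instance where $v_1,v_2$ both approve $a$, value-EJR fails. Since $a$ is commonly approved, it enters the entitlement $1+\varepsilon$.
\item Your ``final instance'' violates value-EJR at the $1$-large singleton $\{v_2\}$.
\item The symmetric duplication to $n=6$, $k=3$ either leaves the $1$-large pair of $v_2$-copies unserved, or, after adding $e_1$ to their ballots, removes the PJR violation.
\end{itemize}
The closing ``main obstacle'' paragraph names a strategy. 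However, it fixes no $n$, $k$, ballots or committee, and checks no cohesive groups. So as written, the non-implication is not established.

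\textbf{The missing step.} Your closing idea is the right one; what is missing is to carry it out. Duplicate only the strong voter type, not the weak one, so that $n/k\in(1,\tfrac32]$. Then singletons are not $1$-large, yet two strong voters plus one weak voter form a $2$-large group.

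Concretely, take $n=4$ and $k=3$ with
\begin{itemize}
\item $A_{v_1}=A_{v_1'}=\{a,d_1,d_2\}$, $A_{v_2}=\{d_1,d_2\}$, $A_{v_3}=\{e_1,e_2\}$;
\item $h_a=h_{e_1}=h_{e_2}=1$ and $h_{d_1}=h_{d_2}=0.1$;
\item $W=\{a,e_1,e_2\}$.
\end{itemize}
Since $n/k=\tfrac43$, a $1$-large group needs $2$ voters and a $2$-large group needs $3$. The only $3$-large group is $V$, whose common approval set is empty.

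\emph{PJR fails.} The group $S=\{v_1,v_1',v_2\}$ is $2$-cohesive via $\{d_1,d_2\}$, but $C_{\cup S}\cap W=\{a\}$.

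\emph{Value-EJR holds.} Any group containing $v_3$ and another voter has empty common approval set, and $\{v_3\}$ alone is not $1$-large. So every $1$-large group with nonempty common approvals lies inside $S$. Since there is only one weak voter, each such group contains $v_1$ or $v_1'$, whose utility is $1$. The entitlements are $h_a=1$ for $\{v_1,v_1'\}$, and at most $0.2$ for groups containing $v_2$. All are met.

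This is exactly the structure of the paper's counterexample. There, $c_3$ plays the role of $a$, the quality-$\tfrac12$ candidates $c_1,c_2$ play $d_1,d_2$, and $v_4$, approving $\{c_4,c_5\}$, fills the remaining seats. With quality $\tfrac12$, the two-seat entitlement $h_{c_1}+h_{c_2}=1$ is even exactly matched by the single selected candidate $c_3$.
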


\begin{table}[H]
    \caption{Example instance showing that value-EJR does not imply PJR. The committee $\{c_3, c_4, c_5\}$ satisfies value-EJR, but fails PJR for the voter group $\{v_1, v_2, v_3\}$ which is $2$-large and $2$-cohesive.}
    \label{tab:vEJR_PJR}
    \centering
    \begin{tabular}{c c c >{\columncolor{gray!20}} c >{\columncolor{gray!20}} c >{\columncolor{gray!20}} c }
        \toprule
        $k = 3$ & $c_1$ & $c_2$ & $c_3$ & $c_4$ & $c_5$\\
        quality & $0.5$ & $0.5$ & $1$ & $1$ & $1$\\
        \midrule
        $v_1$ & \checkmark & \checkmark & \checkmark & & \\
        $v_2$ & \checkmark & \checkmark & \checkmark & & \\
        $v_3$ & \checkmark & \checkmark & & & \\
        $v_4$ & & & & \checkmark & \checkmark \\
        \bottomrule
    \end{tabular}
\end{table}

\begin{proof}
    Consider the instance from \Cref{tab:vEJR_PJR} with $k = 3$, voters $v_1$ to $v_4$, and candidates $c_1$ to $c_5$ where $h_{c_1} = h_{c_2} = 0.5$ and $h_{c_3} = h_{c_4} = h_{c_5} = 1$. Let $A_1 = A_2 = \{c_1, c_2, c_3\}$, $A_3 = \{c_1, c_2\}$ and $A_4 = \{c_4, c_5\}$.

    The committee $W = \{c_3, c_4, c_5\}$ satisfies value-EJR. To verify this, for each $\ell \in [k]$ we check all $\ell$-large, $\ell$-cohesive voter sets. Any $1$-large, $1$-cohesive group $S$ of voters needs to be of size at least $\frac{n}{k} = \frac{4}{3} > 1$ and requires one voter to have utility at least $\max_{c \in \appcut{S}} h_c \le 1$. This is the case, since every size-$2$ group of voters has one voter $i$ with $h(A_i \cap W) \ge 1$. The only $2$-large, $2$-cohesive group of voters is $S = \{v_1, v_2, v_3\}$ commonly approving $T = \{c_1, c_2\}$ and voter $v_1$ has utility $h(A_1 \cap W) = 1 \ge h_{c_1} + h_{c_2} = 1$. There is no $3$-cohesive group of voters. Therefore $W$ satisfies value-EJR.

    However, $W$ does not satisfy PJR since the voter group $S = \{v_1, v_2, v_3\}$ is $2$-large and $2$-cohesive, but $|W \cap \appjoin{S}| = |W \cap \{c_1, c_2, c_3\}| = |\{c_3\}| = 1 < 2$.
\end{proof}

Although value-EJR and PJR do not imply each other, we can show that they are compatible, i.e., there always exists a committee satisfying both value-EJR and PJR.

\begin{proposition}\label{prop:vEJR_PJR_compatible}
  Value-EJR and PJR are simultaneously satisfiable in every profile.
\end{proposition}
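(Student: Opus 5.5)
The statement follows directly from \Cref{thm:value-GCR_value-EJR_PJR}. I will invoke that theorem as the main step. Value-GCR is well-defined on every instance: its first phase terminates and its completion step always yields a committee of size exactly $k$. Hence it outputs some committee $W$ for every profile, and by \Cref{thm:value-GCR_value-EJR_PJR} this $W$ satisfies both value-EJR and PJR. So the proof reduces to checking that value-GCR always produces a valid size-$k$ committee.

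For this check I would show two things. First, every iteration of phase one strictly increases $W$. If $(S,\ell)$ witnesses a violation but $T\subseteq W$, then every $i\in S$ has $h(A_i\cap W)\ge h(T)=h(\appcut{S}\mid\ell)$, which contradicts the witness property. Hence $T\setminus W\neq\emptyset$, and phase one terminates after at most $m$ rounds.

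Second, I would show that $|W|\le k$ throughout, which I expect to be the main obstacle. The intended argument is budget-based. A chosen witness $S$ is $\ell$-large, so each voter in $S$ pays $\lvert T\setminus W\rvert/\lvert S\rvert\le \ell\cdot\frac{k}{n}\cdot\frac{1}{\ell}\cdot\frac{\lvert T\setminus W\rvert}{\ell}$. This is at most $k/n$ only if no voter is charged twice beyond their budget. To rule that out, I would reuse the argument from the proof of \Cref{thm:value-GCR_value-EJR_PJR} and the analogous GCR argument of \citet{peters2021proportional}. Their key point is that once a voter has been in a processed witness group $S$, that voter's utility is at least $h(\appcut{S}\mid\ell)$. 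Since witnesses are processed in non-increasing order of entitlement, such a voter cannot again be "unsatisfied" in a later witness with the same or higher demand. The total charge is therefore bounded by $k/n$ per voter, so the total number of added candidates is at most $n\cdot k/n=k$.

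Phase two only adds affordable candidates at total cost $1$ each, without exceeding budgets, so it also respects $|W|\le k$. The final arbitrary completion then fills $W$ to exactly $k$ candidates; this is possible because $k\le m$. Since \Cref{thm:value-GCR_value-EJR_PJR} is assumed, its proof already presupposes this feasibility, so the proposition really follows by citing that theorem. I would present the argument above only as a remark confirming that the output is well-defined.
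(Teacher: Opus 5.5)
Your proposal is correct and matches the paper, which proves this proposition by exhibiting value-GCR and invoking \Cref{thm:value-GCR_value-EJR_PJR}. Your well-definedness remarks reproduce the size argument already contained in that theorem's proof: phase one strictly grows $W$, witness groups are disjoint because later witnesses have weakly smaller entitlement, and the unit-cost budget bound gives $|W|\le k$. The strict-growth observation is a small explicit addition that the paper leaves implicit.
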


We show this in \Cref{app:value-GCR_value-EJR_PJR} by defining a rule that always satisfies both value-EJR and standard PJR.

\begin{observation}\label{obs:JR_vJR}
  Value-JR implies JR.
\end{observation}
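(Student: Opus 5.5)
To prove that value-JR implies JR, I unfold both definitions and compare them for an arbitrary $1$-cohesive group. Let $W$ satisfy value-JR and let $S\subseteq V$ be $1$-cohesive, i.e., $\lvert S\rvert\ge \frac nk$ and $\appcut{S}\neq\emptyset$. JR concerns exactly these groups and requires $\appjoin{S}\cap W\neq\emptyset$. Value-JR applies to the same group, so we obtain $h(\appjoin{S}\cap W)\ge h(\appcut{S}\mid 1)$.

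The key step is to see that the right-hand side is strictly positive. Since $\appcut{S}$ contains some candidate $c$, we have $h(\appcut{S}\mid 1)\ge h_c$. Every quality score lies in $(0,1]$, so $h_c>0$. Hence $h(\appjoin{S}\cap W)>0$.

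It remains to convert positive summed quality into a nonempty set. The summed quality of the empty set is $0$, so $h(\appjoin{S}\cap W)>0$ forces $\appjoin{S}\cap W\neq\emptyset$. This is exactly the JR requirement for $S$.

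Since $S$ was an arbitrary $1$-cohesive group, $W$ satisfies JR. No step is a real obstacle. The only point to state explicitly is that qualities are strictly positive: if zero qualities were allowed, the implication would fail.
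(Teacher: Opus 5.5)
Your proof is correct and follows the paper's argument: value-JR gives $h(\appjoin{S}\cap W)\ge h(\appcut{S}\mid 1)\ge h_c>0$ for any $c\in\appcut{S}$, which forces $\appjoin{S}\cap W\neq\emptyset$. Since every $\ell$-cohesive group is also $1$-cohesive, restricting attention to $1$-cohesive groups covers all of JR's requirements.
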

\begin{proof}
    Let $W$ be a committee that satisfies value-JR. Then for every $1$-large, $1$-cohesive group of voters $S$ we have $h(\appjoin{S} \cap W) = \sum_{c \in \appjoin{S} \cap W} h_c \ge \max_{c \in \appcut{S}} h_c > 0$ and thus $\appjoin{S} \cap W \neq \emptyset$.
\end{proof}

\begin{proposition}\label{prop:EJR_vJR_incompatible}
  EJR is incompatible with value-JR.
\end{proposition}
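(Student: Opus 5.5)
The plan is to reuse the instance from \Cref{tab:tJR_EJR_incompatible}, as its caption already anticipates. Take $k=2$, voters $v_1,v_2$, candidates $c_1,c_2$ of quality $1$ and $d_1,d_2$ of quality $0.1$, and ballots $A_i=\{c_i,d_1,d_2\}$. We have $\frac nk=1$, so every single voter is $1$-large. The proof then has two steps: show that $\{c_1,c_2\}$ is the only value-JR committee, and show that it violates EJR.

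\textbf{Value-JR forces $\{c_1,c_2\}$.} Consider the $1$-cohesive singleton group $S=\{v_1\}$. Since $\appcut{S}=A_1$ contains $c_1$, value-JR demands $h(A_1\cap W)\ge h(\appcut{S}\mid 1)=1$.
\begin{itemize}
\item If $c_1\notin W$, then $v_1$'s utility comes only from $d_1,d_2$. This gives at most $0.2<1$, so value-JR fails.
\item Hence $c_1\in W$, and by symmetry $c_2\in W$.
\item As $k=2$, this yields $W=\{c_1,c_2\}$.
\end{itemize}
For completeness, one checks that this committee does satisfy value-JR. The remaining $1$-cohesive group $\{v_1,v_2\}$ has entitlement $0.1$, and it receives total quality $2$ through $\appjoin{S}\cap W$.

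\textbf{EJR fails.} The group $V=\{v_1,v_2\}$ is $2$-large and commonly approves $\{d_1,d_2\}$, so it is $2$-cohesive. EJR therefore requires some voter $i$ with $|A_i\cap W|\ge 2$. Under $W=\{c_1,c_2\}$, each voter approves exactly one selected candidate, so EJR is violated.

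Putting the two steps together, no committee satisfies both EJR and value-JR on this instance. There is no real obstacle in this argument; the only point needing care is the arithmetic $2\cdot 0.1<1$, which is what rules out replacing $c_i$ with the low-quality candidates.
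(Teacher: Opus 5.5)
Your proof is correct and follows the paper's argument essentially verbatim: the same instance from \Cref{tab:tJR_EJR_incompatible}, singleton groups forcing $c_1,c_2$ under value-JR, and the $2$-cohesive group $\{v_1,v_2\}$ over $\{d_1,d_2\}$ witnessing the EJR violation. Your extra check that $\{c_1,c_2\}$ actually satisfies value-JR is harmless but not needed for incompatibility.
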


\begin{proof}
    Consider the instance from \Cref{tab:tJR_EJR_incompatible} with $k = 2$, voters $V = \{v_1, v_2\}$, and candidates $C = \{c_1, c_2, d_1, d_2\}$ where $h_{c_1} = h_{c_2} = 1$ and $h_{d_1} = h_{d_2} = 0.1$. Let $A_i = \{c_i\} \cup \{d_1, d_2\}$ for each $i \in [2]$.

    Any value-JR committee $W$ must contain $c_i$ for every $i \in [2]$, since $\{v_i\}$ is $1$-large and cohesive over $ \{c_i\}$ with quality $1$. Hence $W = \{c_1, c_2\}$ is the unique value-JR committee.
    However, $V$ is $2$-cohesive at $\{d_1, d_2\}$, so EJR demands that some voter $i \in V$ has $|A_i \cap W| \geq 2$. Therefore, $W$ violates EJR.
\end{proof}

\section{Axiom Satisfaction} \label{app:designing_rules}

\subsection{Threshold-PJR+ via EAR}\label{app:EAR_threshold-PJR+}

In this section we give a proof for \Cref{prop:EAR_threshold-PJR+}, which says that if $f$ is an EAR, it satisfies threshold-PJR+.

We first define the following price system notion.

\begin{definition}
    For a given instance and set of candidates $X$, a price system $(p, (p_i)_{i \in V})$ is a tuple of a price $p \le 1$ and payment functions $p_i: C \rightarrow [0,1]$ for every voter $i \in V$ such that
    \begin{enumerate}
        \item $i\notin V[c]\Rightarrow p_i(c) = 0$,
        \item $\sum_{c\in X} p_i(c) \le \frac{k}{n}$ for all $i\in V$,
        \item $\sum_{i\in V} p_i(c) = p$ for all $c\in X$, and
        \item $\sum_{i\in V} p_i(d) = 0$ for all $d\in C\setminus X$.
    \end{enumerate}
\end{definition}

We further define the notion of threshold-maximal affordability, which is comparable to priceability from \citet{peters2020proportionality} and the rank-priceability axiom of \citet{BrPe23a}.

\begin{definition}[Threshold-maximal affordability]\label{def:threshold_affordable}
    For a given instance, a set of candidates $X$ is called \emph{threshold-maximally affordable} iff there exists a price system $(p, (p_i)_{i \in V})$, such that for all $h \in [0,1]$ and $d \in C^{\ge h} \setminus X$ it holds that
    $$
        \sum_{i\in V[d]} (\frac{k}{n} - \sum_{c\in X^{\geq h}} p_i(c)) < p.
    $$
\end{definition}

\begin{proposition} \label{prop:theshold-maximally-affordable_threshold-PJR+}
    Any completion of a threshold-maximally affordable set of candidates satisfies threshold-PJR+.
\end{proposition}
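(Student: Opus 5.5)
Let $X$ be threshold-maximally affordable with price system $(p,(p_i)_{i\in V})$, and let $W \supseteq X$ be any completion of size $k$. We argue by contradiction.

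\textbf{Setting up a violation.} Suppose $W$ violates threshold-PJR+. Then there are $\ell\in[k]$, $\tau\in[0,1]$ and a group $S\subseteq V$ with $|S|\ge \ell\frac nk$, a candidate $d\in \appcut{S}^{\ge\tau}\setminus W$, and $|\appjoin{S}\cap W^{\ge\tau}|\le \ell-1$. As in the characterization proof, we may raise $\tau$ to $h_d$ without loss of generality, since this only shrinks $\appjoin{S}\cap W^{\ge\tau}$. So we work with $h\coloneqq h_d$ and $d\in C^{\ge h}\setminus X$. The candidate $d$ is outside $X$ because $d\notin W\supseteq X$.

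\textbf{Lower-bounding the leftover budget.} Every voter in $S$ approves $d$, so $S\subseteq V[d]$. All summands in the affordability condition are nonnegative, since condition~2 of the price system bounds each voter's total payment by $\frac kn$. Hence
\[
\sum_{i\in V[d]}\Bigl(\tfrac kn-\sum_{c\in X^{\ge h}}p_i(c)\Bigr)\;\ge\;\sum_{i\in S}\tfrac kn-\sum_{i\in S}\sum_{c\in X^{\ge h}}p_i(c).
\]
The first term is at least $\ell$. For the second term, a voter $i\in S$ pays only for candidates it approves (condition~1), so only $c\in X^{\ge h}\cap\appjoin{S}\subseteq \appjoin{S}\cap W^{\ge h}$ contribute. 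There are at most $\ell-1$ such candidates. By condition~3, each receives total payment exactly $p$ summed over all voters, and hence at most $p$ from $S$. The double sum is therefore at most $(\ell-1)p$.

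\textbf{Deriving the contradiction.} Combining the bounds, the leftover budget of $V[d]$ at level $h$ is at least $\ell-(\ell-1)p$. Since $p\le 1$, we have $\ell-(\ell-1)p\ge \ell p-(\ell-1)p=p$. This contradicts the strict inequality $<p$ required by threshold-maximal affordability for $d$ and $h$. So every completion satisfies threshold-PJR+.

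\textbf{Main obstacle.} The delicate step is the reduction $\tau\mapsto h_d$, which is needed so that the affordability condition (stated at quality levels of the form $h$ with $d\in C^{\ge h}$) applies to exactly the set $X^{\ge\tau}$ counted in the violation. Because the definition quantifies over all $h\in[0,1]$, one can in fact use $h=\tau$ directly, with $d\in C^{\ge\tau}$, and skip the reduction altogether.

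The other point to handle carefully is the inequality $\ell-(\ell-1)p\ge p$. It relies on $p\le 1$, which the definition of a price system guarantees.
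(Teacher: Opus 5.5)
Your proof is correct and follows the paper's argument. Both bound what $S$ can have paid for candidates in $X^{\ge h}$ by $(\ell-1)p$, since such candidates lie in $\appjoin{S}\cap W^{\ge h}$. Both then conclude that the leftover budget is at least $\ell-(\ell-1)p\ge p$, which contradicts threshold-maximal affordability at $d$ and $h=\tau$. One step the paper leaves implicit is passing from the sum over $V[d]$ to the sum over $S$; you justify it explicitly by the nonnegativity of each voter's leftover budget (condition~2).
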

\begin{proof}
    Let $X \subseteq C$ of size $\le k$ be threshold-maximally affordable certified by some price system $(p, (p_i)_{i \in V})$ and let $W$ be an arbitrary completion of $X$.
    Assume for contradiction that there exist a threshold $h$, candidate $d \notin W$ with $h_d \ge h$ and a group of supporting voters $S\subseteq V[d]$ such that $S$ is $\ell$-large and $|\appjoin{S} \cap W^{\geq h}|<\ell$.
    Then, the total money that the voters in $S$ can have spent on candidates in $X^{\geq h}$ is at most $(\ell-1) \cdot p \le (\ell-1)$. Therefore, the voters collectively have a remaining budget of $ \lvert S \rvert \frac{k}{n} - (\ell-1) \cdot p \ge \frac{n}{k} \ell \frac{k}{n} - (\ell-1) = 1 \ge p$, a violation of threshold-maximal affordability.
\end{proof}

We can now show \Cref{prop:EAR_threshold-PJR+}.

\begin{proof}[Proof of \Cref{prop:EAR_threshold-PJR+}.]
    Let $W$ be the committee obtained by an EAR. Let $X$ be the uncompleted set of candidates from the EAR and $(1, (p_i)_{i \in V})$ the corresponding price system. We show that $X$ is threshold-maximally affordable. If for some $h$, a candidate $d \in C\setminus X$ with $h_d \ge h$ violates the condition, i.e., $\sum_{i\in V[d]} (\frac{k}{n} - \sum_{c\in X^{\geq h}} p_i(c)) \ge 1$, then we claim that the algorithm would not have stopped the first phase with $X$: after each step during the algorithm with working committee $X'$, we have that $b_i = \frac{k}{n} - \sum_{c\in X'} p_i(c)$. Therefore, consider step $j$ of the algorithm where $h_d = h^j$. Since $h^j = h_d \ge h$, we have $\sum_{i\in V[d]} {b_i} = \sum_{i\in V[d]} (\frac{k}{n} - \sum_{c\in X^{\geq h_d}} p_i(c)) \ge \sum_{i\in V[d]} (\frac{k}{n} - \sum_{c\in X^{\geq h}} p_i(c)) \ge 1$. However, since $d \in C^j$, step $j$ would not have terminated.
    Since $W$ is a completion of the threshold-maximally affordable set $X$, it follows from \Cref{prop:theshold-maximally-affordable_threshold-PJR+} that $W$ satisfies threshold-PJR+.
\end{proof}

While EAR satisfies threshold-PJR+ in polynomial time, the next natural step is to look for a rule which satisfies this axiom and, additionally, outputs a threshold-EJR committee whenever such a committee exists. We show that this is not possible in polynomial time by reduction from the vertex cover problem.

\begin{proposition}\label{prop:threshold-EJR_existence_hardness}
    If $f$ is a rule returning only committees satisfying threshold-EJR whenever such committees exist, then computing $f$ is NP-hard. Similarly, determining whether a given instance admits a committee satisfying threshold-EJR is NP-hard.
\end{proposition}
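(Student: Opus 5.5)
The plan is a polynomial-time reduction from \textsc{Vertex Cover}. Given a graph $G=(N,E)$ and a budget $s$, I will build an instance $\mathcal I(G,s)$ that admits a threshold-EJR committee if and only if $G$ has a vertex cover of size at most $s$. Both claims of the statement then follow.
\begin{itemize}
\item If $f$ returns a threshold-EJR committee whenever one exists, we can run $f$ and verify its output against the (polynomially many) structured demands of the constructed instance. This decides \textsc{Vertex Cover}.
\item The existence question is exactly the question the reduction decides.
\end{itemize}
Note that verification of threshold-EJR is coNP-hard in general. So the construction must make the relevant groups and thresholds explicitly enumerable; this is the reason only two quality levels and few voter types will be used.

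The construction builds on the unsatisfiability gadget from the side example before the proposition. There, two voters each own a private quality-$1$ candidate and commonly approve two low-quality candidates. Threshold-JR at $\tau=1$ forces both private candidates, while threshold-EJR at the low threshold forces a second approved candidate for one of them. The gadget is therefore unsatisfiable unless an extra low-quality candidate approved by one of the two voters is selected. I will use one such gadget per edge $e=\{u,v\}$, so that each gadget is ``rescued'' exactly when it receives one extra seat. The rescuing candidates are vertex candidates $x_u$ of low quality, approved by one voter in each gadget of every edge incident to $u$. In addition:
\begin{itemize}
\item every gadget voter's private quality-$1$ candidate is forced by threshold-JR at $\tau=1$;
\item $k$ is set to the number of forced private candidates plus $s$;
\item dummy voters with empty ballots are added so that $n/k$ makes each gadget voter $1$-large and each gadget pair $2$-large.
\end{itemize}
Since the forced candidates use up all but $s$ seats, a threshold-EJR committee has to consist of the forced candidates plus $s$ vertex candidates that hit every edge gadget, i.e., a vertex cover. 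Conversely, the forced candidates together with the $x_v$ for $v$ in a cover of size $s$ (padded with further vertex candidates if the cover is smaller) should satisfy all demands.

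The order of work is as follows.
\begin{enumerate}
\item Fix the numbers $k$ and $n$ and the ballots, taking care that the dummies do not break any largeness conditions.
\item Prove ``cover $\Rightarrow$ committee'' by enumerating all groups at both thresholds. At $\tau=1$ only singletons matter. At the low threshold this becomes a standard EJR check, where each gadget voter approving a selected $x_v$ has two approved winners.
\item Prove ``committee $\Rightarrow$ cover'' via the forcing argument above.
\end{enumerate}

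I expect the main obstacle to be the sizing of the gadgets. Gadget voters must be large enough to have claims: each singleton must be $1$-large, so $n\le k$ up to dummies. Yet the many gadget voters must not jointly form unintended cohesive groups through shared vertex candidates. In particular, the supporters of $x_u$ across several gadgets commonly approve $x_u$ and could generate additional claims. I would first try using disjoint copies of low-quality candidates per gadget, connected to $x_u$ only via one voter each. If that fails, I would blow up each gadget voter into $\lambda$ identical copies, choosing $\lambda$ so that unions of voters from different gadgets are never $2$-cohesive at the low threshold.
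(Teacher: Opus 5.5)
Your reduction is essentially the paper's. Both reduce from \textsc{Vertex Cover} using two voters per edge, each with a private quality-$1$ candidate forced at $\tau=1$. Both set $k$ to the number of forced candidates plus the cover budget, with $n=k$: the paper pads with $q+1$ voters approving one extra forced quality-$1$ candidate $d^\star$, where you use $s$ empty ballots. In both, each edge pair is $2$-cohesive at the low threshold, which forces an endpoint into the committee.

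The only real difference is in the gadget, and it breaks one of your sentences as written. The paper makes the two commonly approved low-quality candidates of the gadget for $e=\{x,y\}$ the vertex candidates themselves, via ballots $\{d^1_e,x,y\}$ and $\{d^2_e,x,y\}$. Then every free seat that rescues a gadget is a vertex candidate. With your separate per-gadget copies $g^1_e,g^2_e$, the claim ``a threshold-EJR committee has to consist of the forced candidates plus $s$ vertex candidates'' is false. A free seat may go to some $g^j_e$, which rescues gadget $e$ equally well. To extract a cover of size at most $s$, you must map each such seat to an endpoint of $e$. Alternatively, simply adopt the paper's gadget.

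Your anticipated obstacle does not arise in either design. Voters from different gadgets share at most one candidate, so multi-gadget groups only raise $1$-claims, and the selected private candidates already meet these at every threshold. The $\lambda$-fold blow-up is therefore unnecessary.
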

\begin{proof}
    We reduce from \textsc{Vertex Cover}, which is NP-complete. Given an undirected graph $G=(U,E)$ and an integer $q$, the problem asks whether there exists a set $X \subseteq U$ of size at most $q$ such that $X \cap e \neq \emptyset$ for every edge $e \in E$. We may assume that $q \leq |U|$. Since every superset of a vertex cover is again a vertex cover, a yes-instance then admits a vertex cover of size exactly $q$.

    Let $f$ be a rule that only outputs committees satisfying threshold-EJR whenever such committees exist, and arbitrary committees otherwise.
    We use $f$ as a black-box for vertex cover: Let $(G=(U,E), q)$ be a vertex cover instance without isolated vertices.
    We construct a voting instance $(A,k,h)$ in which we have to select $2\lvert E \rvert + 1$ dummy-candidates and at most $q$ candidates corresponding to vertices. Intuitively any committee satisfying threshold-EJR should correspond to a vertex cover.
    \begin{itemize}
        \item Let $n = k = 2\lvert E \rvert + q + 1$.
        \item For each vertex $x \in U$, define a corresponding candidate $x \in C$ with quality $h_x = 0.5$.
        \item For each $e \in E$, we define two voters $v^1_e$, $v^2_e$, and corresponding dummy candidates $d_e^1$, $d_e^2$ with quality $h_{d_e^1} = h_{d_e^2} = 1$.
        \item Each voter $v_e^j$ for $e=\{x,y\} \in E$ and $j \in \{1,2\}$ has approval ballot $A_{v_e^j} = \{d^j_e, x, y\}$
        \item Define a dummy candidate $d^\star$ with quality $h_{d^\star} = 1$ and create $q + 1$ dummy voters $v^\star_j$ for $j \in [q+1]$ who approve only $A_{v^\star_j} = \{d^\star\}$.
    \end{itemize}

    We claim that in this voting instance $(A,k,h)$, a committee $W$ satisfies threshold-EJR if and only if it contains all dummy candidates and $W \cap U$ is a size $q$ vertex cover for $G$.

    Note first that each voter $v_e^j$ is 1-cohesive at threshold $1$ due to their dummy candidate $d_e^j$. All dummy voters $v^\star_j$ are also 1-cohesive at threshold $1$ for the dummy candidate $d^\star$. Therefore, any committee satisfying threshold-EJR must contain all $2\lvert E\rvert + 1$ dummy candidates.
    Now, let $W$ be a committee satisfying threshold-EJR. Then, since it contains all dummy candidates, it contains exactly $q$ vertex candidates. Consider any edge $e = \{x,y\}$. Note that in the voting instance, $v_e^1, v_e^2$ are 2-cohesive at threshold $\tau = 0.5$, as they both approve of $x$ and $y$. $W$ must therefore contain at least two candidates from either $A_{v^1_e} = \{d^1_e, x, y\}$ or $A_{v^2_e} = \{d^2_e, x, y\}$, which implies that either $x$ or $y$ has to be contained in $W$. Since this holds for all edges $W\cap U$ is a vertex cover.

    For the other direction, let $X$ be a size $q$ vertex cover and let $W$ of size $2\lvert E \rvert + q + 1$ be the union of $X$ and all dummy candidates. Since all candidates with quality $1$ are contained in $W$, there can be no EJR-violation at this threshold. We therefore consider the threshold 0.5. Each individual voter is 1-cohesive and thus deserves one approved candidate in $W$. This demand is covered by the dummy candidates, as each voter approves exactly one of them. To be at least two-cohesive, a group must consist of multiple voters who share multiple approvals. The only such groups are of the form  $v_e^1,v_e^2$ for some $e = \{x,y\}$, who share the two approvals $x, y$. Since they are two-cohesive, at least one of the voters must be represented by two candidates in $W$. Indeed, since $W \cap U$ is a vertex cover and $e = \{x,y\}\in E$, we have that $\lvert \{x,y\} \cap (W\cap U) \rvert \ge 1$. Together with the corresponding dummy candidate, this means that each of the 2-cohesive groups is covered. Thus, $W$ satisfies threshold-EJR.

    Therefore, $f$ can be used as a blackbox to solve vertex cover as follows: Take $(G,q)$ as input, form the corresponding voting instance $(A,k,h)$ and compute any $W = f(A,k,h)$. Verify if $W$ contains all dummy candidates and if $W \cap U$ is a vertex cover in $G$. If yes, return that there exists a vertex cover of size $q$, if no, return that no such vertex cover exists. If a vertex cover $X$ of size $q$ exists, then a threshold-EJR committee $W$ exists and then $W \cap U$ is a valid vertex cover of size $q$, so our algorithm returns yes. If no such cover exists, then our algorithm will return no by definition. NP-hardness now follows from NP-hardness of vertex cover.

	To prove that deciding whether a given instance admits any committee satisfying threshold-EJR is NP-hard, consider the same induced instance. By the above proven claim, the instance admits a committee satisfying threshold-EJR if and only if the original vertex cover problem admits a size $q$ vertex cover.
\end{proof}

\subsection{PJR and Value-EJR via Value-GCR} \label{app:value-GCR_value-EJR_PJR}

In this section, we show that value-GCR satisfies value-EJR and standard PJR. For convenience, we restate the definition and theorem here.

\valueGCR*

\valueGCRValueEJR*

\begin{proof}
    Consider any profile $A$ and target size $k$. We first show that value-GCR terminates with a committee $W$ of size at most $k$.

    Let $r$ be the number of iterations of phase 1.
    Let $(S_x, \ell_x)$ be the witness and $T_x$ the set of candidates maximizing $h(\appcut{S_x}\mid \ell_x)$ that was added in iteration $x \in [r]$. Moreover, let $W_x$ denote the set of candidates after iteration $x$. We claim that all $S_x$ are disjoint. Indeed, once $T_x \setminus W_{x-1}$ is added to $W_{x-1}$, for all $x' > x$ it holds that $h(A_i \cap W_{x'-1}) \ge h(A_i \cap W_x) \ge h(T_x) \ge h(T_{x'})$ for all $i \in S_x$, and therefore $i \notin S_{x'}$ for all $i \in S_x$.

    We now consider the budgets from this process. Each voter $i$ starts with a budget of $b_i = \frac{k}{n}$ and since the voter sets are disjoint, before iteration $x$ the budget of all voters in $S_x$ is still $\frac{k}{n}$. We split the cost of $|T_x \setminus W_{x-1}| \le \ell_x$ over $|S_x| \ge \ell_x \frac{n}{k}$ voters, thus each voter is charged $\frac{|T_x \setminus W_{x-1}|}{|S_x|} \le \frac{\ell_x}{\ell_x \frac{n}{k}} = \frac{k}{n}$. Thus no voter can have a negative budget after iteration $r$. Phase 2 distributes the cost of selected candidates over supporting voters without exceeding their budget. Let $W'$ be the committee after phase 2. Then for each candidate in $W'$ their cost of $1$ was distributed over the voters without exceeding any voter's budget. The size of $W'$ is therefore bounded by the sum of initial budgets $\sum_{i \in V} b_i = k$. The completion step does not increase the size of $W'$ above $k$ by definition, therefore $|W| \le k$.

    We now argue that $W$ satisfies both value-EJR and standard PJR.
    The satisfaction of value-EJR follows immediately from the termination condition of phase 1 and $W \supseteq W_r$.
    Suppose that $W$ does not satisfy PJR, i.e., suppose there is some $\ell \in [k]$ and an $\ell$-large group $S$ with $|\appcut{S}| \ge \ell$ but $|\appjoin{S} \cap W| < \ell$. Then there must be a candidate $c \in \appcut{S}$ with $c \notin W$ and the voters in $S$ cannot have spent more than a budget of $\ell-1$, while their joint budget is at least $\ell$. Therefore, they still would have sufficient budget to purchase $c$, and phase 2 would not have terminated. \qedhere
\end{proof}

\subsection{Value-EJR-1 via Quality-utility MES} \label{app:MES_value-EJR-1}

In this section we demonstrate that while computing a value-EJR committee is NP-hard, we can compute a committee satisfying a relaxation of value-EJR efficiently. The results of this section are mostly translations of results from \citet{peters2021proportional} into our setting.
\begin{definition}[\citep{peters2021proportional}]
    A committee $W$ satisfies \emph{value-EJR-up-to-one-candidate} (value-EJR-1) on instance $\mathcal I$, if for each $\ell \le k$ and each $\ell$-cohesive $S\subseteq V$, we have $\max_{i\in S} h(W \cap A_i) \ge h(\appcut{S} | \ell)$ or there is a candidate\footnote{We can restrict $c$ to this set, see Footnote 9 in the full version of \citet{peters2021proportional}.} $c^\star\in \appcut{S}$ with $\max_{i\in S} h((W \cup \{c^\star\}) \cap A_i) > h(\appcut{S} | \ell).$
\end{definition}

We denote by \emph{quality-utility MES} the standard MES (with arbitrary completion) for cardinal utilities \citep{peters2021proportional} with unit costs and utilities $u_i(c) = h_c$ for $c \in A_i$. The results from \citet{peters2021proportional} then immediately imply that quality-utility MES runs in polynomial time and satisfies value-EJR-1 and PJR.

\begin{proposition}[\citealp{peters2021proportional}]
    Quality-utility MES satisfies value-EJR-1 and PJR.
\end{proposition}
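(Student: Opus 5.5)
Since quality-utility MES is exactly the Method of Equal Shares of \citet{peters2021proportional} for unit costs and additive utilities $u_i(c)=h_c\cdot\mathbb 1[c\in A_i]$, the plan is to transfer their guarantees by matching definitions rather than re-proving them. First, I will place the instance $\mathcal I=(A,k,h)$ in the participatory budgeting framework: candidates become projects of cost $1$, the budget is $k$, and voter $i$ has utility $h_c$ for $c\in A_i$ and $0$ otherwise. Next, I will check that the cohesiveness notion of \citet{peters2021proportional} specializes to $\ell$-cohesiveness here. In their notion, a group $S$ is $(\alpha,T)$-cohesive if $\lvert S\rvert\ge \frac{\lvert T\rvert}{k}n$ and $\alpha_c\le u_i(c)$ for all $i\in S$. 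Since utilities are uniform, the maximal $\alpha$ for a set $T\subseteq\appcut S$ with $\lvert T\rvert=\ell$ is $\alpha_c=h_c$. The entitlement $\sum_{c\in T}\alpha_c$, maximized over $T$, is therefore exactly $h(\appcut S\mid \ell)$. Candidates outside $\appcut S$ give $\alpha_c=0$ for some voter, so they never help.

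Under this translation, their EJR-up-to-one-project becomes exactly value-EJR-1, including the restriction of $c^\star$ to $\appcut S$ noted in the footnote. Their theorem that MES satisfies EJR up to one project for additive utilities then gives the first claim directly. The completion step only adds candidates, and utilities are monotone, so it cannot destroy the property.

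For standard PJR, I will give a direct budget argument. Voters pay only for approved candidates, and each selected candidate costs $1$. Suppose an $\ell$-cohesive group $S$ has $\lvert\appjoin S\cap W\rvert<\ell$. Then $S$ has spent at most $\ell-1$ in total. Its remaining budget is thus at least $\ell-(\ell-1)=1$ on some $c\in\appcut S\setminus W$, and every member of $S$ has positive utility $h_c>0$ for $c$. It remains to show that this affordability contradicts MES having stopped.

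The main obstacle is that MES stops only when no candidate can be bought at some $\rho$, where voters pay in proportion to utility. I would argue that $c$ is still affordable: all supporters in $S$ have the same utility $h_c$ for $c$, and a price per utility of $\rho=1/(h_c\cdot\text{(remaining budgets)})$-style capping always lets them cover the cost $1$ once their combined remaining budget is at least $1$. This is exactly the lemma behind PJR in \citet{peters2021proportional}, which I would cite after verifying that its hypothesis is positive utility for all supporters. Finally, polynomial running time is immediate, since each round computes $\rho$ for at most $m$ candidates via sorting.
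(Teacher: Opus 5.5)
Your proposal is correct. The paper gives no argument beyond citing \citet{peters2021proportional} for both claims.

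For value-EJR-1 you follow the same route, but you make the translation explicit. Taking $T$ to be the $\ell$ best candidates of $\appcut{S}$ and $\alpha_c=h_c$ shows that every value-EJR-1 demand is an instance of their EJR-up-to-one-project demand. You also note that completion cannot break the property, by monotonicity.

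For PJR you take a different route: a self-contained budget argument instead of a citation. It is essentially the argument the paper itself uses for phase~2 of value-GCR. This buys transparency, because standard PJR is stated in terms of the approval profile rather than the utilities $h_c$, and your argument shows directly why it holds.

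The step you flag as the main obstacle is in fact immediate and needs no lemma from the cited work. Every $i\in V[c]$ has $u_i(c)=h_c>0$, so for $\rho\ge \max_{i\in V[c]} b_i/h_c$ we get
\[
\textstyle\sum_{i\in V[c]}\min(b_i,h_c\rho)=\sum_{i\in V[c]} b_i\ge \sum_{i\in S} b_i\ge 1 .
\]
Hence $c$ would still be affordable when MES stops, which is the contradiction you need. Your expression $\rho=1/(h_c\cdot(\text{remaining budgets}))$ should be replaced by this choice of $\rho$.
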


Since quality-utility MES runs in polynomial time, it cannot satisfy value-EJR unless $\mathrm{P}=\mathrm{NP}$ (see \Cref{thm:value-EJR-hard}). We show that it does not even satisfy value-JR.

\begin{proposition}
  Quality-utility MES does not satisfy value-JR with any completion method.
\end{proposition}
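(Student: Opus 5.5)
We would prove this by a small explicit counterexample. The idea is an instance in which quality-utility MES spends its first seat on a widely approved but medium-quality candidate. That purchase drains enough budget that no high-quality candidate is affordable afterwards. Since $k=2$, the completion can then add only one more candidate, which cannot satisfy both groups that are entitled to a quality-$1$ candidate.

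\emph{Instance.} Take $n=k=2$, so every voter starts with budget $\frac{k}{n}=1$ and every singleton is $1$-large. Use three candidates: $a$ and $b$ of quality $1$, and $d$ of quality $\frac34$. Voter $v_1$ approves $\{a,d\}$ and voter $v_2$ approves $\{b,d\}$. Utilities are $u_i(c)=h_c$ for approved $c$.

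\emph{MES run.} Recall that MES picks the candidate minimising $\rho$, where $\rho$ solves $\sum_{i\in V[c]}\min(b_i,\rho\,u_i(c))=1$.
\begin{itemize}
\item For $a$, only $v_1$ pays, so $\rho\cdot 1=1$ and $\rho=1$. The same holds for $b$.
\item For $d$, both voters pay, so $2\cdot\rho\cdot\frac34=1$ and $\rho=\frac23<1$.
\end{itemize}
Hence MES selects $d$ first, and each voter pays $\frac12$ and keeps $\frac12$. Now $a$ is supported only by $v_1$, whose remaining budget is $\frac12<1$, and symmetrically $b$ is unaffordable for $v_2$. So MES stops with the partial committee $\{d\}$.

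\emph{Every completion fails value-JR.} Any completion adds exactly one candidate, giving $\{d,a\}$ or $\{d,b\}$. Take $\{d,a\}$. The group $\{v_2\}$ is $1$-cohesive with $h(\appcut{\{v_2\}}\mid 1)=h_b=1$, but $h(A_{v_2}\cap W)=h_d=\frac34<1$. The case $\{d,b\}$ fails symmetrically for $\{v_1\}$. Note that $\{a,b\}$ is the only value-JR committee here, but it is not a superset of $\{d\}$ and so is unreachable by any completion method.

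The only delicate point is choosing the quality of $d$. It must be large enough that $d$ beats $a$ and $b$ in the MES comparison, which needs $2h_d>1$. It must also be below $1$, so that $d$ alone does not meet the quality-$1$ entitlement of each singleton. Any value in $(\frac12,1)$ works, and $\frac34$ is one such choice. Once this is fixed, the argument above is routine.
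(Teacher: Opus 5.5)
Your proof is correct and is essentially the paper's argument: the paper uses the same two-voter, three-candidate instance with $k=2$ (shared candidate of quality $\frac{2}{3}$ where you use $\frac{3}{4}$). MES buys the shared candidate first, neither quality-$1$ candidate is affordable afterwards, and every completion leaves one singleton voter below its quality-$1$ entitlement.
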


\begin{table}
  \caption{Example instance (from \citet{peters2021proportional}) showing that quality-utility MES with any completion method fails value-JR.}
  \label{tab:MES_vEJR}
  \centering
  \begin{tabular}{c c c c}
    \toprule
    $k = 2$ & $c_1$ & $c_2$ & $c_3$\\
    quality & $2/3$ & $1$ & $1$\\
    \midrule
    $v_1$ & \checkmark & \checkmark & \\
    $v_2$ & \checkmark & & \checkmark \\
    \bottomrule
  \end{tabular}
\end{table}

\begin{proof}
    Consider the instance in \Cref{tab:MES_vEJR}. Quality-utility MES selects $c_1$ first and afterwards no project is affordable. Without loss of generality, assume that the completion chooses $c_3$. Then, $S = \{v_1\}$ is 1-cohesive over $T = \{c_2\}$, but $h(\appjoin{S} \cap W) = \frac{2}{3} < 1 = h(\appcut{S} \mid 1)$, violating value-JR.
\end{proof}

\section{Reciprocal Proportionality}\label{app:reciprocal}
We first define reciprocal notions and discuss their relations, then provide axiomatic results.

\subsection{Reciprocal Notions and Relations} \label{app:reciprocal_notions_and_relations}

We first give definitions of all reciprocal axioms in \Cref{tab:reciprocal-proportionality-notions}, except for reciprocal threshold-PJR+, which we define below.
\begin{definition} \label{def:rtPJR+}
    A committee $W$ satisfies  \emph{reciprocal threshold-PJR+} if, for every candidate $d \in C \setminus W$ and every nonempty group of voters $S\subseteq V[d]$, it holds that $\lvert S \rvert \frac{k}{n} < \frac{1}{h_d} + \sum_{c\in \appjoin{S} \cap W^{\ge h_d}}\frac{1}{h_c}$.
\end{definition}

\begin{table*}[t]
    \centering
    \small
    \renewcommand{\arraystretch}{1.25}
    \setlength{\tabcolsep}{4pt}

    \caption{Entitlement and representation requirements of the reciprocal proportionality notions.  }
    \label{tab:reciprocal-proportionality-notions}
    \begin{tabularx}{\textwidth}{
    @{}
    l
    >{\raggedright\arraybackslash}X
    >{\raggedright\arraybackslash}X
    >{\centering\arraybackslash}p{1.6cm}
    >{\centering\arraybackslash}p{2.4cm}
    @{}
    }
        \toprule
        Axiom
          & Entitlement condition\newline
            \emph{For every
            \(S\subseteq V\), \(\ldots\)}
          & Required representation
          & Verifiable in polynomial time
          & One committee computable in polynomial time
          \\
        \midrule

        rec. JR
        & \multirow[c]{3}{=}{%
        \centering
        if \(S\) can reciprocally afford
        \(T\subseteq\appcut{S}\) of size $\ell = \lvert T \rvert \ge 1$\par
        }
          & \(|\appjoin{S} \cap W|\ge 1 \).
          & Yes
          & Yes
          \\

        rec. PJR
          &
          & \(|\appjoin{S} \cap W|\ge \ell \).
          & No
          & Yes
          \\

        rec. EJR
          &
          & \(\displaystyle
              \max_{i\in S}|A_i\cap W|\ge \ell\).
          & No
          & Yes
          \\

        \midrule

        rec. value-JR
        & \multirow[c]{3}{=}{%
        \centering
        if \(S\) can reciprocally afford
        \(T\subseteq\appcut{S}\) of size $\ell = \lvert T \rvert \ge 1$\par
        }
          & \(\displaystyle
              h\bigl(\appjoin{S} \cap W\bigr)\ge h(T\mid 1)\).
          & ?
          & Yes
          \\

        rec. value-PJR
          &
          & \(\displaystyle
              h\bigl(\appjoin{S} \cap W\bigr)\ge h(T \mid \ell)\).
          & No
          & Yes
          \\

        rec. value-EJR
          &
          & \(\displaystyle
              \max_{i\in S}h(A_i\cap W)
              \ge h(T\mid\ell)\).
          & No
          & ?
          \\

        \midrule

        rec. threshold-JR
        & \multirow[c]{3}{=}{%
        \centering
        \emph{For every \(\tau\in[0,1]\):}\par
        if \(S\) can reciprocally afford
        \(T\subseteq\appcut{S}^{\ge\tau}\) of size $\ell = \lvert T \rvert \ge 1$\par
        }
          & \(\displaystyle
              \left|\appjoin{S} \cap W^{\ge\tau}\right|\ge 1\).
          & Yes
          & Yes
          \\

        rec. threshold-PJR
          &
          & \(\displaystyle
              \left|\appjoin{S} \cap W^{\ge\tau}\right|\ge \ell\).
          & No
          & Yes
          \\

        rec. threshold-EJR
          &
          & \(\displaystyle
              \max_{i\in S}|A_i\cap W^{\ge\tau}|
              \ge \ell\).
          & No
          & Not guaranteed to exist.
          \\

        \bottomrule
    \end{tabularx}

\end{table*}

The relationships between the reciprocal axioms exactly match the relationships of their non-reciprocal versions. \Cref{fig:relations_reciprocal} summarizes these relations and links to the corresponding results and proofs below.

As for the threshold-EJR, we can show that reciprocal threshold-EJR, though weaker, remains not always satisfiable.

\begin{sideexample}{%
  \begin{tabular}{@{}c c c c c c@{}}
    \toprule
    $k=3$     & $c_1$      & $c_2$      & $c_3$      & $d_1$      & $d_2$      \\
    quality & $1$        & $1$        & $1$        & $\frac{2}{3}$      & $\frac{2}{3}$      \\
    \midrule
    $v_1$     & \checkmark &             &            & \checkmark & \checkmark \\
    $v_2$     &            & \checkmark  &            & \checkmark & \checkmark \\
    $v_3$     &            &             & \checkmark & \checkmark & \checkmark \\
    \bottomrule
  \end{tabular}%
}
Consider the following instance with $n=k=3$.
At $\tau=1$, each voter can reciprocally afford one candidate and deserves representation from a candidate with quality $1$, thus $c_1$, $c_2$ and $c_3$ must be selected. At $\tau=\frac{2}{3}$, all voters together can reciprocally afford $\{d_1, d_2\}$ and therefore some voter must be represented by two candidates, which is only possible if $d_1$ or $d_2$ is selected.
\end{sideexample}

\begin{figure}
  \centering
  \begin{tikzpicture}[
      x=5.0cm,
      y=1.8cm,
      every node/.style={align=center}
    ]

    \node (EJRl) at (-1,2) {rec.\,EJR};
    \node (PJRl) at (-1,1) {rec.\,PJR};
    \node (JRl)  at (-1,0) {rec.\,JR};

    \node (tEJR) at (0,2)
    {rec.\,threshold-EJR\\[-1mm]\scriptsize unsatisfiable};
    \node (tPJR) at (0,1)
    {rec.\,threshold-PJR\\[-1mm]\scriptsize polynomial-time satisfiable};
    \node (tJR) at (0,0)
    {rec.\,threshold-JR};

    \node (vEJR) at (1,2)
    {rec.\,value-EJR\\[-1mm]\scriptsize satisfiable};
    \node (vPJR) at (1,1)
    {\,\,rec.\,value-PJR};
    \node (vJR) at (1,0)
    {rec.\,value-JR};

    \node (EJRr) at (2,2) {rec.\,EJR};
    \node (PJRr) at (2,1) {rec.\,PJR};
    \node (JRr)  at (2,0) {rec.\,JR};

    \draw[implication]
    (tEJR.south) -- (tPJR.north);

    \draw[implication]
    (tPJR.south) -- (tJR.north);

    \draw[implication]
    (vEJR.south) -- (vPJR.north);

    \draw[implication]
    (vPJR.south) -- (vJR.north);

    \draw[implication] (EJRl.south) -- (PJRl.north);
    \draw[implication] (PJRl.south) -- (JRl.north);

    \draw[implication] (EJRr.south) -- (PJRr.north);
    \draw[implication] (PJRr.south) -- (JRr.north);

    \arrowref[0.25]
      {incompatibility}
      {EJRl.south east}
      {tJR.north west}
      {above}
        {\cshref{prop:rEJR_rtJR_incompatible}}

    \arrowref[0.30]
        {nonimplication}
        {vEJR.east}
        {PJRr.west}
        {above}
        {\cshref{cor:rvEJR_rPJR_compatible}, \cshref{prop:rvEJR_rPJR}}

    \arrowref[0.3]
        {incompatibility}
        {vJR.north east}
        {EJRr.south west}
        {above}
        {\cshref{prop:rEJR_rvJR_incompatible}}

    \arrowref[0.5]
        {incompatibility}
        {tEJR.east}
        {vEJR.west}
        {above}
        {\cshref{thm:rec_threshold_ejr_not_rec_value_ejr}*}

    \arrowref[0.23]
        {incompatibility}
        {tJR.north east}
        {vEJR.south west}
        {below}
        {\cshref{prop:rvEJR_rtJR_incompatible}}

    \arrowref[0.5]
        {implication}
        {tEJR.west}
        {EJRl.east}
        {below}
        {\cshref{obs:rthreshold_implies_basic}}

    \arrowref[0.65]
        {implication}
        {tPJR.west}
        {PJRl.east}
        {below}
        {\cshref{obs:rthreshold_implies_basic}}

    \arrowref[0.5]
        {implication}
        {tJR.west}
        {JRl.east}
        {below}
        {\cshref{obs:rthreshold_implies_basic}}

    \arrowref[0.5]
        {implication}
        {vJR.east}
        {JRr.west}
        {below}
        {\cshref{obs:rJR_rvJR}}

    \arrowref[0.79]
        {implication}
        {tPJR.east}
        {vPJR.west}
        {below}
        {\cshref{prop:rtPJR_implies_rvPJR}}

    \arrowref[0.5]
        {implication}
        {tJR.east}
        {vJR.west}
        {below}
        {\cshref{cor:rtJR_implies_rvJR}}

  \end{tikzpicture}
  \caption{Relationships of reciprocal threshold-, reciprocal value- and reciprocal JR notions. Arrows denote implications, (red) dashed lines incompatibilities and dotted lines compatibilities without implications. Only the implications that can be inferred from the depicted arrows hold.\\
  \textcolor{red}{*} incompatibility on an instance where reciprocal threshold-EJR committees exist. }\label{fig:relations_reciprocal}
\end{figure}

\subsubsection{Threshold- and Value-JR Notions}

\begin{theorem}\label{thm:rec_threshold_ejr_not_rec_value_ejr}
  Reciprocal threshold-EJR and reciprocal value-EJR are incompatible, even on an instance that admits a reciprocal threshold-EJR committee.
\end{theorem}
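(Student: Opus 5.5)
The plan is to adapt the instance from \Cref{thm:threshold_ejr_not_value_ejr} (\Cref{tab:threshold_ejr_unique_no_value}). The same five ballot types stay in place: $\{a,b,e_1\}$, $\{a,b,f_1,f_2\}$, and $\{e_j,f_1,f_2,f_3\}$ for $j\in[3]$. Candidates $a,e_1,e_2,e_3$ get quality $1$ and $b,f_1,f_2,f_3$ get a lower quality $h<1$. The voter multiplicities (and $n,k$) are rescaled so that exactly the claims used in that proof become reciprocal claims, while no new claims appear. Write $B_j$ for the voters of type $\{e_j,f_1,f_2,f_3\}$, $B=B_1\cup B_2\cup B_3$, and $S$ for the voters of the first two types. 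The proof needs three reciprocal claims:
\begin{itemize}
\item $B_j$ reciprocally affords $\{e_j\}$, which requires $|B_j|\ge \frac nk$;
\item $B$ reciprocally affords $\{f_1,f_2,f_3\}$, which requires $|B|\ge \frac3h\cdot\frac nk$;
\item $S$ reciprocally affords $\{a,b\}$, which requires $|S|\ge (1+\frac1h)\frac nk$.
\end{itemize}
The first two suggest taking $|B_j| = \lceil \frac{n}{kh}\rceil$. With $k=5$, the intended committee should again be $W=\{e_1,e_2,e_3,f_1,f_2\}$.

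The uniqueness argument for reciprocal threshold-EJR then copies the original one. From the $\tau=1$ claims of the $B_j$, we get $e_1,e_2,e_3\in W'$. From the $\tau=h$ claim of $B$, with $\ell=3$, we get two of the $f$'s in $W'$, which excludes $a$ and $b$. From the $\tau=h$ claim of $S$, with $\ell=2$, the witness must be a type-$v_4$ voter, which forces $\{f_1,f_2\}\subseteq W'$. The value violation is also immediate for every $h<1$. $S$ demands $h(\{a,b\}\mid 2)=1+h$, but type-$v_1$ voters get $h(A_i\cap W)=1$ and type-$v_4$ voters get $2h$, both strictly below $1+h$.

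The main obstacle is calibrating the sizes so that $W$ actually satisfies reciprocal threshold-EJR, i.e., so that no unintended group becomes reciprocally entitled. The critical case is the voters of type $\{a,b,e_1\}$. If they alone could afford $\{a,e_1\}$ (cost $2\frac nk$), they would demand two approved quality-$1$ winners, but $W$ gives them only $e_1$. So their number must stay below $2\frac nk$. At the same time, $S$ needs $(1+\frac1h)\frac nk$ voters in total. I would therefore put most of $S$'s weight on the $v_4$-type voters. I would also check that these do not reciprocally afford more than $\{f_1,f_2\}$ together with $B$, which is met since every voter in $B$ approves two selected $f$'s plus one $e_j$.

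Concretely, I would try $h=\frac12$. This requires $|B_j|\ge 2\frac nk$, so $|B|\ge 6\frac nk$, and $|S|\ge 3\frac nk$; splitting $S$ into fewer than $2\frac nk$ type-$v_1$ voters and the rest type-$v_4$ meets these. The total $n$ is then at least $9\frac nk$, which is incompatible with $k=5$. So I would add $k-5$ extra seats, together with disjoint blocks of $\frac nk$ voters each approving a private quality-$1$ candidate. These blocks force their private candidates into the committee at $\tau=1$ and otherwise interact with nothing. Finally, I would verify the remaining claims by enumerating common approval sets by type intersection, as in the original proof, at the two thresholds $1$ and $h$.
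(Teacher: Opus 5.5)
Your overall strategy matches the paper's. The paper also adapts the non-reciprocal instance: blocks force quality-$1$ candidates $e_j$ at $\tau=1$, a large claim on the lower-quality $f$'s fills the committee and excludes $a,b$, the claim of $S$ on $\{a,b\}$ can only be witnessed by an $\{a,b,f_1,f_2\}$-voter, and $2h<1+h$ gives the value violation. Your concrete calibration, however, cannot be realized.

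\textbf{The padding step fails by counting.} Forcing any candidate requires at least $\frac nk$ supporters, since reciprocal cost is at least $1$. Your core needs $|B|+|S|\ge(1+\frac4h)\frac nk$ voters but occupies only $5$ seats. Hence $n\ge(1+\frac4h)\frac nk+(k-5)\frac nk=n+(\frac4h-4)\frac nk>n$ for every $h<1$; at $h=\frac12$ this is your $9\frac nk$ plus $(k-5)\frac nk$. Disjoint private blocks never close this deficit. The missing idea is that the extra seats must be forced by \emph{overlapping} claims of the same voters, which is how EJR-type demands exceed the budget. The paper enlarges the core itself. Five blocks $R_1,\dots,R_5$ (sizes $3,2,2,2,2$) each force their own $e_j$. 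All $11$ voters of $R$ commonly approve four candidates $f_1,\dots,f_4$ of quality $0.8$, which they can afford since $11\ge 4\cdot\frac{1}{0.8}\cdot 2=10$. The resulting $\ell=4$ claim forces three $f$'s on top of the five $e$'s, exactly filling $k=8$ with $n=16$.

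\textbf{Shifting $S$'s weight onto the $\{a,b,f_1,f_2\}$-voters creates a claim you did not check.} Once there are at least $\frac nk$ of them, they alone reciprocally afford $\{a\}$ at $\tau=1$. None of them approves a selected quality-$1$ candidate. So every committee excluding $a$ violates reciprocal threshold-EJR, and your uniqueness argument forces $a$ out. The instance would then admit no reciprocal threshold-EJR committee at all. Combine this with your (correct) requirement of fewer than $2\frac nk$ voters of type $\{a,b,e_1\}$. You get $|S|<3\frac nk\le(1+\frac1h)\frac nk$ whenever $h\le\frac12$, so $h=\frac12$ is infeasible regardless of padding. The paper instead uses $h=0.8$ and a single $\{a,b,f_1,f_2\}$-voter. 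It splits the other $S$-voters into an $\{a,b,e_1\}$-pair and an $\{a,b,e_2\}$-pair, each too small to afford two quality-$1$ candidates, while still $|S|=5\ge(1+\frac{1}{0.8})\cdot 2=4.5$.
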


\begin{table}[H]
    \caption{An instance with $n=16$ and $k=8$ on which reciprocal threshold-EJR is satisfiable by the committees $\{e_1,\dots,e_5,f_1,f_2,f_3\}$ and $\{e_1,\dots,e_5,f_1,f_2,f_4\}$, but none of them satisfies reciprocal value-EJR.}
    \label{tab:rec_threshold_ejr_not_rec_value_ejr}
    \centering
        \begin{tabular}{c c c >{\columncolor{gray!20}} c >{\columncolor{gray!20}} c >{\columncolor{gray!20}} c >{\columncolor{gray!20}} c >{\columncolor{gray!20}} c >{\columncolor{gray!20}} c >{\columncolor{gray!20}} c >{\columncolor{gray!10}} c >{\columncolor{gray!10}} c}
        \toprule
        $k=8$ & $a$ & $b$ & $e_1$ & $e_2$ & $e_3$ & $e_4$ & $e_5$ & $f_1$ & $f_2$ & $f_3$ & $f_4$\\
        quality & $1$ & $0.8$ & $1$ & $1$ & $1$ & $1$ & $1$ & $0.8$ & $0.8$ & $0.8$ & $0.8$\\
        \midrule
        $v_1$ & \checkmark & \checkmark & \checkmark & & & & & & & & \\
        $v_2$ & \checkmark & \checkmark & \checkmark & & & & & & & & \\
        $v_3$ & \checkmark & \checkmark & & \checkmark & & & & & & & \\
        $v_4$ & \checkmark & \checkmark & & \checkmark & & & & & & & \\
        $v_5$ & \checkmark & \checkmark & & & & & & \checkmark & \checkmark & & \\
        $v_6$ & & & \checkmark & & & & & \checkmark & \checkmark & \checkmark & \checkmark \\
        $v_7$ & & & \checkmark & & & & & \checkmark & \checkmark & \checkmark & \checkmark \\
        $v_8$ & & & \checkmark & & & & & \checkmark & \checkmark & \checkmark & \checkmark \\
        $v_9$ & & & & \checkmark & & & & \checkmark & \checkmark & \checkmark & \checkmark \\
        $v_{10}$ & & & & \checkmark & & & & \checkmark & \checkmark & \checkmark & \checkmark \\
        $v_{11}$ & & & & & \checkmark & & & \checkmark & \checkmark & \checkmark & \checkmark \\
        $v_{12}$ & & & & & \checkmark & & & \checkmark & \checkmark & \checkmark & \checkmark \\
        $v_{13}$ & & & & & & \checkmark & & \checkmark & \checkmark & \checkmark & \checkmark \\
        $v_{14}$ & & & & & & \checkmark & & \checkmark & \checkmark & \checkmark & \checkmark \\
        $v_{15}$ & & & & & & & \checkmark & \checkmark & \checkmark & \checkmark & \checkmark \\
        $v_{16}$ & & & & & & & \checkmark & \checkmark & \checkmark & \checkmark & \checkmark \\
        \bottomrule
    \end{tabular}
\end{table}

\begin{proof}
    Consider the instance in \Cref{tab:rec_threshold_ejr_not_rec_value_ejr}. We have $n = 16$, $k = 8$, and hence $\frac{n}{k} = 2$. Let $S_1=\{v_1,v_2\}$, $S_2=\{v_3,v_4\}$, $Q=\{v_5\}$, and let $R_1, \dots, R_5$ denote the five remaining identical voter blocks, of sizes $3, 2, 2, 2, 2$, respectively. Write $R = R_1 \cup \dots \cup R_5$.

    We first identify all reciprocal threshold-EJR committees. For every $j \in [5]$, the voters in $R_j$ commonly approve $e_j$, which is the only quality-$1$ candidate on their ballots. Since $|R_j| \ge 2 \ge \frac{1}{h_{e_j}} \frac{n}{k} = 2$, reciprocal threshold-EJR forces each $e_j$ for $j \in [5]$ to be in the committee. Moreover, the $11$ voters in $R$ commonly approve $f_1, \dots, f_4$, and $|R| = 11 \ge \sum_{x=1}^4 \frac{1}{h_{f_x}} \frac{n}{k} = 10$. They therefore require some voter to approve four selected candidates of quality at least $0.8$. Since each voter in $R_j$ already approves $e_j$, at least three candidates from $\{f_1, \dots, f_4\}$ must be selected. The five candidates $e_1, \dots, e_5$ and these three candidates already fill the committee, so neither $a$ nor $b$ can be in it.
    Now consider the group $S = S_1 \cup S_2 \cup Q$. The voters in $S$ commonly approve $\{a, b\}$, and $\lvert S \rvert = 5 \ge (\frac{1}{h_a} + \frac{1}{h_b}) \frac{n}{k} = \frac{9}{2}$. Hence some voter in $S$ must approve two selected candidates of quality at least $0.8$. The voters in $S_1$ and $S_2$ approve only $e_1$ and $e_2$, respectively, among the selected candidates. Thus, the witness must be $v_5$, which forces $f_1, f_2$ to be in the committee. Consequently, the only possible reciprocal threshold-EJR committees are $W_3 = \{e_1, \dots, e_5, f_1, f_2, f_3\}$ and $W_4 = \{e_1, \dots, e_5, f_1, f_2, f_4\}$.

    It remains to show that both committees indeed satisfy reciprocal threshold-EJR. By symmetry, consider $W_3$. Every voter in $S_1$ and $S_2$ approves a selected quality-$1$ candidate, voter $v_5$ approves $f_1$ and $f_2$, and every voter in $R_j$ approves $e_j,f_1,f_2,f_3$. Groups contained in one of $S_1$, $S_2$, or $R_j$ are too small to raise a claim for two candidates, and their singleton claims are satisfied. Any group meeting two different $R$-blocks has common approvals contained in $\{f_1, \dots, f_4\}$ and contains a voter approving four selected candidates. Groups mixing $Q$ with $R$ have common approvals contained in $\{f_1, f_2\}$, which are both selected. Groups from $S_1 \cup R_1$ or $S_2 \cup R_2$ only approve one candidate in common, and all voters in these groups approve a quality-1 candidate. Finally, among groups contained in $S = S_1 \cup S_2 \cup Q$, the only eligible two-candidate claim is the claim of the full group $S$ for $\{a, b\}$, and this is witnessed by $v_5$ through $f_1$ and $f_2$. Thus, $W_3$ satisfies reciprocal threshold-EJR, and the same argument applies to $W_4$.

    Finally, both committees violate reciprocal value-EJR. The group $S = S_1 \cup S_2 \cup Q$ can claim $T = \{a, b\}$ as above, whose total quality is $h(T) = 1 + 0.8 = 1.8$. Under either $W_3$ or $W_4$, every voter in $S_1 \cup S_2$ has utility $1$, while $v_5$ has utility $h_{f_1} + h_{f_2} = 1.6$. Hence no voter in $S$ obtains utility at least $1.8$. Therefore, no committee satisfies both reciprocal threshold-EJR and reciprocal value-EJR.
\end{proof}

\begin{proposition}\label{prop:rtPJR_implies_rvPJR}
  Reciprocal threshold-PJR implies reciprocal value-PJR.
\end{proposition}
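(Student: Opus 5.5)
I will mirror the proof of \Cref{prop:tPJR_implies_vPJR}, replacing $\ell$-cohesiveness by reciprocal affordability. Let $W$ satisfy reciprocal threshold-PJR. Fix a group $S$ and a set $T\subseteq\appcut{S}$ with $\lvert T\rvert=\ell\ge 1$ that $S$ can reciprocally afford, i.e.\ $\lvert S\rvert\ge\sum_{c\in T}\frac{1}{h_c}\cdot\frac nk$. Reciprocal value-PJR requires $h(\appjoin{S}\cap W)\ge h(T\mid\ell)=h(T)$, because $\lvert T\rvert=\ell$. Order $T=\{c_1,\dots,c_\ell\}$ by non-increasing quality.

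The key step is to produce, for each $x\in[\ell]$, a reciprocal threshold-PJR claim of size $x$ at threshold $\tau_x=h_{c_x}$. Take the prefix $T_x=\{c_1,\dots,c_x\}$. Every candidate in $T_x$ has quality at least $h_{c_x}$, so $T_x\subseteq\appcut{S}^{\ge\tau_x}$. Moreover, $\sum_{c\in T_x}\frac1{h_c}\le\sum_{c\in T}\frac1{h_c}$, since reciprocals are positive. Hence $S$ can reciprocally afford $T_x$. This monotonicity of the affordability condition under taking subsets is the one point where the argument departs from the non-reciprocal case, and it is immediate.

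Reciprocal threshold-PJR applied to $(S,T_x,\tau_x)$ yields $\lvert\appjoin{S}\cap W^{\ge h_{c_x}}\rvert\ge x$. In particular, for $x=\ell$ we get $\lvert\appjoin{S}\cap W\rvert\ge\ell$. Let $d_1,\dots,d_\ell$ be the $\ell$ highest-quality candidates of $\appjoin{S}\cap W$ in non-increasing order. The count bound at each threshold $h_{c_x}$ forces $h_{d_x}\ge h_{c_x}$ pointwise. Summing over $x$ gives $h(\appjoin{S}\cap W)\ge\sum_x h_{d_x}\ge\sum_x h_{c_x}=h(T\mid\ell)$.

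I do not expect a real obstacle here. The only care needed is to check that the definition in \Cref{tab:reciprocal-proportionality-notions} measures the requirement as $h(T\mid\ell)$, which depends on the afforded set $T$ rather than on $\appcut{S}$. With that reading, the argument handles each afforded $T$ separately, exactly as above.
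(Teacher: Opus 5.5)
Your proof is correct and follows the paper's argument. You sort the afforded candidates, apply reciprocal threshold-PJR at each threshold $h_{c_x}$ to the prefix of size $x$, and sum the resulting pointwise bounds $h_{d_x}\ge h_{c_x}$. The only difference is that the paper first replaces $T$ by the $\ell$ highest-quality candidates $T^\star$ of $\appcut{S}$, which $S$ can also reciprocally afford, and so obtains $h(\appcut{S}\mid\ell)$. You work with $T$ directly, which suffices for the table's definition and gives the same stronger bound when applied to $T=T^\star$.
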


\begin{proof}
   Let $W$ be a reciprocal threshold-PJR committee.
  Consider any group of voters $S$ that can reciprocally afford a set $T \subseteq \appcut{S}$ of size $\ell$. Then $S$ can also reciprocally afford the set $$T^\star \in \argmax_{\substack{T\subseteq \appcut{S}\\ |T| = \ell}} \,\, \sum_{c \in T} h_c$$
  of $\ell$ commonly approved candidates with the highest quality, i.e., $\lvert S \rvert \ge \sum_{c \in T^\star}\frac{1}{h_c} \cdot \frac{n}{k}$. Define $c_1, \dots, c_\ell$ as the candidates in $T^\star$ in order of decreasing quality. Let $W_S=\appjoin{S} \cap W$. Note that reciprocal threshold-PJR with threshold $\tau = 0$ implies that $|W_S|\geq \ell$, as $S$ is large enough and $\ell$-cohesive. Again, define $d_1, \dots, d_\ell$ as the first $\ell$ candidates in $W_S$ in order of decreasing quality. In the following, we show that $h_{d_x}\geq h_{c_x}$ for all $x \in [\ell]$.
  Let $x \in [\ell]$. Note that $S$ is of size $\lvert S \rvert \ge \sum_{c \in \{c_1, \dots, c_x\}}\frac{1}{h_c} \cdot \frac{n}{k}$ and $x$-cohesive for the quality threshold $\tau_x = h_{c_x}$. Thus, reciprocal threshold-PJR implies that $|\appjoin{S} \cap W^{\ge \tau_x}| \ge x$. Therefore, the candidate with the $x$-th highest quality in $W_S$ needs to have quality at least $h_{c_x}$, so $h_{d_x} \ge h_{c_x}$. Since this holds for all $x \in [\ell]$, we have that
  $$
    h(\appjoin{S} \cap W)
    \ge \sum_{x \in [\ell]} h_{d_x}
    \geq \sum_{x \in [\ell]} h_{c_x}
    = \sum_{c \in T^\star} h_c
    = h(\appcut{S} \mid \ell),
  $$
  which completes the proof.
\end{proof}

The above proof works in particular for the special case of $\ell = 1$, giving the following corollary.

\begin{corollary}\label{cor:rtJR_implies_rvJR}
  Reciprocal threshold-JR implies reciprocal value-JR.
\end{corollary}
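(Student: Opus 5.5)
The corollary follows by specialising the proof of \Cref{prop:rtPJR_implies_rvPJR} to $\ell = 1$. Let $W$ satisfy reciprocal threshold-JR. Take any $S\subseteq V$ that can reciprocally afford a single candidate $T=\{c\}\subseteq \appcut{S}$, so that $\lvert S\rvert \ge \frac{1}{h_c}\cdot\frac nk$. The goal is to show that $h(\appjoin{S}\cap W)\ge h(T\mid 1)=h_c$. (If the table's right-hand side is read as $h(\appcut{S}\mid 1)$, the argument below adapts, but the table's formulation is the one I will use.)

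The key step is to apply reciprocal threshold-JR at the threshold $\tau = h_c$. Since $T\subseteq \appcut{S}^{\ge h_c}$ and $S$ can reciprocally afford $T$, the entitlement condition holds at level $\tau=h_c$. The axiom then yields some $d\in \appjoin{S}\cap W^{\ge h_c}$. This means $d\in \appjoin{S}\cap W$ with $h_d\ge h_c$.

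Since all qualities are positive, $h(\appjoin{S}\cap W)\ge h_d\ge h_c = h(T\mid 1)$. This is exactly the reciprocal value-JR requirement for $(S,T)$, and since $(S,T)$ was arbitrary, $W$ satisfies reciprocal value-JR.

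No real obstacle is expected. The only point needing care is checking that the entitlement condition of reciprocal threshold-JR transfers to the chosen threshold. This holds because reciprocal affordability of $T$ does not depend on $\tau$, and choosing $\tau=h_c$ keeps $c$ in $\appcut{S}^{\ge\tau}$.
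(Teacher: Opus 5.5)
Your proof is correct and matches the paper's route: the paper obtains the corollary by specialising the proof of \Cref{prop:rtPJR_implies_rvPJR} to $\ell=1$, which amounts to invoking reciprocal threshold-JR at the threshold $\tau=h_c$ of the best affordable commonly approved candidate, exactly as you do. One point deserves to be stated explicitly: if the entitlement in the table is triggered by some $T$ with $\lvert T\rvert\ge 2$, the singleton consisting of the highest-quality member of $T$ is also reciprocally affordable and has the same value $h(T\mid 1)$, so your singleton case already covers it.
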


The incompatibility of reciprocal value- and reciprocal threshold-EJR can be strengthened to reciprocal value-EJR and reciprocal threshold-JR.

\begin{proposition}\label{prop:rvEJR_rtJR_incompatible}
    Reciprocal value-EJR and reciprocal threshold-JR are incompatible.
\end{proposition}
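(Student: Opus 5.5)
We plan to adapt the counterexample of \Cref{prop:vEJR_tJR_incompatible} to the reciprocal setting. The non-reciprocal instance (two voters, $k=2$, private quality-$1$ candidates $c_1,c_2$, shared candidates $d_1,d_2$ of quality $0.75$) no longer works as stated. The grand coalition must reciprocally afford $\{d_1,d_2\}$, which requires $\lvert S\rvert \ge (\frac{1}{h_{d_1}}+\frac{1}{h_{d_2}})\frac nk$, and this fails for $n=k=2$. We therefore add more voters and more private candidates, keeping the shared candidates' quality high enough to be reciprocally affordable but low enough that a single private quality-$1$ candidate does not meet the value demand.

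Concretely, take $n=k=3$ with voters $v_1,v_2,v_3$. Voter $v_j$ approves a private candidate $c_j$ of quality $1$ and two shared candidates $d_1,d_2$ of some quality $h\in(\frac23,1)$, say $h=0.75$. (This is the same shape as the example showing reciprocal threshold-EJR is unsatisfiable.)

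First, reciprocal threshold-JR forces all private candidates into the committee. At $\tau=1$, each singleton $\{v_j\}$ can reciprocally afford $\{c_j\}$, since $1\ge \frac{1}{1}\cdot\frac nk=1$. Its only approved candidate of quality at least $1$ is $c_j$, so $c_j\in W$ for all $j$. With $k=3$, the unique reciprocal threshold-JR committee is therefore $W=\{c_1,c_2,c_3\}$.

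Second, $W$ violates reciprocal value-EJR. The grand coalition $V$ commonly approves $T=\{d_1,d_2\}$. It can reciprocally afford $T$ because $3\ge 2\cdot\frac{1}{0.75}\cdot 1=\frac83$. Reciprocal value-EJR then demands some voter with $h(A_i\cap W)\ge h(T\mid 2)=1.5$, but every voter only gets $h_{c_j}=1$. So no committee satisfies both notions.

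The only delicate step is the choice of parameters. We need $3\ge \frac{2}{h}$, i.e.\ $h\ge\frac23$, so that $V$ can afford $T$. We also need $2h>1$, so that one private quality-$1$ candidate does not meet the demand. Finally, we must confirm that no other group changes the analysis, which is immediate since $W$ is already forced. Any $h\in[\frac23,1)$ works, and $h=0.75$ gives a clean instance presented as a table analogous to \Cref{tab:vEJR_tJR_incompatible}.
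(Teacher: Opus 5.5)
Your proposal is correct and is essentially the paper's proof. The paper uses the same $n=k=3$ instance, except for an extra shared candidate $d_3$ of quality $0.75$, which plays no role. It forces $\{c_1,c_2,c_3\}$ through the singleton claims at $\tau=1$, and then shows that $V$ reciprocally affords $T=\{d_1,d_2\}$ (since $3\ge\frac83$) but every voter gets only $1<1.5$.
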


\begin{table}[H]
    \caption{Example instance showing that reciprocal value-EJR and reciprocal threshold-JR are incompatible. The only reciprocal threshold-JR committee is $\{c_1, c_2, c_3\}$ (choose threshold $\tau = 1$), which violates reciprocal value-EJR for the voter group $\{v_1, v_2, v_3\}$.}
    \centering
    \label{tab:rvEJR_rtJR_incompatible}
    \begin{tabular}{c >{\columncolor{gray!20}} c >{\columncolor{gray!20}}  c >{\columncolor{gray!20}}  c c c c}
        \toprule
        $k = 3$ & $c_1$ & $c_2$ & $c_3$ & $d_1$ & $d_2$ & $d_3$\\
        quality & $1$ & $1$ & $1$ & $0.75$ & $0.75$ & $0.75$\\
        \midrule
        $v_1$ & \checkmark & & & \checkmark & \checkmark & \checkmark \\
        $v_2$ & & \checkmark & & \checkmark & \checkmark & \checkmark \\
        $v_3$ & & & \checkmark & \checkmark & \checkmark & \checkmark \\
        \bottomrule
    \end{tabular}
\end{table}

\begin{proof}
      Consider the instance from \Cref{tab:rvEJR_rtJR_incompatible} with $k = 3$, voters $V = \{v_1, v_2, v_3\}$, and candidates $C = \{c_1, c_2, c_3, d_1, d_2, d_3\}$ where $h_{c_i} = 1$ for $i \in [3]$ and $h_{d_j} = 0.75$ for $j \in [3]$. Let $A_i = \{c_i\} \cup \{d_1, d_2, d_3\}$ for each $i \in [3]$.

      Any reciprocal threshold-JR committee $W$ must contain $c_i$ for every $i \in [3]$, since $\{v_i\}$ is $1$-large (as $\frac{n}{k} = 1$) and approves only $c_i$ at quality threshold $1$. Hence $W = \{c_1, c_2, c_3\}$ is the unique reciprocal threshold-JR committee.
      However, $V$ can reciprocally afford $T = \{d_1, d_2\}$, since $3 = |V| \ge \sum_{c \in T}\frac{1}{h_c} \cdot \frac{n}{k} = \frac{8}{3}$, so reciprocal value-EJR demands that some voter $i \in V$
    has utility $h(A_i \cap W) \ge \sum_{c \in T} h_c = 1.5$. Yet, each voter $v_i$ has utility $h(A_i \cap W) = 1$, violating reciprocal value-EJR.
\end{proof}

\subsubsection{Threshold- and Basic JR Notions}

It is easy to see that all reciprocal threshold-JR notions imply the corresponding reciprocal JR notions, as choosing the threshold $\tau = 0$ gives exactly the corresponding definition.

\begin{observation}\label{obs:rthreshold_implies_basic}
    Reciprocal threshold-EJR implies reciprocal EJR, reciprocal threshold-PJR implies reciprocal PJR and reciprocal threshold-JR implies reciprocal JR.
\end{observation}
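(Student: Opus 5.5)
The plan is to instantiate the reciprocal threshold definitions at the threshold $\tau = 0$ (or any $\tau$ below the minimum candidate quality), exactly as in \Cref{obs:threshold_implies_basic}. Since every quality lies in $(0,1]$, we have $W^{\ge 0} = W$ and $\appcut{S}^{\ge 0} = \appcut{S}$. With these identities, each reciprocal threshold requirement at $\tau=0$ becomes literally the corresponding reciprocal standard requirement.

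Concretely, let $W$ satisfy reciprocal threshold-EJR (respectively PJR, JR), and fix a group $S\subseteq V$ that can reciprocally afford some $T\subseteq\appcut{S}$ with $|T|=\ell\ge 1$. Because $\appcut{S}^{\ge 0}=\appcut{S}$, the group $S$ also reciprocally affords $T\subseteq\appcut{S}^{\ge 0}$, so the entitlement condition of the threshold notion is met at $\tau=0$. The threshold notion then yields the required representation:
\begin{itemize}
\item for reciprocal threshold-EJR, $\max_{i\in S}|A_i\cap W^{\ge 0}|\ge\ell$;
\item for reciprocal threshold-PJR, $|\appjoin{S}\cap W^{\ge 0}|\ge\ell$;
\item for reciprocal threshold-JR, $|\appjoin{S}\cap W^{\ge 0}|\ge 1$.
\end{itemize}
Substituting $W^{\ge 0}=W$ gives exactly the requirements of reciprocal EJR, PJR and JR from \Cref{tab:reciprocal-proportionality-notions}. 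For reciprocal JR, the entitlement condition $|S|\ge q_1(\appcut{S})\frac nk$ is equivalent to $S$ reciprocally affording a singleton $T\subseteq\appcut{S}$, so it falls under the case $\ell=1$.

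There is no real obstacle here. The only point to check is that the threshold range $[0,1]$ includes a value at or below every candidate quality, which holds because all qualities lie in $(0,1]$.
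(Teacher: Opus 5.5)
Your proposal is correct and takes essentially the same approach as the paper: instantiating the reciprocal threshold notions at $\tau = 0$, where $W^{\ge 0} = W$ and $\appcut{S}^{\ge 0} = \appcut{S}$, turns each one into its reciprocal standard counterpart. Your extra remark that the $q_1$-formulation of reciprocal JR is the $\ell = 1$ case of the table's entitlement condition is accurate and tidies up the JR level.
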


Clearly, the other direction does not hold, in fact we can show that reciprocal EJR is incompatible with reciprocal threshold-JR.

\begin{proposition}\label{prop:rEJR_rtJR_incompatible}
  Reciprocal EJR is incompatible with reciprocal threshold-JR.
\end{proposition}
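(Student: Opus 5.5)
We will give a small counterexample instance, in the style of \Cref{prop:EJR_tJR_incompatible} and \Cref{prop:rvEJR_rtJR_incompatible}. The plan has two parts: single voters should be able to reciprocally afford their own quality-$1$ candidate, which forces all those candidates into every reciprocal threshold-JR committee; the whole electorate should be able to reciprocally afford two common lower-quality candidates, which triggers a two-candidate reciprocal EJR claim that the forced committee cannot meet.

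Concretely, I would reuse the instance of \Cref{tab:rvEJR_rtJR_incompatible}. It has $k=n=3$, voters $v_1,v_2,v_3$, and $A_{v_i}=\{c_i,d_1,d_2,d_3\}$, with $h_{c_i}=1$ and $h_{d_j}=0.75$. Since $\frac{n}{k}=1$, each singleton $\{v_i\}$ reciprocally affords $\{c_i\}\subseteq \appcut{\{v_i\}}^{\ge 1}$. At threshold $\tau=1$, reciprocal threshold-JR therefore requires $|A_{v_i}\cap W^{\ge 1}|\ge 1$. The only quality-$1$ candidate $v_i$ approves is $c_i$, so $c_i\in W$ for all $i$. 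With $k=3$, this makes $W=\{c_1,c_2,c_3\}$ the unique reciprocal threshold-JR committee; that it indeed satisfies the axiom is not even needed, since incompatibility only requires that no committee satisfies both.

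Next I check the reciprocal EJR claim of $V$. The group $V$ commonly approves $T=\{d_1,d_2\}$, and $\sum_{c\in T}\frac{1}{h_c}\cdot\frac{n}{k}=\frac{4}{3}+\frac{4}{3}=\frac{8}{3}\le 3=|V|$. So $V$ reciprocally affords $T$ with $\ell=2$, and reciprocal EJR demands some $i$ with $|A_i\cap W|\ge 2$. In $W=\{c_1,c_2,c_3\}$ every voter approves exactly one selected candidate, so the demand fails, and no committee satisfies both axioms.

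The only delicate point is choosing the qualities. The low quality must be high enough that $V$ can reciprocally afford two of the $d_j$, i.e.\ $\frac{2}{h}\le 3$, so $h\ge \frac23$. At the same time it must stay below $1$, so that the threshold-$1$ claims still force the $c_i$. The value $0.75$ satisfies both conditions, so I expect no real obstacle beyond verifying this arithmetic.
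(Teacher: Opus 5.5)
Your argument is correct and follows the same approach as the paper. Singleton claims at threshold $\tau=1$ force every private quality-$1$ candidate into the committee, and a grand-coalition claim for two commonly approved lower-quality candidates then cannot be met. The only difference is the instance: the paper uses the $k=5$ instance of \Cref{tab:rtJR_rEJR_incompatible} with $h_{d_j}=0.4$, which also serves for \Cref{prop:rEJR_rvJR_incompatible}, whereas your $k=3$, $h_{d_j}=0.75$ instance works equally well and also witnesses \Cref{prop:rvEJR_rtJR_incompatible}.
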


\begin{table}[H]
  \caption{Example instance showing that reciprocal threshold-JR and reciprocal EJR (also reciprocal value-JR and reciprocal EJR) are incompatible. The only reciprocal threshold-JR (reciprocal value-JR) committee is $\{c_1, \dots, c_5\}$, which violates reciprocal EJR for the voter group $\{v_1, \dots, v_5\}$.}
  \label{tab:rtJR_rEJR_incompatible}
  \centering
  \begin{tabular}{c >{\columncolor{gray!20}} c >{\columncolor{gray!20}} c >{\columncolor{gray!20}} c c c c}
    \toprule
    $k = 5$ & $c_1$ & $\cdots$ & $c_5$ & $d_1$ & $d_2$ \\
    quality & $1$ & $\cdots$ & $1$ & $0.4$ & $0.4$ \\
    \midrule
    $v_1$ & \checkmark & & & \checkmark & \checkmark \\
    $\vdots$ & & $\ddots$ & & $\vdots$ & $\vdots$ \\
    $v_5$ & & & \checkmark & \checkmark & \checkmark \\
    \bottomrule
  \end{tabular}
\end{table}

\begin{proof}
    Consider the instance from \Cref{tab:rtJR_rEJR_incompatible} with $k = 5$, voters $V = \{v_1, \dots, v_5\}$, and candidates $C = \{c_1, \dots, c_5, d_1, d_2\}$ where $h_{c_1} = \dots = h_{c_5} = 1$ and $h_{d_1} = h_{d_2} = 0.4$. Let $A_i = \{c_i\} \cup \{d_1, d_2\}$ for each $i \in [5]$.

    Any reciprocal threshold-JR committee $W$ must contain $c_i$ for every $i \in [5]$, since $\{v_i\}$ is $1$-large (as $\frac{n}{k} = 1$) and approves only $c_i$ at quality threshold $1$. Hence $W = \{c_1, \dots, c_5\}$ is the unique reciprocal threshold-JR committee.
    However, $V$ can reciprocally afford $T = \{d_1, d_2\}$ since $|V| \ge \sum_{c \in T}\frac{1}{h_c} \cdot \frac{n}{k} = 5$, so reciprocal EJR demands that some voter $i \in V$ has $|A_i \cap W| \geq 2$. Therefore, $W$ violates reciprocal EJR.
\end{proof}

\subsubsection{Value- and Basic JR Notions}

\begin{proposition}\label{prop:rvEJR_rPJR}
    Reciprocal value-EJR does not imply reciprocal PJR.
\end{proposition}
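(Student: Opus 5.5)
The plan is to give an explicit counterexample, following the pattern of \Cref{prop:vEJR_PJR}. We want a committee in which a voter group $S$ has a reciprocal PJR claim for $\ell=2$ candidates, while reciprocal value-EJR is satisfied by a single selected candidate of quality $1$. Since qualities are at most $1$, this requires the two commonly approved candidates of $S$ to have summed quality at most $1$. Such cheap candidates are expensive to reciprocally afford, so $S$ must be fairly large. The main obstacle is that subgroups of $S$ could raise their own claims that the committee fails. I expect the hardest step to be choosing qualities so that no such subgroup can afford a demanding set.

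I would use the following instance, with $n=k=8$ so that $\frac nk=1$. Voters $v_1,\dots,v_5$ approve $c_1,c_2$, both of quality $0.4$. Voters $v_1,v_2,v_3$ additionally approve $c_3$, of quality $1$. Voters $v_6,v_7,v_8$ each approve a private quality-$1$ candidate $p_6,p_7,p_8$. There are four further candidates $d_1,\dots,d_4$ of quality $1$ that nobody approves; these are allowed here, since the no-unsupported-candidates assumption is only made in the price-of-representation section. The committee is $W=\{c_3,p_6,p_7,p_8,d_1,\dots,d_4\}$.

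The reciprocal PJR violation is direct. $S=\{v_1,\dots,v_5\}$ reciprocally affords $\{c_1,c_2\}$, since $(\frac1{0.4}+\frac1{0.4})\cdot 1=5\le 5$. However, $\appjoin{S}\cap W=\{c_3\}$ has size $1<2$.

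For reciprocal value-EJR, I would enumerate the groups with nonempty common approvals.
\begin{itemize}
\item Groups containing some $v_j$ with $j\ge 6$ are either the singleton $\{v_j\}$, which is satisfied by $p_j$, or have empty common approval set.
\item Among groups inside $S$, any group affording $c_1$ or $c_2$ needs at least $2.5$ voters. Every subset of $S$ of size at least $3$ contains one of $v_1,v_2,v_3$, who has utility $1$. This covers the single-candidate demands of $0.4$, as well as the demand $0.8$ of $S$ for $\{c_1,c_2\}$.
\item Groups inside $\{v_1,v_2,v_3\}$ can afford $c_3$ and demand $1$, which is met.
\item Such a group cannot afford any two-candidate set containing $c_3$, because the cheapest one costs $1+2.5=3.5>3$.
\end{itemize}
Hence no demand exceeds $1$ for a group that contains a utility-$1$ voter, and voters $v_4,v_5$ alone or together cannot afford anything. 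Therefore $W$ satisfies reciprocal value-EJR but not reciprocal PJR.
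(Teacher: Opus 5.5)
Your counterexample is correct and uses essentially the paper's construction. In both, five voters commonly approve two low-quality candidates of total quality at most $1$, and three of them also approve a selected quality-$1$ candidate. The whole group's reciprocal PJR claim for two seats therefore fails, while every reciprocal value-EJR demand is met by a voter with utility $1$. The paper uses qualities $0.5$, $n=6$ and $k=5$, and fills the remaining seats with quality-$1$ candidates approved by a sixth voter rather than private and unapproved candidates; this difference is only cosmetic.
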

\begin{table}[H]
    \caption{Example instance showing that reciprocal value-EJR does not imply reciprocal PJR. The committee $W = \{c_3, c_4, c_5, c_6, c_7\}$ (shaded) satisfies reciprocal value-EJR, but violates reciprocal PJR for the group $S = \{v_1,\dots,v_5\}$ and the commonly approved set $T = \{c_1,c_2\}$.}
    \label{tab:rvPJR_rPJR}
    \centering
    \begin{tabular}{c c c >{\columncolor{gray!20}} c >{\columncolor{gray!20}} c >{\columncolor{gray!20}} c >{\columncolor{gray!20}} c >{\columncolor{gray!20}} c}
        \toprule
        $k = 5$ & $c_1$ & $c_2$ & $c_3$ & $c_4$ & $c_5$ & $c_6$ & $c_7$\\
        quality & $0.5$ & $0.5$ & $1$ & $1$ & $1$ & $1$ & $1$\\
        \midrule
        $v_1$ & \checkmark & \checkmark & \checkmark & & & & \\
        $v_2$ & \checkmark & \checkmark & \checkmark & & & & \\
        $v_3$ & \checkmark & \checkmark & \checkmark & & & & \\
        $v_4$ & \checkmark & \checkmark & & & & & \\
        $v_5$ & \checkmark & \checkmark & & & & & \\
        $v_6$ & & & & \checkmark & \checkmark & \checkmark & \checkmark \\
        \bottomrule
    \end{tabular}
\end{table}

\begin{proof}
    Consider the instance from \Cref{tab:rvPJR_rPJR} with $k = 5$, voters $v_1$ to $v_6$, and candidates $c_1$ to $c_7$ where $h_{c_1} = h_{c_2} = 0.5$ and $h_{c_3} = h_{c_4} = h_{c_5} = h_{c_6} = h_{c_7} = 1$. Let $A_1 = A_2 = A_3 = \{c_1, c_2, c_3\}$, $A_4 = A_5 = \{c_1, c_2\}$ and $A_6 = \{c_4, c_5, c_6, c_7\}$.

    The committee $W = \{c_3, c_4, c_5, c_6, c_7\}$ satisfies reciprocal value-EJR. To verify this, we check all candidate sets approved by at least two voters (since a single voter has no demand with $\frac{n}{k} = \frac{6}{5} > 1$). The singleton candidate sets $\{c_1\}$ and $\{c_2\}$ need a group of at least $\frac{1}{0.5} \frac{n}{k} = \frac{12}{5} > 2$ voters to demand a voter with utility at least $0.5$. Every $3$-sized subset of voters supporting $c_1$ or $c_2$ has one such voter. For $\{c_3\}$ two voters suffice, but each individual voter supporting $c_3$ has utility $1$. The set $\{c_1, c_2\}$ needs support of at least $(\frac{1}{0.5}+\frac{1}{0.5})\frac{n}{k} = \frac{24}{5} > 4$ voters and the only size $5$ group of voters supporting $c_1$ and $c_2$ has a voter approving the selected candidate $c_3$, with utility $h_{c_3} = 1 \ge 0.5 + 0.5$. For $\{c_1, c_3\}$ and $\{c_2, c_3\}$ a voter group needs to have size $(\frac{1}{0.5}+\frac{1}{1})\frac{n}{k} = \frac{18}{5} > 3$, but there are only $3$ voters supporting any of these sets. For $\{c_1,c_2,c_3\}$ the group size would need to be even larger. Thus $W$ satisfies reciprocal value-EJR.

    However, $W$ does not satisfy reciprocal PJR since the candidate set $\{c_1, c_2\}$ is supported by $5 > (\frac{1}{0.5}+\frac{1}{0.5})\frac{n}{k} = \frac{24}{5}$ voters, but $|W \cap \appjoin{S}| = 1 < 2$.
\end{proof}

Even though reciprocal value-EJR does not imply reciprocal PJR, the notions are always satisfiable together, which follows immediately from the compatibility of the non-reciprocal notions (see \Cref{prop:vEJR_PJR_compatible}), as each notion implies their corresponding reciprocal notion.

\begin{corollary}\label{cor:rvEJR_rPJR_compatible}
  Reciprocal value-EJR and reciprocal PJR are simultaneously satisfiable in every profile.
\end{corollary}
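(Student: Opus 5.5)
The statement is Corollary~\ref{cor:rvEJR_rPJR_compatible}: reciprocal value-EJR and reciprocal PJR can always be satisfied together. The plan is a direct reduction to the non-reciprocal compatibility result, Proposition~\ref{prop:vEJR_PJR_compatible}, realized by value-GCR via Theorem~\ref{thm:value-GCR_value-EJR_PJR}. Fix an arbitrary instance $\mathcal I$ and let $W$ be a committee satisfying both value-EJR and standard PJR. It then suffices to show two implications: value-EJR implies reciprocal value-EJR, and PJR implies reciprocal PJR.

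\textbf{Step 1: reciprocal affordability implies cohesiveness.} Suppose $S$ can reciprocally afford some $T\subseteq \appcut{S}$ with $|T|=\ell\ge 1$. Since every quality satisfies $h_c\in(0,1]$, we have $\frac{1}{h_c}\ge 1$. Hence
\[
|S|\ \ge\ \sum_{c\in T}\frac{1}{h_c}\cdot\frac nk\ \ge\ \ell\cdot\frac nk .
\]
So $S$ is $\ell$-large. Moreover, $|\appcut{S}|\ge|T|=\ell$, so $S$ is $\ell$-cohesive. We also need $\ell\le k$ so that $\ell$ lies in the index range $[k]$. This follows because $|S|\le n$ forces $\ell\frac nk\le n$.

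\textbf{Step 2: PJR implies reciprocal PJR.} By Step~1, the same group $S$ is $\ell$-cohesive. Standard PJR therefore gives $|\appjoin{S}\cap W|\ge\ell$, which is exactly the reciprocal PJR requirement.

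\textbf{Step 3: value-EJR implies reciprocal value-EJR.} Again by Step~1, $S$ is $\ell$-cohesive. Value-EJR then yields some $i\in S$ with $h(A_i\cap W)\ge h(\appcut{S}\mid\ell)$. Because $T\subseteq\appcut{S}$ and $|T|=\ell$, we have
\[
h(T\mid\ell)\ =\ h(T)\ \le\ h(\appcut{S}\mid\ell).
\]
So the reciprocal demand $\max_{i\in S}h(A_i\cap W)\ge h(T\mid\ell)$ holds.

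Combining Steps~2 and~3, the committee $W$ satisfies both reciprocal notions. There is no real obstacle here. The only points needing care are that qualities are at most $1$, which drives Step~1, and that the reciprocal demand $h(T\mid\ell)$ is indexed by the specific afforded set $T$ and is dominated by $h(\appcut{S}\mid\ell)$.
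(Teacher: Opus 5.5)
Your proof is correct and follows essentially the paper's route. The paper likewise derives the corollary from Proposition~\ref{prop:vEJR_PJR_compatible} (realized by value-GCR), using that each non-reciprocal notion implies its reciprocal counterpart. You additionally verify that implication explicitly via $\frac{1}{h_c}\ge 1$, $\ell\le k$, and $h(T)\le h(\appcut{S}\mid\ell)$.
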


\begin{observation}\label{obs:rJR_rvJR}
  Reciprocal value-JR implies reciprocal JR.
\end{observation}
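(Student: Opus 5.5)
The plan is to mirror the proof of \Cref{obs:JR_vJR}, replacing $1$-cohesiveness by reciprocal affordability. Let $W$ satisfy reciprocal value-JR. Consider any $S\subseteq V$ that can reciprocally afford some $T\subseteq \appcut{S}$ with $\lvert T\rvert=\ell\ge 1$. This is exactly the entitlement condition of reciprocal JR in \Cref{tab:reciprocal-proportionality-notions}, so we must show that $\appjoin{S}\cap W\neq\emptyset$.

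Since the entitlement conditions of the two notions coincide, reciprocal value-JR applies to the same pair $(S,T)$. It yields
\[
h\bigl(\appjoin{S}\cap W\bigr)\ \ge\ h(T\mid 1)\ =\ \max_{c\in T} h_c .
\]
Because $T$ is nonempty and every quality score lies in $(0,1]$, the right-hand side is strictly positive. Hence $\sum_{c\in \appjoin{S}\cap W} h_c>0$. An empty sum is $0$, so $\appjoin{S}\cap W\neq\emptyset$, which is what reciprocal JR demands.

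There is no real obstacle here. The only point worth checking is that both notions quantify over the same pairs $(S,T)$ with $\lvert T\rvert\ge 1$, so the implication holds claim by claim. Strict positivity of the qualities, assumed in the model since $h_c\in(0,1]$, is what makes the argument go through.
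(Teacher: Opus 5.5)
Your proposal is correct and follows the paper's proof: both apply reciprocal value-JR to the same pair $(S,T)$ and use the strictly positive bound $h(\appjoin{S}\cap W) \ge h(T\mid 1) > 0$ to conclude $\appjoin{S}\cap W \neq \emptyset$. You also state explicitly why the bound is positive ($T\neq\emptyset$ and all qualities lie in $(0,1]$), which the paper leaves implicit.
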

\begin{proof}
    Let $W$ be a committee that satisfies reciprocal value-JR. Then for every group of voters $S$ that can reciprocally afford $T$ of size $\lvert T \rvert \ge 1$ we have $h(\appjoin{S} \cap W) = \sum_{c \in \appjoin{S} \cap W} h_c > 0$ and thus $\appjoin{S} \cap W \neq \emptyset$.
\end{proof}

\begin{proposition}\label{prop:rEJR_rvJR_incompatible}
  Reciprocal EJR is incompatible with reciprocal value-JR.
\end{proposition}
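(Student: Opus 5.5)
We will show that no committee satisfies both reciprocal EJR and reciprocal value-JR by exhibiting a single counterexample instance. The instance from \Cref{tab:rtJR_rEJR_incompatible}, already used for \Cref{prop:rEJR_rtJR_incompatible}, is our choice. It has $k=5$ and voters $v_1,\dots,v_5$. Each $v_i$ approves $A_i=\{c_i,d_1,d_2\}$, with $h_{c_i}=1$ and $h_{d_1}=h_{d_2}=0.4$. The table caption already indicates that it serves this purpose. Since $\frac{n}{k}=1$, a group $S$ can reciprocally afford $T\subseteq\appcut{S}$ iff $|S|\ge\sum_{c\in T}\frac{1}{h_c}$.

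The first step is to show that $\{c_1,\dots,c_5\}$ is the unique reciprocal value-JR committee. Consider the singleton $S=\{v_i\}$. It can reciprocally afford $T=\{c_i\}$, because $1\ge\frac{1}{h_{c_i}}=1$. Reciprocal value-JR therefore demands $h(A_i\cap W)\ge h(T\mid 1)=1$.

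If $c_i\notin W$, voter $v_i$'s only possible utility comes from $d_1,d_2$. This gives at most $0.8<1$, so the demand fails. Hence every committee satisfying reciprocal value-JR contains all of $c_1,\dots,c_5$. Since $k=5$, it equals $\{c_1,\dots,c_5\}$.

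The second step is to show this committee violates reciprocal EJR. The group $V$ reciprocally affords $T=\{d_1,d_2\}\subseteq\appcut{V}$, because $|V|=5\ge(2.5+2.5)\cdot 1$. Reciprocal EJR then requires some voter $i$ with $|A_i\cap W|\ge 2$. In $W=\{c_1,\dots,c_5\}$, however, each voter approves exactly one selected candidate.

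There is no real obstacle here. The only care needed is the arithmetic check that the singleton's demand of $1$ cannot be met by $d_1,d_2$, since $2\cdot 0.4<1$, together with the affordability computations.
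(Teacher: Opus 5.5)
Your proof is correct and follows the paper's argument: it uses the same instance, forces every $c_i$ through the singleton groups $\{v_i\}$ under reciprocal value-JR, and then shows that the full group $V$ reciprocally affords $\{d_1,d_2\}$, which violates reciprocal EJR. Your explicit check that $d_1,d_2$ give only $0.8<1$ supplies a step the paper leaves implicit.
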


\begin{proof}
    Consider the instance from \Cref{tab:rtJR_rEJR_incompatible} with $k = 5$, voters $V = \{v_1, \dots, v_5\}$, and candidates $C = \{c_1, \dots, c_5, d_1, d_2\}$ where $h_{c_1} = \dots = h_{c_5} = 1$ and $h_{d_1} = h_{d_2} = 0.4$. Let $A_i = \{c_i\} \cup \{d_1, d_2\}$ for each $i \in [5]$.

    Any reciprocal value-JR committee $W$ must contain $c_i$ for every $i \in [5]$, since $\{v_i\}$ is $1$-large and cohesive over $\{c_i\}$ with quality $1$. Hence $W = \{c_1, \dots, c_5\}$ is the unique reciprocal value-JR committee.
    However, $V$ can reciprocally afford $T = \{d_1, d_2\}$ since $|V| \ge \sum_{c \in T}\frac{1}{h_c} \cdot \frac{n}{k} = 5$, so reciprocal EJR demands that some voter $i \in V$ has $|A_i \cap W| \geq 2$. Therefore, $W$ violates reciprocal EJR.
\end{proof}

\subsection{Reciprocal Axiomatics}\label{app:reciprocal_axiomatics}
By demanding on party-list instances that only parties $C_j$ with at least $\frac{n}{k} \frac{1}{h_c}$ many supporters are eligible for representation, where $c= \argmax_{c\in C_j} h_c$, we obtain characterizations of reciprocal JR notions.

A proportionality notion $f$ satisfies \emph{singleton reciprocal party list representation}, if on all party list instances $\mathcal I$, with a singleton party $C_1 = \{c\}$ and at least $\frac{1}{h_c} \cdot \frac nk$ voters supporting this candidate, we have that $c\in W$ for all $W\in f(\mathcal I)$.

\begin{proposition}
  Let a proportionality axiom satisfy singleton reciprocal party list representation, monotonicity, independence of losers, and robustness to fully satisfied voters. Then, it is a refinement of reciprocal JR.
\end{proposition}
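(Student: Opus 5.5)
We argue by contraposition, mirroring the earlier JR-level characterization theorem. Let $X$ satisfy the four axioms and suppose some $W\in X(\mathcal I)$ violates reciprocal JR. Then there is a group $S\subseteq V$ and a candidate $c\in \appcut{S}$ with $\lvert S\rvert \ge \frac{1}{h_c}\cdot\frac nk$ but $\appjoin{S}\cap W=\emptyset$. This uses $q_1(\appcut{S})=\min_{c\in\appcut{S}} 1/h_c$, so we take $c$ to be a highest-quality candidate in $\appcut{S}$. Since $c\in A_i$ for every $i\in S$ and $\appjoin{S}\cap W=\emptyset$, we have $c\notin W$.

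We then transform $\mathcal I$ into a party-list instance $\mathcal I^*$ using only well-behaved profile changes, so that $W\in X(\mathcal I^*)$ is preserved.
\begin{enumerate*}[label=(\roman*)]
\item By independence of losers, we remove every candidate of $C\setminus W$ except $c$.
\item For each $i\in S$, the ballot is now $A_i\cap(W\cup\{c\})=\{c\}$, because $A_i\cap W\subseteq \appjoin{S}\cap W=\emptyset$. So $S$ already forms the singleton party $C_1=\{c\}$ and needs no change.
\item For each $i\in V\setminus S$, robustness to fully satisfied voters changes the ballot to $A_i\cap W$, which drops $c$ if it was approved. Monotonicity then adds all of $W$, so every such voter approves exactly $W$.
\end{enumerate*}
The result is a party-list instance with $V_1=S$, $C_1=\{c\}$, $V_2=V\setminus S$, $C_2=W$. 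The two parties are disjoint because $c\notin W$.

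In $\mathcal I^*$, the singleton party $\{c\}$ has at least $\frac{1}{h_c}\cdot\frac nk$ supporters. The quantities $n$, $k$ and $h_c$ are unchanged, since we removed candidates but no voters. Singleton reciprocal party list representation therefore demands $c\in W$, which contradicts $c\notin W$. So $W\notin X(\mathcal I^*)$. But the well-behavedness axioms give $W\in X(\mathcal I^*)$, a contradiction. Hence $X\subseteq$ reciprocal JR.

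The main point needing care is step (iii). Robustness to fully satisfied voters only lets a voter move to a subset of $A_i\cap W$, while monotonicity only adds candidates already in $W$. Applying them in that order is legitimate even when $A_i\cap W=\emptyset$, since that just means the voter temporarily approves nothing. We also check that $c\neq$ any remaining candidate so that $C_1$ is genuinely a singleton. This holds because all other remaining candidates lie in $W$.
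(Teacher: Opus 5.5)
Your proposal is correct and follows the paper's proof essentially step for step. Both argue by contraposition: independence of losers deletes every non-winner except $c$, then robustness to fully satisfied voters followed by monotonicity makes each voter outside $S$ approve exactly $W$. This leaves $S$ as the singleton party $\{c\}$ with at least $\frac{1}{h_c}\cdot\frac{n}{k}$ supporters, so $W$ violates singleton reciprocal party list representation.
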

\begin{proof}
    We prove this claim by contraposition. Let $\mathcal I$ be any instance on which $W$ violates reciprocal JR. Assume $W\in f(\mathcal I)$. Then, there exists some $c\in C\setminus W$ and some $S\subseteq V[c]$ of size at least $\frac nk\frac{1}{h_c}$ such that $A_i\cap W = \emptyset$ for all $i\in S$. By independence of losers, we remove all candidates in $C\setminus (W \cup\{c\})$ from the instance to obtain an instance $\mathcal I_1$ with $W\in f(\mathcal I_1)$. We then apply robustness to fully satisfied voters to remove approvals of $c$ and monotonicity to add approvals of $W$ to obtain an instance $\mathcal I_2$ on which all voters in $V\setminus S$ satisfy $A_i = W$. Still $W\in f(\mathcal I_2)$, and since we did not alter the ballots in $S$, on this instance we have that $A_i = \{c\}$ for all $i\in S$. Therefore, we have a party-list instance on which the committee $W$ (and hence also the rule $f$) violates singleton reciprocal party list representation, as desired.
\end{proof}

Similarly, we can weaken the total and best-in-slot lower quota axioms and define \emph{reciprocal total party-list representation} and \emph{reciprocal best-in-slot party-list representation}, where parties are eligible for representation by one candidate if they have at least $\frac nk \frac{1}{h_c}$ supporters, where $h_c$ in both cases is the maximum quality achievable in the considered party.\footnote{Note that for both axioms, we consider parties of all sizes, not just consisting of a single candidate.}

This leads to characterizations of reciprocal value-JR and reciprocal threshold-JR.
\begin{proposition}
    The following statements hold:

    \begin{itemize}
        \item If $f$ satisfies reciprocal total party-list representation, monotonicity, independence of losers, and robustness to fully satisfied voters, then $f$ is a refinement of reciprocal value-JR. Further, reciprocal value-JR satisfies all these axioms.
        \item If $f$ satisfies reciprocal best-in-slot party-list representation, monotonicity, independence of losers, and robustness to fully satisfied voters, then $f$ is a refinement of reciprocal threshold-JR. Further, reciprocal threshold-JR satisfies all these axioms.
    \end{itemize}
\end{proposition}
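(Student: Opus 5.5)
For both items I will follow the template of the JR-level characterization proved earlier in this appendix. The proof has two halves: showing that the reciprocal notions satisfy the axioms, and showing by contraposition that any well-behaved $f$ refines them.

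\emph{Reciprocal notions satisfy the axioms.} I would check each axiom directly.
\begin{itemize}
\item On a party-list instance, a party $V_j$ with at least $\frac nk \frac{1}{h_c}$ supporters, where $c$ is its best candidate, can reciprocally afford $\{c\}\subseteq \appcut{V_j}=C_j$. Reciprocal value-JR then gives $h(W\cap C_j)\ge h_c$, and reciprocal threshold-JR at $\tau=h_c$ gives $W\cap C_j^{\ge h_c}\neq\emptyset$. These are exactly the total and best-in-slot party-list requirements, since $\appjoin{V_j}=C_j$.
\item Independence of losers holds because deleting $c\notin W$ only shrinks the sets $\appcut{S}$. So affordable sets and the entitlements $h(T\mid 1)$ can only shrink, while $\appjoin{S}\cap W$ is unchanged.
\item Robustness to fully satisfied voters holds as follows. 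If voter $i$ shrinks $A_i$ to a subset of $A_i\cap W$, then any group $S\ni i$ has $\appcut{S}\subseteq W$. Pick any $d\in \appcut{S}$ of maximal quality (or, for the threshold notion, any $d\in\appcut{S}^{\ge\tau}$). Then $d\in\appjoin{S}\cap W$, so the requirement is met trivially.
\item Monotonicity holds because adding some $c\in W$ to a ballot can enlarge $\appcut{S}$ only by candidates of $W$. Any newly enabled demand is therefore met by that candidate itself, which lies in $\appjoin{S}\cap W$. Old demands are unaffected, since $\appjoin{S}$ only grows.
\end{itemize}

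\emph{Any well-behaved $f$ refines the reciprocal notions.} Suppose $W\in f(\mathcal I)$ violates reciprocal value-JR. Then some $S$ reciprocally affords some $T\subseteq\appcut{S}$ with $h(\appjoin{S}\cap W)<h(T\mid 1)$. Let $c$ be the best candidate in $T$; then $c\notin W$ and $|S|\ge \frac nk\frac1{h_c}$, since $\sum_{t\in T} 1/h_t\ge 1/h_c$. I would then transform the instance in three steps:
\begin{enumerate}
\item Apply independence of losers to delete $C\setminus(W\cup\{c\})$.
\item Apply monotonicity so that every voter in $S$ approves $(\appjoin{S}\cap W)\cup\{c\}$.
\item Apply robustness followed by monotonicity so that every voter in $V\setminus S$ approves exactly $W\setminus\appjoin{S}$.
\end{enumerate}
The result is a party-list instance with $C_1=(\appjoin{S}\cap W)\cup\{c\}$. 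Its best quality is at least $h_c$, since the party contains $c$.

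The main obstacle is that eligibility in the reciprocal axioms is measured by the \emph{best} candidate of the party, which may be a selected candidate $d\in\appjoin{S}\cap W$ with $h_d>h_c$. I would handle this by case distinction.
\begin{itemize}
\item If some such $d$ has $h_d\ge h(T\mid 1)=h_c$, then already $h(\appjoin{S}\cap W)\ge h_c$, contradicting the violation. So every candidate in $\appjoin{S}\cap W$ has quality below $h_c$, and $c$ is the party's best candidate.
\item Consequently the party $V_1=S$ has $|S|\ge \frac nk \frac1{h_c}$ supporters and is eligible, but $h(W\cap C_1)<h_c$. This violates reciprocal total party-list representation while $W\in f(\mathcal I^*)$.
\end{itemize}

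The threshold case is identical. Take $c\in T\subseteq\appcut{S}^{\ge\tau}$ with $c\notin W$; such a $c$ exists because $\appjoin{S}\cap W^{\ge\tau}=\emptyset$. Without loss of generality set $\tau=h_c$, which keeps the violation. Every candidate in $C_1\cap W$ then has quality below $h_c$, so $c$ is the unique best candidate of $C_1$, the party is eligible, and $W$ omits it. This violates reciprocal best-in-slot party-list representation.
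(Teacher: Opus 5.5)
Your proposal is correct and follows the paper's proof. You use the same three well-behaved profile changes (independence of losers, monotonicity for $S$, robustness plus monotonicity for $V\setminus S$) to reach the same two-party instance with $C_1=(\appjoin{S}\cap W)\cup\{c\}$. You also make explicit what the paper uses only implicitly: the violation forces every candidate in $\appjoin{S}\cap W$ to have quality below $h_c$, so $c$ is the best candidate of $C_1$.

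You additionally verify the converse direction, which the paper leaves unargued. One small fix there, in the monotonicity check for reciprocal value-JR: a newly affordable $T\ni c$ whose maximum is attained at some other $t^\star$ is not covered by $c$ itself but by the old demand for $\{t^\star\}$.
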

\begin{proof}

    \begin{itemize}
    \item Let $W\in f(\mathcal I) \setminus \text{rvJR}(\mathcal I)$. Then, there is $S\subseteq V[c]$ for some $c\notin W$, with $S$ containing at least $\frac nk \frac 1{h_c}$ voters, such that $h(\appjoin{S} \cap W) < h_c$. We proceed as follows: apply independence of losers to remove all candidates in $C\setminus W$ from the instance, except for $c$;
    monotonicity to let all voters in $S$ approve of $(\appjoin{S} \cap W) \cup\{c\}$;
    robustness to fully satisfied voters and monotonicity to let each $i\in V\setminus S$ approve of precisely $W\setminus (\appjoin{S} \cap W)$. In the resulting party-list instance $\mathcal I^*$ with $V_1 = S$, $C_1 = (\appjoin{S} \cap W) \cup\{c\}$ $V_2 = V\setminus S$, $C_2 = W\setminus (\appjoin{S} \cap W)$ we still have $h(\appjoin{S}^*\cap W)= h(\appjoin{S} \cap W) < h_c = h_{\mathcal I^*}(C^*_S\mid 1)$. Here, $\appjoin{S}^*\cap W$ and $\appcut{S}^*$ are defined w.r.t. the profile $\mathcal I^*$ instead of profile $\mathcal I$.
    Therefore, on instance $\mathcal I^*$, the committee $W$ violates reciprocal total party-list representation. However, since we only applied well-behaved profile changes, $W\in X(\mathcal I^*)$

    \item Finally, let $W\in f(\mathcal I) \setminus  \text{rtJR}(\mathcal I)$. Then, there is  $c\notin W$ and $S\subseteq V[c]$ with $S$ containing at least $\frac nk \frac 1{h_c}$ voters, but $\lvert \appjoin{S} \cap W^{\ge h_c} \rvert = 0$. Set $\tau = h_c$. We proceed with similar operations as for value-JR: apply independence of losers to remove all candidates in $C\setminus W$ from the instance, except for $c$;
    monotonicity to let all voters in $S$ approve of $(\appjoin{S} \cap W) \cup\{c\}$;
    robustness to fully satisfied voters and monotonicity to let each $i\in V\setminus S$ approve of precisely $W\setminus (\appjoin{S} \cap W)$. In the resulting party-list instance $\mathcal I^*$ with $V_1 = S$, $C_1 = (\appjoin{S} \cap W) \cup\{c\}$, $V_2 = V\setminus S$, $C_2 = W\setminus (\appjoin{S} \cap W)$ we still have $\appjoin{S}^*\cap W^{\ge \tau} = \appjoin{S} \cap W^{\ge \tau} = \emptyset$. Here, $\appjoin{S}^*$ is defined w.r.t. the profile $\mathcal I^*$ instead of profile $\mathcal I$.
    Therefore, on instance $\mathcal I^*$, the committee $W$ violates reciprocal best-in-slot party-list representation. However, since we only applied well-behaved profile changes, $W\in f(\mathcal I^*)$.\qedhere
\end{itemize}
\end{proof}

\paragraph{PJR and EJR level reciprocal notions}
Recall that $S\subseteq V$ can \emph{reciprocally afford} a nonempty set $T\subseteq \appcut{S}$ if $\sum_{c\in T} \frac{1}{h_c} \le \frac kn \lvert S\rvert$.

\emph{Reciprocal lower quota} requires the following on party-list instances. Let $V_j$ be able to reciprocally afford $T\subseteq C_j$. Then, $\lvert C_j\cap W \rvert \ge \lvert T \rvert$.

\emph{Reciprocal total lower quota} requires the following on party-list instances. If $T\subseteq C_j$ is reciprocally affordable for $V_j$, then $h(C_j\cap W)\ge h(T)$.

\emph{Reciprocal best-in-slot lower quota} requires the following on party-list instances. If any candidate set $T\subseteq C_j$ is reciprocally affordable for $V_j$, then there exists $U\subseteq C_j\cap W$ s.t. $\lvert U \rvert = \lvert T \rvert$ and $h_c \ge h_d$ for all $c\in U$, $d\in C_j\setminus U$. Intuitively, this means that if $V_j$ can reciprocally afford a certain number of candidates, then $W$ should provide this number of candidates to $V_j$ in a best-in-slot fashion.

Indeed, it is easy to see that reciprocal PJR and reciprocal EJR satisfy reciprocal lower quota. Their value-versions satisfy reciprocal total lower quota, and their threshold-versions satisfy reciprocal best-in-slot lower quota.

Finally, we characterize reciprocal threshold-PJR+.%

\begin{theorem}
    Let $X$ satisfy independence of losers, robustness to fully satisfied voters, and monotonicity. If $X$ satisfies reciprocal best-in-slot lower quota, $X$ refines reciprocal threshold-PJR+. Vice versa, reciprocal threshold-PJR+ satisfies all these axioms.
\end{theorem}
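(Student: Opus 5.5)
The proof has two directions and follows the template of the non-reciprocal threshold-PJR+ characterization above.

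\emph{Easy direction.} We check that reciprocal threshold-PJR+ satisfies the three well-behavedness axioms and reciprocal best-in-slot lower quota.

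\begin{itemize}
\item \emph{Independence of losers.} Removing some $c\notin W$ only removes potential witnesses $d$. The sets $\appjoin{S}\cap W^{\ge h_d}$ are unchanged for the remaining witnesses.
\item \emph{Robustness to fully satisfied voters.} Suppose voter $i$ shrinks its ballot to a subset of $A_i\cap W$. Then $i$ no longer supports any $d\notin W$, so $i$ leaves every $V[d]$. For groups not containing $i$, nothing changes.
\item \emph{Monotonicity.} Adding an approval of some $c\in W$ does not change any $V[d]$ with $d\notin W$. It can only enlarge $\appjoin{S}\cap W^{\ge h_d}$, which only increases the right-hand side of the defining inequality.
\item \emph{Reciprocal best-in-slot lower quota.} On a party-list instance, let $V_j$ reciprocally afford $T\subseteq C_j$ with $|T|=\ell$. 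Suppose no best-in-slot set $U$ of size $\ell$ exists. Then some $d\in C_j\setminus W$ exists such that fewer than $x$ candidates of $C_j\cap W$ have quality at least $h_d$, where $x$ is the rank of $d$ among the top-$\ell$ of $C_j$.
\item Take $S=V_j\subseteq V[d]$. The selected candidates of $C_j$ with quality at least $h_d$ each contribute $\frac1{h_c}$, and each of them is among the top $x-1$ of $C_j$ by quality. Together with $\frac1{h_d}$, the right-hand side is therefore at most $\sum_{c\in T^\star}\frac1{h_c}$, where $T^\star$ is the top-$\ell$ set of $C_j$.
\item Replacing $T$ by $T^\star$ only lowers the sum of reciprocals, so $\sum_{c\in T^\star}\frac1{h_c}\le \sum_{c\in T}\frac1{h_c}\le |S|\frac kn$. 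This contradicts reciprocal threshold-PJR+.
\end{itemize}

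The step to handle carefully is this exchange argument. It relies on the fact that choosing the top-quality candidates minimizes the sum of reciprocals.

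\emph{Main direction.} We argue by contraposition. Let $W\in X(\mathcal I)$ violate reciprocal threshold-PJR+. Then there exist $d\notin W$ and $S\subseteq V[d]$ with
\[
|S|\tfrac kn\ \ge\ \tfrac1{h_d}+\sum_{c\in \appjoin{S}\cap W^{\ge h_d}}\tfrac1{h_c}.
\]
We transform the instance using only the well-behaved changes, exactly as in the threshold-PJR+ characterization:

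\begin{enumerate}
\item Apply independence of losers to remove all candidates of $C\setminus W$ except $d$.
\item Apply monotonicity so that every voter in $S$ approves $(\appjoin{S}\cap W)\cup\{d\}$.
\item Apply robustness to fully satisfied voters, followed by monotonicity, so that every voter in $V\setminus S$ approves exactly $W\setminus(\appjoin{S}\cap W)$. This removes their approvals of $d$.
\end{enumerate}

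This yields a party-list instance $\mathcal I^*$ with $V_1=S$ and $C_1=(\appjoin{S}\cap W)\cup\{d\}$, and we still have $W\in X(\mathcal I^*)$.

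Now let $T=(C_1\cap W^{\ge h_d})\cup\{d\}$. By the violation inequality, $V_1$ can reciprocally afford $T$, which has size $\ell=|C_1\cap W^{\ge h_d}|+1$.

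Reciprocal best-in-slot lower quota then requires a set $U\subseteq C_1\cap W$ of size $\ell$ dominating every candidate of $C_1\setminus U$. Since $d\in C_1\setminus W\subseteq C_1\setminus U$, every $c\in U$ must have $h_c\ge h_d$. So $U\subseteq C_1\cap W^{\ge h_d}$, which has only $\ell-1$ elements. This is a contradiction.

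Hence $W$ violates reciprocal best-in-slot lower quota on $\mathcal I^*$, contradicting the assumption that $X$ satisfies it.
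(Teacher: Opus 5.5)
Your proposal is correct, and its main direction is the paper's own argument: the same three well-behaved transformations (independence of losers, then monotonicity, then robustness to fully satisfied voters together with monotonicity) produce the two-party instance, and the contradiction comes from the fact that any best-in-slot $U$ avoiding $d$ would lie in $C_1\cap W^{\ge h_d}$, which has only $\ell-1$ elements. The paper only asserts the easy direction, and your argument for it is sound provided you order $C_j$ by quality with ties broken in favor of selected candidates and take $d$ to be the first unselected candidate in that order; without that convention, the claim that every selected candidate of quality at least $h_d$ lies among the top $x-1$ can fail.
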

\begin{proof}
    It is easy to verify that reciprocal $\thresholdPJR+$ satisfies the claimed properties.
    Let $W\in X(\mathcal I) \setminus \text{rtPJR+}(\mathcal I)$. Then, there is $d\in C\setminus W$ and (non-empty) $S\subseteq V[d]$ such that $\lvert S \rvert \frac{k}{n} \ge \frac{1}{h_d} + \sum_{c\in \appjoin{S} \cap W^{\ge h_d}}\frac{1}{h_c}$.
    We proceed with the exact same idea as for the threshold-PJR result: apply independence of losers to remove all candidates in $C\setminus W$ from the instance, except for $d$;
    monotonicity to let all voters in $S$ approve of $(\appjoin{S} \cap W) \cup\{d\}$;
    robustness to fully satisfied voters and monotonicity to let each $i\in V\setminus S$ approve of precisely $W\setminus (\appjoin{S} \cap W)$. In the resulting party-list instance $\mathcal I^*$ with $V_1 = S$, $C_1 = (\appjoin{S} \cap W) \cup\{d\}$, $V_2 = V\setminus S$, $C_2 = W\setminus \appjoin{S}$ we still have $\lvert S \rvert \frac{k}{n} \ge \frac{1}{h_d} + \sum_{c\in \appjoin{S}^* \cap W^{\ge h_d}}\frac{1}{h_c}$, where $\appjoin{S}^*$ denotes the union of approval ballots from voters in $S$ w.r.t. instance $\mathcal I^*$. In other words, $S$ can reciprocally afford $\{d\}\cup (\appjoin{S}^*\cap W^{\ge h_d}) = \{d\}\cup (\appjoin{S}\cap W^{\ge h_d})$, but only receives one candidate less than that from $W$ above threshold $h_d$. Therefore, on instance $\mathcal I^*$, the committee $W$ violates reciprocal best-in-slot lower quota, as if a set $U\subseteq W$ existed as required by the axiom, it would have to contain $d$. However, since we only applied well-behaved profile changes, $W\in X(\mathcal I^*)$.\qedhere
\end{proof}

\section{Quality-Price of Representation} \label{app:quality-price_of_representation}

In this section, we show that the quality-prices of representation of reciprocal EJR, reciprocal threshold-PJR+ and reciprocal value-EJR are $\frac{4}{3}$. To prove this, we first show a simple lower bound of $\frac{4}{3}$ and then identify a class of committees that approximate the optimal quality by a factor of $\frac{3}{4}$, namely all quality-greedy completions of reciprocally affordable sets. We then define rules that satisfy reciprocal EJR, reciprocal threshold-PJR+ and reciprocal value-EJR, respectively, while always outputting committees from that class, therefore achieving the optimal worst-case quality-approximation.

\recJRupperbound*

\begin{sideexample}{%
    \centering
    \begin{tabular}{c c c c}
        \toprule
        $k = 2$ & $c_1$ & $c_2$ & $c_3$\\
        quality & $\frac{1}{2} + \varepsilon$ & $1$ & $1$\\
        \midrule
        $v_1$     & \checkmark & & \\
        $\vdots$  & \vdots     & & \\
        $v_{n-1}$ & \checkmark & & \\
        $v_n$     & & \checkmark & \checkmark \\
        \bottomrule
    \end{tabular}
}
    Suppose there was always a reciprocal JR committee achieving a $\frac{3}{4} + \varepsilon$ quality-approximation for some $0 < \varepsilon \leq \frac{1}{4}$. Consider the instance on the right with $n = \lceil \frac{1}{2\varepsilon} \rceil + 1$ voters and $k = 2$. Reciprocal JR forces $c_1 \in W$ since all voters in the group $S = \{v_1, \dots, v_{n-1}\}$ approve $c_1$, and
    $$
        \lvert S \rvert = n - 1
        = \left\lceil \frac{1}{2\varepsilon} \right\rceil = \frac{\left\lceil\frac{1}{2\varepsilon}\right\rceil (1+2\varepsilon)}{1+2\varepsilon}
        \ge \frac{\left\lceil\frac{1}{2\varepsilon}\right\rceil+1}{1+2\varepsilon}
        = \tfrac{1}{h_{c_1}} \cdot \tfrac{n}{k}.
    $$
    Hence any reciprocal JR committee has quality at most $\frac{1}{2} + \varepsilon + 1 = \frac{3}{2} + \varepsilon$ (picking the quality-$1$ candidate $c_2$ or $c_3$ for the second slot). The optimal committee $\{c_2, c_3\}$ has quality $2$, giving quality-approximation ratio of at most $\frac{3/2 + \varepsilon}{2} = \frac{3}{4} + \frac{\varepsilon}{2} < \frac{3}{4} + \varepsilon$, contradicting the assumption.
\end{sideexample}

As reciprocal EJR, reciprocal threshold-PJR+ and reciprocal value-EJR all imply reciprocal JR (see \Cref{fig:relations_reciprocal}), the quality-prices of these notions must also be at least $\frac{4}{3}$.

\begin{corollary}\label{cor:rec_upper_bounds}
    The quality-prices of reciprocal EJR, reciprocal threshold-PJR+ and reciprocal value-EJR are at least $\frac{4}{3}$.
\end{corollary}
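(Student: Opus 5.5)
The corollary follows directly from \Cref{obs:rec_JR_upper_bound} and the implication structure among the reciprocal notions. We argue one notion $\mathcal X$ at a time, for $\mathcal X$ being reciprocal EJR, reciprocal threshold-PJR+, or reciprocal value-EJR.

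The key step is to check that $\mathcal X \subseteq$ reciprocal JR, i.e., every $\mathcal X$-committee also satisfies reciprocal JR.
\begin{itemize}
\item For reciprocal EJR, the chain of \Cref{fig:relations_reciprocal} gives reciprocal EJR $\Rightarrow$ reciprocal PJR $\Rightarrow$ reciprocal JR. Setting $\ell=1$, an EJR witness with $|A_i\cap W|\ge 1$ already gives $\appjoin{S}\cap W\neq\emptyset$.
\item For reciprocal value-EJR, take $\ell=1$: the guarantee $\max_{i\in S}h(A_i\cap W)\ge h(T\mid 1)>0$ forces some selected approved candidate. This is exactly the argument of \Cref{obs:rJR_rvJR}.
\item For reciprocal threshold-PJR+, the implication to reciprocal JR is not drawn explicitly, so we check it directly from \Cref{def:rtPJR+}. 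Suppose $S$ can reciprocally afford $\{d\}$ with $d\in\appcut{S}$ but $\appjoin{S}\cap W=\emptyset$. Then $d\notin W$, $S\subseteq V[d]$, and the sum over $\appjoin{S}\cap W^{\ge h_d}$ is empty. Hence
\[
|S|\tfrac{k}{n}\ge \tfrac{1}{h_d} = \tfrac{1}{h_d}+\sum_{c\in \appjoin{S}\cap W^{\ge h_d}}\tfrac{1}{h_c},
\]
which contradicts \Cref{def:rtPJR+}.
\end{itemize}

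Given $\mathcal X(\mathcal I)\subseteq \text{rec-JR}(\mathcal I)$ for every $\mathcal I$, the maximum quality over $\mathcal X$-committees is at most the maximum over reciprocal JR committees. The quality-price of $\mathcal X$ is therefore at least that of reciprocal JR, which is at least $\frac43$ by \Cref{obs:rec_JR_upper_bound}. Concretely, the instance family used there works directly: $c_1$ is forced into every $\mathcal X$-committee, so the quality ratio tends to $\frac43$.

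The only point needing care is that quality-price is defined only for always satisfiable notions. This holds for reciprocal EJR and reciprocal threshold-PJR+ since their standard counterparts are satisfiable, and each non-reciprocal notion implies its reciprocal version. For reciprocal value-EJR it follows from \Cref{cor:rvEJR_rPJR_compatible}. The reciprocal threshold-PJR+ inclusion is the one step not already recorded in the paper, and the argument above settles it.
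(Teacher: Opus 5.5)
Your proof is correct and follows the paper's argument: each of the three notions implies reciprocal JR, and any instance that forces low quality under reciprocal JR therefore forces it for them too (in particular the instance of \Cref{obs:rec_JR_upper_bound}). Your direct check that reciprocal threshold-PJR+ implies reciprocal JR, and your remark on always-satisfiability, fill in details that the paper leaves to \Cref{fig:relations_reciprocal}, which does not actually depict threshold-PJR+.
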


We want to show that this lower bound is tight for these three notions. To achieve this, we first identify a class of outcomes that approximate the optimal quality by a factor of $\frac{3}{4}$.

\QualityPriceOfRepresentation*

\begin{proof}
    Let $X \subseteq C$ be reciprocally affordable and $X^+$ be a quality-greedy completion of $X$. We show that $h(X^+) \ge \frac{3}{4}h(W^\star)$.
    We claim that we can assume without loss of generality that $X \cap W^\star = \emptyset$. If $\hat{X} = X \cap W^\star$ is non-empty, consider the modified instance with candidate set $C' = C \setminus \hat{X}$ and committee size $k' = k - |\hat{X}|$. The committee $X' = X \setminus \hat{X}$ is then reciprocally affordable with respect to $k'$, since each candidate costs at least one. When completing $X'$ quality-greedily to obtain ${X'}^+$ the quality of candidates selected by the greedy completion is the same as in the original instance, i.e., $h({X'}^+ \setminus X') = h(X^+ \setminus X)$, since both correspond to the sum of the top $k - |X| = k' - |X'|$ quality scores from candidates in $C \setminus X = C' \setminus X'$. Furthermore, the total quality of any quality-optimal committee ${W'}^\star$ of size $k'$ is given as $h({W'}^\star) = h(W^\star \setminus \hat{X})$. Then
    \begin{align*}
        \frac{h(X^+)}{h(W^\star)} = \frac{h(X^+ \setminus X) + h(X \setminus \hat{X}) + h(\hat{X})}{h(W^\star \setminus \hat{X}) + h(\hat{X})} &\ge \frac{h(X^+ \setminus X) + h(X \setminus \hat{X})}{h(W^\star \setminus \hat{X})} \\ &= \frac{h({X'}^+ \setminus X') + h(X')}{h({W'}^\star)} = \frac{h({X'}^+)}{h({W'}^\star)},
    \end{align*}
    i.e., the total quality ratio for the original instance is at least the ratio for the modified instance, where $X' \cap {W'}^\star = \emptyset$. It thereby suffices to show the theorem for instances where the intersection of $X$ and $W^\star$ is empty.

    We now turn to proving the theorem with the assumption $X \cap W^\star = \emptyset$. Let $t = |X|$.
    We know that the $(k-t)$ candidates $X^+ \setminus X$ chosen by the greedy completion have the same total quality as the $(k-t)$ highest quality candidates from $W^\star$. Hence, their average quality is at least the average quality of the candidates in $W^\star$, i.e., $h(X^+ \setminus X) \ge \frac{k-t}{k} h(W^\star)$. Furthermore, by the AM-HM-inequality and the reciprocal affordability of $X$, we can bound the total quality of $X$ by $h(X) = \sum_{c \in X} h_c \ge \frac{t^2}{\sum_{c \in X} \frac{1}{h_c}} \ge \frac{t^2}{k}$.
    The total quality ratio can then be expressed as
    $$
        f(t)
        := \frac{h(X^+)}{h(W^\star)} \ge \frac{\frac{t^2}{k} + \frac{k-t}{k} h(W^\star)}{h(W^\star)}
        = 1 + \frac{t^2 - t \cdot h(W^\star)}{k \cdot h(W^\star)}.
    $$
    Setting the derivative to zero yields a single extreme point with $2t - h(W^\star) = 0$ and thus at $t = \frac{h(W^\star)}{2}$. Since the second derivative is positive, this point is a minimum. Plugging $t = \frac{h(W^\star)}{2}$ into $f$ yields a lower bound on the ratio of
    $$
        f\Big(\frac{h(W^\star)}{2}\Big)
        \ge 1 + \frac{\big(\frac{h(W^\star)}{2}\big)^2 - \frac{h(W^\star)}{2} \cdot h(W^\star)}{k \cdot h(W^\star)}
        = 1 - \frac{h(W^\star)}{4k}
        \ge 1 - \frac{k}{4k}
        = \frac{3}{4}.
    $$
\end{proof}

\subsection{The Quality-Price of Reciprocal EJR}
We show that the quality-price of reciprocal EJR is $\frac{4}{3}$, by defining reciprocal MES, a rule that always outputs a reciprocal EJR committee that is a quality-greedy completion of a reciprocally affordable set of candidates, and then applying \Cref{thm:quality_approximation}. A key observation is that reciprocal EJR corresponds exactly to the notion of EJR defined in \cite{peters2021proportional} for the participatory budgeting setting, when the prices of each candidate $c$ are set to $\frac{1}{h_c}$. We can then make use of their algorithms to construct a mechanism satisfying reciprocal EJR.
For completeness, we give the definition of their \textit{Method of Equal Shares} (MES) for our setting here and call it reciprocal MES.

\textbf{Reciprocal MES}.
Each voter $i \in V$ starts with budget $b_i \coloneqq \frac{k}{n}$ and each candidate $c \in C$ has cost $\frac{1}{h_c}$. The rule starts with the empty committee $W \coloneqq \emptyset$ and proceeds in rounds. In each round, an unselected candidate $c \in C \setminus W$ is called \emph{$\rho$-reciprocally affordable} if
$$
    \sum_{i \in V[c]} \min(b_i, \rho) \geq \frac{1}{h_c},
$$
i.e., $c$ is reciprocally affordable for $V[c]$ while no voter pays more than $\rho$.
While there exists some reciprocally affordable candidate, pick a candidate $c$ that is $\rho$-reciprocally affordable for a minimum $\rho \geq 0$, add them to $W$ and charge $p_i(c) \coloneqq \min(b_i, \rho)$ to each $i \in V[c]$ by updating $b_i \coloneqq b_i - p_i(c)$. Afterwards, while $\lvert W \rvert < k$, iteratively add the highest quality candidates from $C \setminus W$ to $W$.

\begin{proposition}[\citet{peters2021proportional}]\label{prop:MESrec_recEJR}
  Reciprocal MES satisfies reciprocal EJR.
\end{proposition}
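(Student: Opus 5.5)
We will derive the statement from the EJR result of \citet{peters2021proportional} for participatory budgeting with additive utilities. The first step is the translation of our instance into a participatory budgeting instance. Each candidate $c$ becomes a project with cost $\frac{1}{h_c}$, and the total budget is $k$, so each voter receives $\frac{k}{n}$. Voters get approval (binary) utilities, i.e., $u_i(c)=1$ if $c\in A_i$ and $0$ otherwise. Under this translation the run of Reciprocal MES before completion coincides exactly with MES for cardinal utilities with these costs. For binary utilities, the minimal $\rho$ per unit of utility is just $\rho$, since each supporter gains utility $1$. We will check that the tie-free selection rule and the payment $\min(b_i,\rho)$ match their definition. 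As a consequence, the pre-completion set $X$ satisfies $\sum_{c\in X}\frac{1}{h_c}\le k$.

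Second, we match the axioms. In the PB setting, a group $S$ is $T$-cohesive if $T\subseteq\appcut{S}$ and $|S|\ge \frac{n}{k}\,\mathrm{cost}(T)$. Here $\mathrm{cost}(T)=\sum_{c\in T}\frac{1}{h_c}$, which is precisely our condition that $S$ can reciprocally afford $T$. EJR for additive utilities then requires some $i\in S$ with $u_i(W)\ge u_i(T)$, where $u_i$ denotes voter $i$'s utility. With binary utilities this reads $|A_i\cap W|\ge |T|=\ell$, which is exactly reciprocal EJR as defined in \Cref{tab:reciprocal-proportionality-notions}. For binary (approval) utilities, the theorem of \citet{peters2021proportional} states that MES satisfies full EJR, not only EJR up to one project. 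We will invoke that theorem for the set $X$ produced before completion.

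Third, we handle the completion. Reciprocal EJR is monotone under adding candidates, since $|A_i\cap W|$ can only grow, so the quality-greedy completion preserves it. One subtlety is that MES in PB may stop with total cost below $k$, yet still have $|X|\ge k$ candidates? This cannot happen: every candidate costs at least $1$, so $|X|\le\sum_{c\in X}\frac1{h_c}\le k$, and the committee size is respected.

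The main obstacle is making sure the cited EJR guarantee applies verbatim. In particular, we need their result to hold for approval utilities with arbitrary real costs and without the ``up to one'' relaxation. If only EJR-up-to-one were available for general costs, we would instead reprove it directly along their lines. Suppose some $S$ can reciprocally afford $T$ with $|T|=\ell$, yet every $i\in S$ has $|A_i\cap W|<\ell$. Then some $c\in T\setminus W$ exists. We would show that at termination the voters in $S$ retain enough budget to buy $c$, contradicting termination. To do so, we bound each voter's spending by charging each payment against the per-voter price $\rho$ used at that round. The key point is that while fewer than $\ell$ of a voter's approved candidates have been bought, the candidate $c$ remains $\rho'$-affordable for $\rho'\le \frac{k}{n}\cdot\frac{1}{\ell}\cdot\frac{\mathrm{cost}(T)}{\mathrm{cost}\ldots}$. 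This is the standard potential argument for MES, and it is the only genuinely technical step.
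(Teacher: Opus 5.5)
Your proposal is correct and follows the paper's route: translate to participatory budgeting with costs $\frac{1}{h_c}$, budget $k$ and approval utilities, observe that reciprocal EJR is exactly EJR there, invoke \citet{peters2021proportional} for the pre-completion set $X$, and note that the completion can only increase $|A_i\cap W|$. Your ``main obstacle'' goes away by integrality, so you can drop the fallback potential argument (whose bound you left unfinished): their EJR up to one project requires $u_i(X\cup\{c\})>\ell$ strictly, and with $0/1$ utilities $|A_i\cap X|+1\ge u_i(X\cup\{c\})>\ell$ already gives $|A_i\cap X|\ge\ell$, i.e., full EJR.
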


Combining the lower bound from \Cref{cor:rec_upper_bounds} with \Cref{thm:quality_approximation} and the fact that reciprocal MES selects a quality-greedy completion of a reciprocally affordable set yields the following corollary, which gives a $\frac{4}{3}$-quality price of representation of reciprocal EJR.

\begin{corollary}\label{cor:MESrec_34}
    The quality-price of reciprocal EJR is $\frac{4}{3}$.
\end{corollary}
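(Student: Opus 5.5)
The plan is to prove the two inequalities separately. The lower bound, that the quality-price of reciprocal EJR is at least $\frac{4}{3}$, is exactly \Cref{cor:rec_upper_bounds}. It follows because reciprocal EJR implies reciprocal JR and \Cref{obs:rec_JR_upper_bound} gives the bound for reciprocal JR. It therefore remains to show that the quality-price is at most $\frac{4}{3}$. Concretely, on every instance $\mathcal I$ I need some reciprocal EJR committee $W$ with $h(W)\ge \frac34 h(W^\star(\mathcal I))$.

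For the upper bound I use reciprocal MES. By \Cref{prop:MESrec_recEJR}, its output satisfies reciprocal EJR; this uses the correspondence with EJR in participatory budgeting under costs $\frac{1}{h_c}$ and budget $k$. The main step is to check that the output is a quality-greedy completion of a reciprocally affordable set, and then apply \Cref{thm:quality_approximation}.

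\begin{itemize}
\item Let $X$ be the set selected during the MES phase. Each $c\in X$ is paid in full, at total cost $\frac{1}{h_c}$, from the budgets $b_i$. No budget ever becomes negative, and the budgets start at $\frac kn$ each, summing to $k$. Hence $\sum_{c\in X}\frac1{h_c}\le k$, so $X$ is reciprocally affordable.
\item Since every cost is at least $1$, this also gives $|X|\le k$, so the output has size exactly $k$.
\item The completion phase repeatedly adds the highest-quality remaining candidate. This yields a size-$k$ superset of $X$ of maximum total quality among all size-$k$ supersets of $X$, since it adds the top $k-|X|$ qualities from $C\setminus X$. So the output is a quality-greedy completion $X^+$.
\end{itemize}

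\Cref{thm:quality_approximation} then gives $h(X^+)\ge\frac34 h(W^\star)$. As a result, $\max_{W\in\text{rec. EJR}(\mathcal I)}h(W)\ge \frac34 h(W^\star(\mathcal I))$ on every instance, so the supremum ratio is at most $\frac43$. Together with the lower bound, this gives equality.

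The only delicate point is that \Cref{prop:MESrec_recEJR} is stated for the output after completion, which I expect to be the main obstacle. The fix is that reciprocal EJR is monotone under adding candidates to the committee: the demands depend only on the profile, and each $|A_i\cap W|$ only grows as $W$ grows. Hence any completion, including the greedy one, still satisfies reciprocal EJR, and the argument goes through.
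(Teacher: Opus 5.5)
Your proposal is correct and matches the paper's argument. The lower bound comes from reciprocal JR via \Cref{cor:rec_upper_bounds}, and the upper bound from reciprocal MES satisfying reciprocal EJR (\Cref{prop:MESrec_recEJR}) while outputting a quality-greedy completion of a reciprocally affordable set, to which \Cref{thm:quality_approximation} applies; your budget-accounting check and the monotonicity remark only spell out details the paper leaves implicit, since its reciprocal MES already includes the greedy completion step.
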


\subsection{The Quality-Price of Reciprocal Threshold-PJR+}

We can adapt the definition of EAR to the reciprocal framework to construct rules that always return reciprocal threshold-PJR+ committees that are quality-greedy completions of reciprocally affordable sets of candidates.

\textbf{Reciprocal EAR.} Start with an empty committee $W$ and budgets $b_i = \frac{k}{n}$ for each voter $i \in V$. Let $h^1 > h^2 > \dots > h^r$ be the distinct quality scores. For $j = 1, \ldots, r$, consider the set $C^j \coloneqq \{c\in C\setminus W \mid h_c \geq h^j \,\land\, \sum_{i\in V[c]} b_i \geq \frac{1}{h_c}\}$ of candidates above threshold $h_j$ that can be reciprocally afforded by their supporters, given their current budgets. While $C^j \neq \emptyset$, iteratively add a candidate $c \in C^j$ to $W$ and let voters in $V[c]$ pay the price $\frac{1}{h_c}$ for $c$ arbitrarily (without overdrawing their budget). Track payments $p_i(c)$, update each $b_i$ by subtracting $p_i(c)$, and refresh $C^j$. Once no candidate in $C^r$ is affordable, complete the committee quality-greedily.

The proof that any reciprocal EAR satisfies reciprocal threshold-PJR+ works the same as for the non-reciprocal equivalents.
We first define the following reciprocal price system notion.

\begin{definition}
    For a given instance and set of candidates $X$, a reciprocal price system $(p, (p_i)_{i \in V})$ is a tuple of a price $p \le 1$ and payment functions $p_i: C \rightarrow \mathbb{R}_{\ge 0}$ for every voter $i \in V$ such that
    \begin{enumerate}
        \item $i\notin V[c]\Rightarrow p_i(c) = 0$,
        \item $\sum_{c\in X} p_i(c) \le \frac{k}{n}$ for all $i \in V$,
        \item $\sum_{i\in V} p_i(c) = \frac{p}{h_c}$ for all $c \in X$, and
        \item $\sum_{i\in V} p_i(d) = 0$ for all $d \in C\setminus X$.
    \end{enumerate}
\end{definition}

We further define the notion of reciprocal threshold-maximal affordability.

\begin{definition}[Reciprocal threshold-maximal affordability]\label{def:reciprocal_threshold_affordable}
    For a given instance, a set of candidates $X$ is called \emph{reciprocal threshold-maximally affordable} iff there exists a reciprocal price system $(p, (p_i)_{i \in V})$, such that for all $h \in [0,1]$ and $d \in C^{\ge h} \setminus X$ it holds that
    $$
        \sum_{i \in V[d]} (\frac{k}{n} - \sum_{c\in X^{\geq h}} p_i(c)) < \frac{p}{h_d}.
    $$
\end{definition}

\begin{proposition} \label{prop:reciprocal_threshold-maximally-affordable_reciprocal_threshold-PJR+}
    Any completion of a reciprocal threshold-maximally affordable set of candidates satisfies reciprocal threshold-PJR+.
\end{proposition}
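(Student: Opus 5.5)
The plan mirrors the proof of \Cref{prop:theshold-maximally-affordable_threshold-PJR+}, arguing by contradiction. Let $X$ be reciprocal threshold-maximally affordable, certified by a reciprocal price system $(p,(p_i)_{i\in V})$ with $p\le 1$, and let $W\supseteq X$ be any completion. Suppose $W$ violates reciprocal threshold-PJR+. Then there are a candidate $d\in C\setminus W$ and a nonempty $S\subseteq V[d]$ with
\[
\lvert S\rvert\tfrac{k}{n}\ \ge\ \tfrac{1}{h_d}+\sum_{c\in \appjoin{S}\cap W^{\ge h_d}}\tfrac{1}{h_c}.
\]
We set $h\coloneqq h_d$. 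Since $d\notin W\supseteq X$, we have $d\in C^{\ge h}\setminus X$, so the affordability condition of \Cref{def:reciprocal_threshold_affordable} applies to this pair $(h,d)$.

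First we bound how much the voters of $S$ have spent on $X^{\ge h}$. A voter $i\in S$ pays only for candidates in $X$ that it approves, so $p_i(c)>0$ requires $c\in A_i\subseteq\appjoin{S}$. For each $c\in X^{\ge h}$, the total payment by all voters is $\frac{p}{h_c}$. Because $X^{\ge h}\subseteq W^{\ge h_d}$, this gives
\[
\sum_{i\in S}\sum_{c\in X^{\ge h}}p_i(c)\ \le\ \sum_{c\in \appjoin{S}\cap W^{\ge h_d}}\tfrac{p}{h_c}\ \le\ \sum_{c\in \appjoin{S}\cap W^{\ge h_d}}\tfrac{1}{h_c},
\]
where the last step uses $p\le 1$.

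Next we bound the remaining budget of $S$. Since $S\subseteq V[d]$ and every term $\frac{k}{n}-\sum_{c\in X}p_i(c)$ is nonnegative by condition~2 of the price system, we can drop the voters in $V[d]\setminus S$:
\[
\sum_{i\in V[d]}\Bigl(\tfrac{k}{n}-\sum_{c\in X^{\ge h}}p_i(c)\Bigr)\ \ge\ \lvert S\rvert\tfrac{k}{n}-\sum_{c\in \appjoin{S}\cap W^{\ge h_d}}\tfrac{1}{h_c}\ \ge\ \tfrac{1}{h_d}\ \ge\ \tfrac{p}{h_d}.
\]
The middle inequality is the violation assumption, and the last uses $p\le 1$ again. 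This contradicts reciprocal threshold-maximal affordability for $d$ at threshold $h_d$.

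The main subtlety is the nonnegativity step that lets us drop voters in $V[d]\setminus S$. We need each voter's leftover budget restricted to $X^{\ge h}$ to be nonnegative. This holds because each $p_i$ is nonnegative and $\sum_{c\in X^{\ge h}}p_i(c)\le\sum_{c\in X}p_i(c)\le\frac{k}{n}$.

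A second point to check is that the scaling by $p$ is harmless, including the possibility $p<1$. It is harmless because $p$ enters only through the inequalities $p\le 1$ used above. The rest of the argument is bookkeeping.
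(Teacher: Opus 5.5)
Your proof is correct and follows the paper's argument. In both, the leftover budget of $S$ on $X^{\ge h_d}$ is at least $\lvert S\rvert\frac{k}{n}-\sum_{c\in\appjoin{S}\cap W^{\ge h_d}}\frac{1}{h_c}\ge\frac{1}{h_d}\ge\frac{p}{h_d}$, which contradicts reciprocal threshold-maximal affordability at threshold $h_d$. You additionally make explicit three steps the paper leaves implicit: restricting payments to $X^{\ge h_d}\subseteq W^{\ge h_d}$, dropping the nonnegative leftover budgets of $V[d]\setminus S$, and using $p\le 1$.
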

\begin{proof}
    Let $X \subseteq C$ of size $\le k$ be reciprocal threshold-maximally affordable certified by some reciprocal price system $(p, (p_i)_{i \in V})$ and let $W$ be an arbitrary completion of $X$.
    Assume for contradiction that there exist a candidate $d \notin W$ and a group of supporting voters $S\subseteq V[d]$ with $\lvert S \rvert \frac{k}{n} \ge \frac{1}{h_d} + \sum_{c\in \appjoin{S} \cap W^{\ge h_d}} \frac{1}{h_c}$.
    Since voters from $S$ can only pay for candidates in $\appjoin{S}$, the voters collectively have a remaining budget of $\lvert S \rvert \frac{k}{n} - \sum_{c\in \appjoin{S} \cap W^{\ge h_d}} \frac{1}{h_c} \ge \frac{1}{h_d} \ge \frac{p}{h_d}$, a violation of threshold-maximal affordability.
\end{proof}

We can now show that any reciprocal EAR indeed satisfies reciprocal threshold-PJR+.

\begin{restatable}{proposition}{recEARThresholdPJR}\label{prop:reciprocal_EAR_reciprocal_threshold-PJR+}
   If $f$ is a reciprocal EAR, it satisfies reciprocal threshold-PJR+.
\end{restatable}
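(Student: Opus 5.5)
We will mirror the proof of \Cref{prop:EAR_threshold-PJR+}. Let $W$ be the output of a reciprocal EAR, let $X\subseteq W$ be the set of candidates selected before the quality-greedy completion, and let $(p_i)_{i\in V}$ be the payments recorded by the rule. Our goal is to show that $X$ is reciprocal threshold-maximally affordable, certified by the reciprocal price system $(1,(p_i)_{i\in V})$. \Cref{prop:reciprocal_threshold-maximally-affordable_reciprocal_threshold-PJR+} then yields that any completion of $X$, and in particular the quality-greedy one, satisfies reciprocal threshold-PJR+.

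First we check the price system conditions with $p=1$. Voters only pay for approved candidates. Each selected $c$ is paid exactly $\frac{1}{h_c}=\frac{p}{h_c}$ in total, and unselected candidates receive no payments. No voter is overdrawn, so $\sum_{c\in X}p_i(c)\le \frac kn$. Next we record the budget invariant: after any prefix of the run with working committee $X'$, we have $b_i=\frac kn-\sum_{c\in X'}p_i(c)$.

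The main step is maximality. Suppose that for some $h\in[0,1]$ and some $d\in C^{\ge h}\setminus X$ we have
\[\sum_{i\in V[d]}\Bigl(\tfrac kn-\sum_{c\in X^{\ge h}}p_i(c)\Bigr)\ge \tfrac1{h_d}.\]
Let $j$ be the phase with $h^j=h_d$. During phases $1,\dots,j$, only candidates of quality at least $h^j=h_d$ are ever added. Let $X_j$ be the committee at the end of phase $j$. Then $X_j\subseteq X^{\ge h_d}\subseteq X^{\ge h}$, where the last inclusion uses $h_d\ge h$.

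Since payments are nonnegative, at the end of phase $j$ the remaining budgets of $V[d]$ satisfy
\[\sum_{i\in V[d]}b_i=\sum_{i\in V[d]}\Bigl(\tfrac kn-\sum_{c\in X_j}p_i(c)\Bigr)\ge\sum_{i\in V[d]}\Bigl(\tfrac kn-\sum_{c\in X^{\ge h}}p_i(c)\Bigr)\ge\tfrac1{h_d}.\]
Moreover $h_d\ge h^j$ and $d\notin X_j$, so $d\in C^j$. This contradicts the termination of phase $j$.

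The delicate point is this ordering argument. Phase $j$ must terminate before any lower-quality candidates are bought, so comparing against $X^{\ge h}$ rather than all of $X$ is legitimate. The monotonicity of remaining budgets then closes the gap. Everything else is a verbatim transfer of the non-reciprocal proof, with price $1$ replaced by $\frac{1}{h_c}$.
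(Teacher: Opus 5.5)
Your proof is correct and follows the paper's own argument. Both certify $X$ as reciprocal threshold-maximally affordable via the price system $(1,(p_i)_{i\in V})$, derive a contradiction with the termination of the phase $j$ with $h^j=h_d$, and then invoke the completion proposition. Your inclusion $X_j\subseteq X^{\ge h}$, combined with nonnegative payments, is slightly more careful than the paper's equality $b_i=\frac{k}{n}-\sum_{c\in X^{\ge h_d}}p_i(c)$. That equality tacitly uses that no candidate of quality at least $h^j$ is bought after phase $j$.
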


\begin{proof}[Proof of \Cref{prop:reciprocal_EAR_reciprocal_threshold-PJR+}.]
    Let $W$ be the committee obtained by a reciprocal EAR. Let $X$ be the uncompleted set of candidates from the reciprocal EAR and $(1, (p_i)_{i \in V})$ the corresponding reciprocal price system. We show that $X$ is reciprocal threshold-maximally affordable. If for some $h$, a candidate $d \in C\setminus X$ with $h_d \ge h$ violates the condition, i.e., $\sum_{i\in V[d]} (\frac{k}{n} - \sum_{c\in X^{\geq h}} p_i(c)) \ge \frac{1}{h_d}$, then we claim that the algorithm would not have stopped the first phase with $X$: after each step during the algorithm with working committee $X'$, we have that $b_i = \frac{k}{n} - \sum_{c\in X'} p_i(c)$. Therefore, consider step $j$ of the algorithm where $h_d = h^j$. Since $h^j = h_d \ge h$, we have $\sum_{i\in V[d]} {b_i} = \sum_{i\in V[d]} (\frac{k}{n} - \sum_{c\in X^{\geq h_d}} p_i(c)) \ge \sum_{i\in V[d]} (\frac{k}{n} - \sum_{c\in X^{\geq h}} p_i(c)) \ge \frac{1}{h_d}$. However, since $d \in C^j$, step $j$ would not have terminated.
    Since $W$ is a completion of the reciprocal threshold-maximally affordable set $X$, it follows from \Cref{prop:reciprocal_threshold-maximally-affordable_reciprocal_threshold-PJR+} that $W$ satisfies reciprocal threshold-PJR+.
\end{proof}

Clearly, any reciprocal EAR elects a quality-greedy completion of a reciprocally affordable set, therefore we obtain a quality-price of $\frac{4}{3}$ for reciprocal threshold-PJR+.

\begin{corollary}\label{cor:RecEAR_34}
    The quality-price of reciprocal threshold-PJR+ is $\frac{4}{3}$.
\end{corollary}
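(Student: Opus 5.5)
The plan is to sandwich the quality-price between two matching bounds, using results already established in this appendix.

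\emph{Lower bound.} By \Cref{cor:rec_upper_bounds}, the quality-price of reciprocal threshold-PJR+ is at least $\frac{4}{3}$. That corollary already uses the instance from \Cref{obs:rec_JR_upper_bound}, together with the fact that reciprocal threshold-PJR+ implies reciprocal JR. So nothing further is needed on this side.

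\emph{Upper bound.} Fix any reciprocal EAR, for instance one breaking ties arbitrarily. By \Cref{prop:reciprocal_EAR_reciprocal_threshold-PJR+}, its output $W$ satisfies reciprocal threshold-PJR+ on every instance $\mathcal I$. It then suffices to show that $W$ is a quality-greedy completion of a reciprocally affordable set. In that case \Cref{thm:quality_approximation} gives $h(W)\ge \frac34 h(W^\star(\mathcal I))$. Hence
\[
\max_{W'\in \mathcal X(\mathcal I)} h(W') \ge \tfrac34 h(W^\star(\mathcal I)),
\]
so the ratio is at most $\frac43$ for every instance, and the supremum is at most $\frac43$.

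\emph{Main step: the EAR's first-phase set $X$ is reciprocally affordable.} This is where some care is needed, and I will argue it by a budget count.
\begin{itemize}
\item Each candidate $c$ added in the first phase is paid for in full: the voters pay exactly $\frac{1}{h_c}$ in total.
\item No voter is overdrawn, and the initial budgets sum to $n\cdot\frac kn = k$.
\item Summing over all purchases gives $\sum_{c\in X}\frac1{h_c}\le k$, which is exactly reciprocal affordability.
\item Since $h_c\le 1$, each term $\frac1{h_c}$ is at least $1$. Therefore $|X|\le k$, so completing $X$ to a size-$k$ committee is well defined.
\end{itemize}
It remains to check the completion. The second phase adds the highest-quality remaining candidates until the committee has size $k$. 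This maximizes $h$ over all size-$k$ supersets of $X$, so the output is a quality-greedy completion in the sense required by \Cref{thm:quality_approximation}.

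Combining the lower and upper bounds gives a quality-price of exactly $\frac43$.
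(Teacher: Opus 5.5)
Your proposal is correct and follows the paper's argument: the lower bound comes from \Cref{cor:rec_upper_bounds} (via the implication to reciprocal JR), and the upper bound comes from \Cref{prop:reciprocal_EAR_reciprocal_threshold-PJR+} combined with \Cref{thm:quality_approximation}. Your budget-counting argument, which shows that the first-phase set $X$ satisfies $\sum_{c\in X}\frac{1}{h_c}\le k$ and hence $|X|\le k$, spells out the step the paper dismisses with ``clearly''.
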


\subsection{The Quality-Price of Reciprocal Value-EJR}

In this section, we introduce the reciprocal equivalent of value-GCR and show that it satisfies reciprocal value-EJR, while always returning quality-greedy completions of reciprocally affordable sets.

\textbf{Reciprocal Value-GCR.} Start with the empty committee $W = \emptyset$ and budgets $b_i = \frac{k}{n}$. We say that $(S, T)$ \emph{witnesses a reciprocal value-EJR violation} if $S \subseteq V$ can reciprocally afford $T$, but for all $i \in S$ we have $h(A_i\cap W) < h(T)$. In the first phase we iteratively choose a reciprocal value-EJR violation witness $(S, T)$ maximizing $h(T)$. We add all candidates from $T \setminus W$ to $W$, and deduct $\frac {\sum_{c \in T \setminus W} \frac{1}{h_c}}{\lvert S\rvert}$ from the budget of each $i \in S$. Proceed until no reciprocal value-EJR witnesses exist. For the second phase, as long as some candidate can be reciprocally afforded by their supporting voters, i.e., $c \in C\setminus W$ satisfies $\sum_{i \in V[c]} b_i \ge \frac{1}{h_c}$, we add $c$ to $W$ and deduct a budget of $\frac{1}{h_c}$ from $V[c]$ arbitrarily (without charging a voter more than their budget) and repeat until no such candidate exists. Finally, if $|W| < k$, we complete $W$ quality-greedily.

\begin{theorem} \label{thm:reciprocal-value-GCR_reciprocal-value-EJR}
    Reciprocal value-GCR satisfies reciprocal value-EJR.
\end{theorem}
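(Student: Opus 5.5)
The plan is to follow the proof of \Cref{thm:value-GCR_value-EJR_PJR} with the unit price replaced by the reciprocal price $\frac{1}{h_c}$. Reciprocal value-EJR itself is immediate. Phase~1 stops only when no witness $(S,T)$ is left, and later phases only add candidates, so $h(A_i\cap W)$ never decreases. Hence, for every $S$ that can reciprocally afford some $T\subseteq\appcut{S}$, some $i\in S$ ends with $h(A_i\cap W)\ge h(T)$. The condition in the table asks for $h(T\mid \ell)$ with $\ell=\lvert T\rvert$, which equals $h(T)$ since $T$ has exactly $\ell$ elements. The real work is showing that the rule is well defined: the committee never exceeds $k$ candidates, and, for use with \Cref{thm:quality_approximation}, the set built before completion is reciprocally affordable.

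First, the groups $S_x$ selected in phase~1 are pairwise disjoint. Once $T_x$ has been added, every $i\in S_x$ has $h(A_i\cap W)\ge h(T_x)$, because $T_x\subseteq\appcut{S_x}\subseteq A_i$. Witnesses are chosen by maximal $h(T)$ among the currently violated ones, so any later witness $(S_{x'},T_{x'})$ was already violated or at least present earlier, giving $h(T_{x'})\le h(T_x)$. Therefore no voter of $S_x$ can belong to a later witness group.

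Second, each voter in $S_x$ still has the full budget $\frac kn$ when $S_x$ is chosen. The charge per voter is $\frac{1}{\lvert S_x\rvert}\sum_{c\in T_x\setminus W}\frac1{h_c}\le \frac{1}{\lvert S_x\rvert}\sum_{c\in T_x}\frac{1}{h_c}\le \frac kn$, using reciprocal affordability $\lvert S_x\rvert\ge \sum_{c\in T_x}\frac1{h_c}\cdot\frac nk$. So no budget becomes negative. Phase~2 charges $\frac1{h_c}$ per added candidate without overdrawing any budget.

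Third, let $X$ be the set selected after phases~1 and~2. The total charged is exactly $\sum_{c\in X}\frac1{h_c}$, since each added candidate is paid in full: in phase~1 only new candidates $T\setminus W$ are charged. This total is at most $\sum_i\frac kn=k$, so $X$ is reciprocally affordable. Since $\frac1{h_c}\ge1$, it follows that $\lvert X\rvert\le k$. The final quality-greedy completion then yields a committee of size $k$ that is a quality-greedy completion of $X$, which is what \Cref{thm:quality_approximation} needs.

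The main obstacle is the disjointness step. The maximality argument must compare $h(T_{x'})$ with $h(T_x)$ correctly. A pair $(S_{x'},T_{x'})$ was already a candidate witness at time $x$, because adding candidates only increases utilities, so a pair violated at time $x'$ was also violated at time $x$. By the maximal choice at time $x$, this gives $h(T_{x'})\le h(T_x)$.
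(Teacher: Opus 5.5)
Your proposal is correct and takes essentially the same route as the paper's proof. Both arguments use disjointness of the phase-1 groups via the maximal choice of $h(T)$, bound each voter's phase-1 charge by $\frac{k}{n}$ using reciprocal affordability, bound the committee size by the total budget $k$, and read off reciprocal value-EJR from the termination condition of phase~1. You also supply two details the paper only asserts: that $h(T_{x'})\le h(T_x)$ holds because a pair violated later was already violated earlier, and that the set built before completion is reciprocally affordable.
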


\begin{proof}
    Consider any profile $A$ and target size $k$. We first show that reciprocal value-GCR terminates with a committee $W$ of size at most $k$.

    Let $r$ be the number of iterations of phase 1.
    Let $(S_x, T_x)$ be the witness in iteration $x \in [r]$. Moreover, let $W_x$ denote the set of candidates after iteration $x$. We claim that all $S_x$ are disjoint. Indeed, once $T_x \setminus W_{x-1}$ is added to $W_{x-1}$, for all $x' > x$ it holds that $h(A_i \cap W_{x'-1}) \ge h(A_i \cap W_x) \ge h(T_x) \ge h(T_{x'})$ for all $i \in S_x$, and therefore $i \notin S_{x'}$ for all $i \in S_x$.

    We now consider the budgets from this process. Each voter $i$ starts with a budget of $b_i = \frac{k}{n}$ and since the voter sets are disjoint, before iteration $x$ the budget of all voters in $S_x$ is still $\frac{k}{n}$. We split the cost of $\sum_{c \in T \setminus W} \frac{1}{h_c}$ equally over $|S_x| \ge \sum_{c \in T \setminus W} \frac{1}{h_c} \frac{n}{k}$ voters, thus each voter is charged $\frac{\sum_{c \in T \setminus W} \frac{1}{h_c}}{|S_x|} \le \frac{k}{n}$. Thus no voter can have a negative budget after iteration $r$. Phase 2 distributes the cost of selected candidates over supporting voters without exceeding their budget. Let $W'$ be the committee after phase 2. Then for each candidate $c \in W'$ their cost of $\frac{1}{h_c}$ was distributed over the voters without exceeding any voter's budget. The size of $W'$ is therefore bounded by the sum of initial budgets $\sum_{i \in V} b_i = k$. The completion step does not increase the size of $W'$ above $k$ by definition, therefore $|W| \le k$.

   The satisfaction of reciprocal value-EJR follows immediately from the termination condition of phase 1 and $W \supseteq W_r$.
\end{proof}

Clearly, reciprocal value-GCR elects a quality-greedy completion of a reciprocally affordable set, therefore we obtain a quality-price of $\frac{4}{3}$ for reciprocal value-EJR.

\begin{corollary}\label{cor:RecGCR_34}
    The quality-price of reciprocal value-EJR is $\frac{4}{3}$.
\end{corollary}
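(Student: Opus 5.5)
The corollary combines two facts: the lower bound of \Cref{cor:rec_upper_bounds}, and a matching upper bound from \Cref{thm:quality_approximation} applied to the committees produced by reciprocal value-GCR.

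For the lower bound, I simply invoke \Cref{cor:rec_upper_bounds}. It states that the quality-price of reciprocal value-EJR is at least $\frac{4}{3}$, because reciprocal value-EJR implies reciprocal JR by \Cref{fig:relations_reciprocal}.

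For the upper bound, it suffices to show that on every instance some reciprocal value-EJR committee $W$ has $h(W)\ge \frac34 h(W^\star)$. I take $W$ to be the output of reciprocal value-GCR, which satisfies reciprocal value-EJR by \Cref{thm:reciprocal-value-GCR_reciprocal-value-EJR}. It remains to check that $W$ is a quality-greedy completion of a reciprocally affordable set.

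Let $X$ be the committee at the end of phase 2. The budget accounting in the proof of \Cref{thm:reciprocal-value-GCR_reciprocal-value-EJR} applies here: every $c\in X$ was paid for in full, at price $\frac{1}{h_c}$, out of initial budgets summing to $k$. In phase 1, a witness $(S,T)$ pays exactly $\sum_{c\in T\setminus W}\frac{1}{h_c}$, and the candidates in $T\cap W$ were already paid for earlier, so nothing is counted twice. In phase 2, each added candidate costs exactly $\frac{1}{h_c}$. No budget ever goes negative, so $\sum_{c\in X}\frac{1}{h_c}\le k$. Hence $X$ is reciprocally affordable, which also gives $|X|\le k$ since each cost is at least $1$.

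The final step fills the remaining $k-|X|$ seats with the highest-quality candidates of $C\setminus X$. This maximizes $h$ among size-$k$ supersets of $X$, so $W=X^+$ is a quality-greedy completion. \Cref{thm:quality_approximation} then yields $h(W)\ge\frac34 h(W^\star)$, so $\max_{W'\in\mathcal X(\mathcal I)}h(W')\ge \frac34 h(W^\star)$ for every instance $\mathcal I$. The quality-price is therefore at most $\frac43$, and together with the lower bound it equals $\frac43$.

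The main obstacle is making the budget argument airtight. I need every phase-1 charge to be at most the remaining budget, and this holds because the witness groups $S_x$ are disjoint, as shown in that proof. Phase 2 charges only budget that is still available. These are exactly the facts already established in that proof, so they can be reused directly.
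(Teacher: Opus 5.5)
Your proposal is correct and follows the paper's route: the lower bound comes from \Cref{cor:rec_upper_bounds}, and the upper bound comes from observing that reciprocal value-GCR outputs a quality-greedy completion of a reciprocally affordable set and then applying \Cref{thm:quality_approximation}. Your explicit budget accounting (disjoint phase-1 witness groups, each candidate paid exactly $\frac{1}{h_c}$ once, total spending at most $k$) fills in the step the paper dismisses with ``clearly''.
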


\section{Welfare-Price of Representation}\label{app:welfare_tradeoffs}
\newcommand{\greedyJR}{\ensuremath{\mathrm{greedy\text{-}JR}}}
In this section, we write $u(W) \coloneqq \sum_{i\in V} h(A_i \cap W)$ for the welfare of a committee $W$ and $u_i(W) \coloneqq h(A_i \cap W)$ for the utility derived by some voter $i\in V$. We denote by $\mathrm{OPT} \coloneqq u(W^\star)$ the welfare of a welfare-optimal committee $W^\star$. We analyze the welfare-price that a proportionality notion provides.

\begin{definition}
    A proportionality notion $\mathcal X$ \emph{admits a welfare approximation} of $\alpha \le 1$ if for every instance $\mathcal I$, some committee $W \in \mathcal X(\mathcal I)$ satisfies $u(W) \ge \alpha \cdot \mathrm{OPT}$.
\end{definition}

The weakest reciprocal-style notion, reciprocal JR, already admits a strong worst-case bound. Note that the classical worst-case examples translate directly: setting $h_c = 1$ for all $c \in C$ collapses reciprocal JR to JR, and quality-based and standard welfare coincide. Moreover, on the instance used in the proof of \citet[Theorem~9]{DBLP:journals/ai/LacknerS20}, JR itself forces the candidate of every singleton party into the committee, so their bound for weakly proportional rules also applies to every committee satisfying JR.

\begin{observation}
  Reciprocal JR admits a welfare approximation of at most
  $\frac{2}{\lfloor\sqrt{k}\rfloor} - \frac{1}{k}$.
\end{observation}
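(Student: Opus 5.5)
The observation is an upper bound on the achievable approximation. My plan is to reduce it to the classical bound of \citet[Theorem~9]{DBLP:journals/ai/LacknerS20} rather than build a new instance. I would set $h_c = 1$ for every candidate. Then $q_\ell(X) = \ell$ whenever $|X|\ge \ell$, so a group $S$ can reciprocally afford a set $T\subseteq \appcut{S}$ of size $\ell$ exactly when $|S|\ge \ell\frac{n}{k}$. In particular, reciprocal JR coincides with standard JR on such instances. The quality-based welfare $u(W)=\sum_{i\in V} h(A_i\cap W)$ also becomes the usual utilitarian approval welfare $\sum_i |A_i\cap W|$. So it suffices to exhibit, for every $k$, an approval instance on which every JR committee has approval welfare at most $\bigl(\frac{2}{\lfloor\sqrt k\rfloor}-\frac1k\bigr)\mathrm{OPT}$.

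For this I would recall the construction behind \citet[Theorem~9]{DBLP:journals/ai/LacknerS20}, as I understand it. Write $s=\lfloor\sqrt k\rfloor$. There is a large cohesive bloc of voters approving a common pool of at least $k$ candidates. In addition there are $k-s$ singleton parties, each a group of exactly $\frac nk$ voters approving one private candidate. The bloc is sized so that it carries the bulk of the welfare when all $k$ seats go to its candidates.

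The key check, and the step that needs care, concerns the cited theorem itself. It is stated for ``weakly proportional'' rules, which must pick each singleton party's candidate. I need that JR already forces this on the given instance. Each singleton party is a $1$-large group whose members approve only their private candidate, so $\appcut{S}$ is nonempty. JR therefore requires $\appjoin{S}\cap W\neq\emptyset$, and since $\appjoin{S}$ consists of that single private candidate, the candidate must be elected. Consequently every JR committee spends $k-s$ seats on singleton parties and at most $s$ seats on the bloc. The lower-order term $-\frac1k$ comes from the leftover rounding and from the welfare the singleton parties contribute.

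Finally I would compare this welfare with the welfare-optimal committee that gives all $k$ seats to the bloc. I would take the resulting ratio bound directly from Lackner and Skowron's computation, which yields the stated expression $\frac{2}{\lfloor\sqrt k\rfloor}-\frac1k$. Since all qualities are $1$, nothing in that computation changes, and the bound transfers verbatim to reciprocal JR.
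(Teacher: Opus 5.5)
Your proposal is correct and follows the same route as the paper. You set all qualities to $1$, so that reciprocal JR collapses to JR and quality welfare becomes approval welfare. You then observe that on the instance from \citet[Theorem~9]{DBLP:journals/ai/LacknerS20}, JR already forces every singleton party's candidate, so their bound for weakly proportional rules applies to every JR committee. Your reconstruction of that instance, with $k-s$ singleton parties of size $\frac nk$ plus a bloc of $s\frac nk$ voters sharing $k$ candidates, indeed gives the ratio $\frac{s^2+k-s}{sk}=\frac sk+\frac1s-\frac1k\le\frac2s-\frac1k$. One small correction: the $-\frac1k$ comes entirely from the singleton parties' welfare, and the rounding $s=\lfloor\sqrt k\rfloor$ enters only through $\frac sk\le\frac1s$.
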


For standard JR, we obtain a worse bound in our setting.

\begin{observation}\label{obs:WelfareWorstCaseJR}
  For any $\varepsilon \in (0,1]$, JR admits a welfare approximation of at most
  $\frac{1}{k} + \frac{(k-1)\varepsilon}{k}$.
\end{observation}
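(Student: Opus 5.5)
We will prove the observation by exhibiting, for every $\varepsilon\in(0,1]$, an instance in which every JR committee has welfare at most $\bigl(\frac1k+\frac{(k-1)\varepsilon}{k}\bigr)\cdot\mathrm{OPT}$. The example from \Cref{thm:unscaled-cost} does not work directly, because there $v_1$ approves all high-quality candidates. We therefore use a variant in which one large block of voters commonly approves $k$ high-quality candidates, and $k-1$ singleton voters each approve a private low-quality candidate.

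\textbf{Construction.} Take $k$ candidates $a_1,\dots,a_k$ of quality $1$ and $k-1$ candidates $d_2,\dots,d_k$ of quality $\varepsilon$. Take $M$ voters $u_1,\dots,u_M$ with ballot $\{a_1,\dots,a_k\}$, and voters $v_2,\dots,v_k$ where $v_j$ approves only $d_j$. Then $n=M+k-1$. Each $v_j$ must be a $1$-cohesive singleton, so we need $1\ge \frac nk$, i.e.\ $n\le k$. This forces $M\le 1$, and we set $M=1$, giving $n=k$. The instance is thus exactly the profile from \Cref{thm:unscaled-cost}, except that $v_1$ now approves only the $a$-candidates.

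\textbf{JR committees.} Each singleton $\{v_j\}$ is $1$-large with $|A_{v_j}|\ge1$, so JR forces $d_j\in W$ for all $j\ge2$. This leaves exactly one further seat. Its best use is some $a_i$, which is also needed for JR of $\{u_1\}$. Hence every JR committee has welfare at most $1+(k-1)\varepsilon$.

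\textbf{Optimum.} The committee $\{a_1,\dots,a_k\}$ gives $u_1$ utility $k$, so $\mathrm{OPT}\ge k$. The ratio is therefore at most $\frac{1+(k-1)\varepsilon}{k}=\frac1k+\frac{(k-1)\varepsilon}{k}$.

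The only delicate step is the second one. We must check that no JR committee can do better, which follows because all $k-1$ seats for the $d_j$ are forced and the single remaining seat contributes at most $1$. We must also confirm the formal requirement in the model that qualities lie in $(0,1]$, which holds since $\varepsilon\in(0,1]$; this is the reason the statement restricts $\varepsilon$ to that range. No other obstacles arise.
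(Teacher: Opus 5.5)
Your proof is correct and uses exactly the paper's instance: $n=k$ voters, $k-1$ of whom each approve a private quality-$\varepsilon$ candidate and one of whom approves $k$ quality-$1$ candidates, together with the same JR-forcing argument and the same optimum of $k$. One minor quibble: your opening remark that the example from \Cref{thm:unscaled-cost} ``does not work directly'' is unfounded, since there too every JR committee must consist of $d_2,\dots,d_k$ plus one candidate approved by $v_1$, giving the same bound (and your final instance still has a voter approving all high-quality candidates).
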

\begin{proof}
  Consider an instance with $n = k$ voters, where each of the first $k-1$ voters
  approves only their own quality-$\varepsilon$ candidate, and the last voter approves only
  $k$ quality-$1$ candidates. JR forces
  selecting one representative per cohesive group, contributing
  $(k-1)\varepsilon$, leaving only one slot for the quality-$1$ block. The
  optimal committee picks all $k$ quality-$1$ candidates, achieving welfare
  proportional to $k$. The ratio is
  $\frac{(k-1)\varepsilon + 1}{k} \to \frac{1}{k}$ as $\varepsilon \to 0$.
\end{proof}

We now turn to comparing the welfare loss imposed by value-JR and
threshold-JR. The worst-case bound for JR from
\Cref{obs:WelfareWorstCaseJR} also implies the same bound for value-JR.
However, note that in this example we need to choose the minimum quality to
be in $\mathcal O(\frac{1}{k})$, so qualities will get arbitrarily small for
large $k$. For threshold-JR, due to its lexicographic nature, this bound
remains even when we have minimum quality $1-\varepsilon$ for arbitrarily
small $\varepsilon>0$:

\begin{table}
  \caption{Instance from \Cref{ex:welfare_price_tJR_vs_vJR}.}
  \label{tab:welfare_price_tJR_vs_vJR}
  \centering
  \begin{tabular}{c c c c c c c c c c}
    \toprule
    $k = n$ & $c_1$ & $\cdots$ & $c_n$ & $c_{n+1}$ & $\cdots$ & $c_{n+k}$\\
    quality &  $1$ & $\cdots$ & $1$ & $1-\varepsilon$ & $\cdots$ & $1-\varepsilon$\\
    \midrule
    $v_1$     & \checkmark & & & \checkmark & $\cdots$ & \checkmark\\
    $\vdots$  & & $\ddots$ & & $\vdots$ & $\ddots$ & $\vdots$\\
    $v_n$     & & & \checkmark & \checkmark & $\cdots$ & \checkmark\\
    \bottomrule
  \end{tabular}
\end{table}

\begin{example} \label{ex:welfare_price_tJR_vs_vJR}
    Consider an instance with $k=n$, $k$ unanimously approved candidates of
    quality $1-\varepsilon$ and for each voter $i\in V$ a distinct candidate
    $c_i$ with quality $1$ (cf. \Cref{tab:welfare_price_tJR_vs_vJR}). Then, threshold-JR forces the selection of the committee of the distinct candidates, giving only a welfare
    approximation of $\frac{k}{(1-\varepsilon)k^2}$, which tends to
    $\frac{1}{k}$ as $\varepsilon \to 0$.
\end{example}

In contrast, we can show that a stronger bound is achievable for value-JR, depending on the minimum quality.

\begin{proposition}\label{prop:minRelWelfareBound}
    Assume that $k\geq 4$. Given an instance with minimum quality $h^-$, a
    value-JR committee $W$ with
    $$
    u(W) \geq \frac{h^-}{8\sqrt{k}} \cdot \mathrm{OPT}
    $$
    can be computed in polynomial time.
\end{proposition}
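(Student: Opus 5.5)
The plan is to adapt the welfare argument of \citet{DBLP:journals/ai/LacknerS20} for JR. A polynomial-time greedy procedure guarantees value-JR, and we reserve roughly $\sqrt{k}$ seats for welfare-heavy candidates. Since value-JR is coNP-hard to verify (\Cref{thm:value-JR_NP_Verification}), we will not check it directly. Instead we certify it through threshold-JR, which implies value-JR by \Cref{cor:tJR_implies_vJR}.

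\emph{Step 1: a greedy threshold-JR procedure.} Starting from a (possibly nonempty) set $W_0$, we repeatedly pick an unselected candidate $c$ of \emph{highest} quality for which the set $U_c$ of supporters not yet covered has $|U_c| \ge \frac{n}{k}$. Here a voter is \emph{covered} once she approves a selected candidate of quality at least that of any later pick. We add $c$ and mark $U_c$ as covered. Because picks are made in order of decreasing quality, a covered voter always approves a selected candidate of quality $\ge h_{c'}$ for every later candidate $c'$. When the procedure stops, every $1$-large $S$ with $\appcut{S}^{\ge\tau}\neq\emptyset$ must contain a covered voter at level $\tau$, so the result is threshold-JR. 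The groups $U_c$ are pairwise disjoint and have size $\ge n/k$, so the number $g$ of greedy picks satisfies $g \le k - \lfloor |W_0| \cdot (\text{voters already covered by } W_0)/(n/k)\rfloor$. In any case $g \le k$, and the greedy picks alone give welfare at least $g\,h^-\frac{n}{k}$.

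\emph{Step 2: two candidate committees, take the better.} Let $T$ be the $\lfloor\sqrt{k}/2\rfloor$ candidates of largest individual welfare $u(\{c\})$. Since $\mathrm{OPT}$ is the sum of the $k$ largest values $u(\{c\})$, we get $u(T)\ge \frac{\lfloor\sqrt{k}/2\rfloor}{k}\mathrm{OPT} \ge \frac{\mathrm{OPT}}{4\sqrt{k}}$ for $k\ge4$. Moreover, every other candidate has $u(\{c\}) \le u_{\min}(T)$, so
\[
\mathrm{OPT} \le u(T) + k\,u_{\min}(T).
\]
\begin{itemize}
\item \emph{Committee (a).} Pre-add $T$ and run Step 1. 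If the greedy phase fits into the remaining $k-|T|$ seats, we fill up arbitrarily. The result is threshold-JR, hence value-JR, and has welfare $\ge u(T)\ge \mathrm{OPT}/(4\sqrt{k})$.
\item \emph{Committee (b).} Otherwise we run Step 1 from scratch and complete with the best remaining welfare candidates.
\end{itemize}

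\emph{Step 3: bounding the overflow case.} In the overflow case the greedy needs more than $k-\sqrt{k}/2 \ge k/2$ picks, which are covered by disjoint groups. This yields welfare $\ge h^- n/2$ for committee (b). It remains to show $\mathrm{OPT} \le 4\sqrt{k}\,n$, or to handle the complementary subcase differently. We will use $\mathrm{OPT}\le u(T)+k\,u_{\min}(T)$ and split on whether $u_{\min}(T)\le n/\sqrt{k}$.
\begin{itemize}
\item If $u_{\min}(T)\le n/\sqrt{k}$, then $k\,u_{\min}(T)\le n\sqrt{k}$. Also $u(T)\le |T|\cdot n \le n\sqrt{k}$ because every candidate has welfare at most $n$. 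Hence $\mathrm{OPT}\le 2n\sqrt{k}$, giving ratio $\ge h^-/(4\sqrt{k})$.
\item If every candidate in $T$ has welfare $> n/\sqrt{k}$, then the members of $T$ together cover many voters at high quality levels. We intend to argue that these voters save greedy picks: each $T$-candidate $c$ covers its supporters for all thresholds $\le h_c$. We would then modify Step 1 to select $T$-candidates first, interleaved at their own quality levels, so that seats for $T$ are "paid" by the covered mass.
\end{itemize}

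The main obstacle is this last subcase: making the seat accounting rigorous when pre-added candidates of \emph{low} quality fail to cover voters for higher thresholds. If direct charging fails, the fallback is to restrict $T$ to candidates whose supporter sets have size $\ge n/\sqrt{k}$. We then let the greedy of Step 1 consider $T$-candidates with priority, so that each $T$-pick covers $\ge n/k$ fresh voters or is itself a legal greedy step. The constant $8$ in the statement leaves room for losing a further factor $2$ here.
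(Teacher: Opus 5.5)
\textbf{There is a genuine gap, and it is in your first design decision: certifying value-JR through threshold-JR.} This does not just make your last subcase hard; it makes it impossible. Take the paper's \Cref{ex:welfare_price_tJR_vs_vJR}: $k=n$, each voter $i$ has a private quality-$1$ candidate $c_i$, and $k$ further candidates of quality $1-\varepsilon$ are approved by everybody. Every singleton $\{i\}$ is $1$-large and commonly approves a quality-$1$ candidate, so threshold-JR at $\tau=1$ forces every $c_i$. Hence $\{c_1,\dots,c_n\}$ is the \emph{unique} threshold-JR committee. Its welfare is $k$, while $\mathrm{OPT}=(1-\varepsilon)k^2$, so its ratio is $\frac{1}{(1-\varepsilon)k}$. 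This is below the required $\frac{h^-}{8\sqrt{k}}=\frac{1-\varepsilon}{8\sqrt{k}}$ whenever $\sqrt{k}>8/(1-\varepsilon)^2$ (e.g.\ $k=n=100$, $\varepsilon=0.01$). The instance lands exactly in your open subcase: $T$ consists of unanimous candidates of welfare $(1-\varepsilon)n>n/\sqrt{k}$, and the greedy started from $T$ overflows because it must still add every $c_i$. Any algorithm built on your Step~1 outputs a threshold-JR committee, so no interleaving, priority rule, or restriction of $T$ can rescue this case.

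\textbf{The missing idea.} Hardness of verification only rules out checking value-JR for an \emph{arbitrary} committee. Your algorithm may instead enforce a polynomially checkable sufficient condition that is strictly weaker than threshold-JR. Call $d\notin W$ blocking if at least $\frac{n}{k}$ of its supporters have $u_i(W)<h_d$, and repeatedly add a blocking candidate of maximum quality. When none is left, $W$ is value-JR: a $1$-large $S$ whose best commonly approved candidate $c$ is unselected contains a voter with $u_i(W)\ge h_c$. The served groups are disjoint exactly as in your Step~1, so at most $k$ candidates are added from any starting set.

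The crucial gain is this. Let $X$ be the $r$ candidates of largest welfare $h_c|V[c]|$. Every voter served while running this greedy from $X$ satisfies $u_i(X)\le u_i(W)<h_d\le 1$. So if more than $k-r$ candidates are added, the served set $U$ has $|V\setminus U|<\frac{rn}{k}$, and
\[
\mathrm{OPT}\le \frac{k}{r}\,u(X)\le \frac{k}{r}\Bigl(|U|+|V\setminus U|\cdot r\Bigr)\le \frac{k}{r}\Bigl(n+\frac{r^2n}{k}\Bigr).
\]
For $r=\lfloor\sqrt{k}\rfloor$ this is at most $4n\sqrt{k}$. That is exactly the bound your Step~3 lacks, and it needs no case split on $u_{\min}(T)$. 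Under your threshold rule, a served voter is only guaranteed to approve nothing in $W^{\ge h_c}$. She may still draw utility up to $|X|$ from low-quality members of $X$; in the example, every served voter has $u_i(T)=(1-\varepsilon)|T|$.

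\textbf{A smaller, fixable issue.} You infer that committee (b), which runs the greedy from scratch, makes more than $k/2$ picks because the run from $T$ overflowed. Greedy runs from different starting sets are not comparable in this way. Split on the run from scratch instead:
\begin{itemize}
\item If it adds at least $\frac{k}{2}$ candidates, the disjoint served groups give welfare at least $h^-\frac{n}{2}$. Against $\mathrm{OPT}\le 4n\sqrt{k}$, this is ratio $\frac{h^-}{8\sqrt{k}}$.
\item Otherwise, completing with the highest-welfare remaining candidates puts the top $\lceil k/2\rceil$ candidates by welfare into the committee, which already yields $\frac{1}{2}\mathrm{OPT}$.
\end{itemize}
With these two changes, your two-committee structure becomes essentially the paper's proof.
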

\begin{proof}
    For candidate $c\in C$, define their \emph{value} as
    $v_c\coloneqq h_c\cdot |V[c]|$. We first define the following greedy-JR
    procedure, which, given $X\subseteq C$, computes a committee
    $W=\greedyJR(X)$ as follows: We initialize $W=X$. We call a candidate
    $d\in C \setminus W$ \emph{blocking} if there exists a set
    $S\subseteq V[d]$ of \emph{unhappy} supporters with
    $|S|\geq \frac{n}{k}$ and $u_i(W)< h_d$ for all $i\in S$. We say that
    $S$ is \emph{served} by adding $d$. As long as such a candidate exists,
    choose a blocking $d\in C\setminus W$ with maximum $h_d$ and add $d$ to
    $W$. If no such candidate exists, we return $W$.
    It is easy to see that for any $X\subseteq C$, the output
    $\greedyJR(X)$ satisfies value-JR. Moreover, it holds that
    $|\greedyJR(\emptyset)|\leq k$, since the sets of unhappy supporters for
    each added blocking candidate are disjoint, as we choose the blocking
    candidates with maximum quality.
    Given an instance with minimum quality $h^-$, we compute the committee
    $W$ as follows.
    \begin{enumerate}
        \item Let $r=\lfloor \sqrt{k} \rfloor$ and let $X\subseteq C$ be the
        set of the $r$ candidates of highest value, ties broken arbitrarily.
        \item If $|\greedyJR(X)|\leq k$, then return a completion of $\greedyJR(X)$.
        \item Otherwise, compute $W'=\greedyJR(\emptyset)$ and return the
        completion of $W'$ to size $k$ with the remaining candidates of
        maximum value.
    \end{enumerate}
    It is clear that the returned committee has size at most $k$ and
    satisfies value-JR. It remains to show the welfare approximation.
    First, assume that we return $\greedyJR(X)$ in step 2. Since the
    welfare of any committee is at most the sum of the $k$ highest values
    and $u(X)$ is the sum of the $r$ highest values, we have
    $u(X)\geq \frac{r}{k}\cdot\mathrm{OPT}$. Then, we are done, since
    $$
    u(\greedyJR(X))\geq u(X)\geq \frac{r}{k}  \cdot \mathrm{OPT}
    = \frac{\lfloor \sqrt{k} \rfloor}{k} \cdot \mathrm{OPT}
    \geq \frac{h^-}{2\sqrt{k}} \cdot \mathrm{OPT},
    $$
    where the last inequality holds since $h^-\leq 1$ and $k\geq 4$ implies
    $\lfloor \sqrt{k} \rfloor \geq \frac{\sqrt{k}}{2}$.
    Thus, it remains to consider the case that $|\greedyJR(X)|> k$. If this
    is the case, we can derive the following useful bound on
    $\mathrm{OPT}$: Let $U\subseteq V$ be the set of unhappy voters that
    were served by adding the $|\greedyJR(X)| - |X| > k -r$ candidates
    during the run of greedy-JR with initial committee $X$. We have that
    $|U|\geq (|\greedyJR(X)| - |X|) \frac{n}{k} > (k -r) \frac{n}{k}$ and
    hence $|V\setminus U| \leq r \frac{n}{k}$.
    Observe that $u_i(X)<1$ for all $i\in U$, as they otherwise cannot be
    unhappy. For $i \in V\setminus U$, we have
    $u_i(X)\leq h(X) \leq |X| \leq r$.
    It follows with $r^2 \leq k$ that
    \begin{equation}\label{valueJRWelfare:eq:OPT_Bound}
        \mathrm{OPT} \leq \frac{k}{r} u(X)
        \leq \frac{k}{r}\Bigl(|U| + |V\setminus U| \cdot r\Bigr)
        \leq \frac{k}{r}\Bigl(n +  \frac{r^2n}{k}\Bigr)
        \leq \frac{k}{r} \cdot 2n \leq \frac{2k}{\sqrt{k}} \cdot 2n
        = 4n\sqrt{k}.
    \end{equation}
    Finally, we bound the welfare of the committee $W$ obtained by
    $\greedyJR(\emptyset)$ completed to size $k$ with the remaining
    candidates of maximum value.
    First, assume that $|\greedyJR(\emptyset)|\geq \frac{k}{2}$. We have
    $u(W)\geq u(\greedyJR(\emptyset)) \geq h^- \cdot
    |\greedyJR(\emptyset)| \cdot \frac{n}{k} \geq \frac{h^-\cdot n}{2}$.
    Thus, we have with \Cref{valueJRWelfare:eq:OPT_Bound} that
    $$
    \frac{u(W)}{\mathrm{OPT}} \geq \frac{h^-\cdot n}{2 \cdot 4n\sqrt{k}}
    = \frac{h^-}{8\sqrt{k}}.
    $$
    Finally, assume that $|\greedyJR(\emptyset)|<\frac{k}{2}$. As we
    complete $\greedyJR(\emptyset)$ to size $k$ with the at least
    $\lceil \frac{k}{2} \rceil$ remaining candidates of maximum value, we
    have that $\frac{u(W)}{\mathrm{OPT}}\geq \frac{1}{2}$: for
    $\ell \in [\lceil \frac{k}{2} \rceil]$, the candidate in $W$ with the
    $\ell$-th highest value has at least the value of the $\ell$-th highest
    value candidate from any welfare-optimal committee.
\end{proof}

\end{document}